  \providecommand\BibTeX{{%
    \normalfont B\kern-0.5em{\scshape i\kern-0.25em b}\kern-0.8em\TeX}}}
\newcommand{\algmargin}{\the\ALG@thistlm}
\renewcommand\paragraph{\@startsection{paragraph}{4}{\parindent}%
  {2pt}
  {-\parindent}
  {\ACM@NRadjust{\@parfont\@adddotafter}}}
\newlength{\whilewidth}
\algnewcommand{\parState}[1]{\State%
	\parbox[t]{\dimexpr\linewidth-\algmargin}{\strut #1\strut}}
\newtheorem{thm}{Theorem}[section]
\newtheorem{lem}[thm]{Lemma}
\newtheorem{conj}[thm]{Conjecture}
\newtheorem{obs}[thm]{Observation}
\newtheorem{dfn}[thm]{Definition}
\newtheorem{claim}[thm]{Claim}
\colorlet{blue}{black}
\patchcmd{\hyper@makecurrent}{%
    \ifx\Hy@param\Hy@chapterstring
        \let\Hy@param\Hy@chapapp
    \fi
}{%
    \iftoggle{inappendix}{%
        \@checkappendixparam{chapter}%
        \@checkappendixparam{section}%
        \@checkappendixparam{subsection}%
        \@checkappendixparam{subsubsection}%
        \@checkappendixparam{paragraph}%
        \@checkappendixparam{subparagraph}%
    }{}%
}{}{\errmessage{failed to patch}}
\newcommand*{\@checkappendixparam}[1]{%
    \def\@checkappendixparamtmp{#1}%
    \ifx\Hy@param\@checkappendixparamtmp
        \let\Hy@param\Hy@appendixstring
    \fi
}
\apptocmd{\appendix}{\toggletrue{inappendix}}{}{\errmessage{failed to patch}}
\apptocmd{\subappendices}{\toggletrue{inappendix}}{}{\errmessage{failed to patch}}
\DeclareMathOperator*{\argmax}{arg\,max}
\DeclareMathOperator*{\argmin}{arg\,min}
\newcommand{\defeq}{\vcentcolon=}
\newcommand{\ROROmin}{\texttt{RORO-min}\xspace}
\newcommand{\ROROmax}{\texttt{RORO-max}\xspace}
\newcommand{\RORO}{\texttt{RORO}\xspace}
\newcommand{\ROAdv}{\texttt{RO-Advice}\xspace}
\newcommand{\ROAdvmin}{\texttt{RO-Advice-min}\xspace}
\newcommand{\ROAdvmax}{\texttt{RO-Advice-max}\xspace}
\newcommand{\ALG}{\texttt{ALG}\xspace}
\newcommand{\ADV}{\texttt{ADV}\xspace}
\newcommand{\OPT}{\texttt{OPT}\xspace}
\newcommand{\PrOb}{\texttt{OCS}\xspace}
\newcommand{\PrObmin}{\texttt{OCS-min}\xspace}
\newcommand{\PrObmax}{\texttt{OCS-max}\xspace}
\newcommand{\OCS}{\texttt{OCS}\xspace}
\newcommand{\OCSmin}{\texttt{OCS-min}\xspace}
\newcommand{\OCSmax}{\texttt{OCS-max}\xspace}
\newcommand{\OPR}{\texttt{OPR}\xspace}
\newcommand{\OPRmin}{\texttt{OPR-min}\xspace}
\newcommand{\OWT}{\texttt{OWT}\xspace}
\newcommand{\CFC}{\texttt{CFC}\xspace}
\begin{document}

\title[Online Conversion with Switching Costs]{Online Conversion with Switching Costs:\\ \textit{Robust and Learning-augmented Algorithms}}%

\author{Adam Lechowicz}
\affiliation{%
  \institution{University of Massachusetts Amherst}
  \country{USA}
}
\email{alechowicz@cs.umass.edu}

\author{Nicolas Christianson}
\affiliation{%
  \institution{California Institute of Technology}
  \country{USA}
}
\email{nchristianson@caltech.edu}

\author{Bo Sun}
\affiliation{%
  \institution{University of Waterloo}
  \country{Canada}
}
\email{bo.sun@uwaterloo.ca}

\author{Noman Bashir}
\affiliation{%
  \institution{Massachusetts Institute of Technology}
  \country{USA}
}
\email{nbashir@mit.edu}

\author{Mohammad Hajiesmaili}
\affiliation{%
  \institution{University of Massachusetts Amherst}
  \country{USA}
}
\email{hajiesmaili@cs.umass.edu}

\author{Adam Wierman}
\affiliation{%
  \institution{California Institute of Technology}
  \country{USA}
}
\email{adamw@caltech.edu}

\author{Prashant Shenoy}
\affiliation{%
  \institution{University of Massachusetts Amherst}
  \country{USA}
}
\email{shenoy@cs.umass.edu}

\renewcommand{\shortauthors}{Lechowicz et al.}

\begin{abstract}

We introduce and study online conversion with switching costs, a family of online problems that capture emerging problems at the intersection of energy and sustainability.  In this problem, an online player attempts to purchase (alternatively, sell) fractional shares of an asset during a fixed time horizon with length $T$.  At each time step, a cost function (alternatively, price function) is revealed, and the player must irrevocably decide an amount of asset to convert.  The player also incurs a \textit{switching cost} whenever their decision changes in consecutive time steps, i.e., when they increase or decrease their purchasing amount.  
We introduce competitive (robust) threshold-based algorithms for both the minimization and maximization variants of this problem, and show they are optimal among deterministic online algorithms.  We then propose learning-augmented algorithms that take advantage of untrusted black-box advice (such as predictions from a machine learning model) to achieve significantly better average-case performance without sacrificing worst-case competitive guarantees.  Finally, we empirically evaluate our proposed algorithms using a carbon-aware EV charging case study, showing that our algorithms substantially improve on baseline methods for this problem.

\end{abstract}

\maketitle

\section{Introduction}
\label{sec:intro}

In classic online conversion, an online player buys/sells shares of an item over a fixed time horizon where future prices are unknown, and attempts to minimize their cost or maximize their profit. This paper introduces and studies
\textit{online conversion with switching costs} (\OCS), a novel class of online problems motivated by emerging control problems in the design of sustainable systems.
We consider both minimization (\OCSmin) and maximization (\OCSmax) variants of the problem.  
In \OCSmin, an online player aims to purchase one item over a sequence of time-varying cost functions and decides the \textit{fractional} amount of item to purchase in each round. %
The player must purchase the entire item before a deadline $T$, and they incur a \textit{movement cost} whenever their decision changes, i.e., whenever they purchase different amounts of the item in consecutive time steps.  From the player's perspective, the goal is to minimize their total cost, including the total purchasing cost and any movement cost incurred over the time horizon.  In \OCSmax, the setting is almost the same, except the player sells an item fractionally according to time-varying price functions, so the goal is to maximize their total profit, and any movement costs are subtracted from the revenue.  In both settings, the cost/price functions are revealed one by one in an online manner, and the player makes an irrevocable decision at each time step without the knowledge of future cost/price functions.

Our motivation behind introducing \OCS is an emerging class of \textit{carbon-aware problems} such as carbon-aware electric vehicle (EV) charging~\cite{Cheng:22} and carbon-aware compute shifting~\cite{radovanovic2022carbon,acun2022holistic,bashir2021enabling,Wiesner:21, Hanafy:23, Hanafy:23:CarbonScaler}, which have attracted significant attention in recent years.  A common thread is the goal of reducing carbon emissions by temporally shifting flexible workloads to take advantage of times when low-carbon forms of electricity generation are available.

\OCS builds on a long line of work in the literature of related online algorithms, which can roughly be classified into two problem types: \textit{online search problems}, which feature deadline constraints but do not consider a penalty for switching decisions~\cite{ElYaniv:01, Lorenz:08, Mohr:14, SunZeynali:20, SunLee:21}, and \textit{online metric problems}, which feature switching costs but do not consider a long-term deadline constraint~\cite{Borodin:92, Koutsoupias:09, chenSmoothedOnlineConvex2018a, bubeckChasingNestedConvex2019, bubeckMetricalTaskSystems2021, bubeckRandomizedServerConjecture2022, Christianson:22, Rutten:23, Christianson:23MTS}. %
We briefly review the most relevant problems to our work here. %
In the online search literature, \OCS is most closely aligned with the \textit{one-way trading problem} (\OWT) introduced by~\citet{ElYaniv:01}, in which an online player sells fractional shares of an asset over a sequence of time-varying price functions to maximize their profit.  In the online metric algorithms literature, \OCS is also closely aligned with the problem of \textit{convex function chasing} (\CFC) introduced by~\citet{FriedmanLinial:93}, where an online player makes online decisions $x_t$ in a normed vector space $(X, \lVert \cdot \rVert)$ over a sequence of time-varying cost functions in order to minimize their total cost, which includes the value of the cost function plus a penalty for switching decisions. 

Despite extensive prior literature on the above two categories of problems, there is limited work that includes both long-term constraints (as in \OWT) and switching costs (as in \CFC). To our knowledge, the only prior work is the online pause and resume (\OPR) problem~\cite{Lechowicz:23}, which combines switching costs and a long-term deadline constraint. 
Specifically, \OPR is a generalization of $k$-search, where the online player's overarching objective is to accept the $k$ lowest prices over a time-varying sequence, and the decision space is \textit{binary} (the player either accepts or rejects a price). Similarly to \OCS, \OPR adds a penalty for switching the decision over consecutive slots. The binary design space in \OPR makes it natural to extend the prior single-threshold-based algorithms for $k$-search~\cite{Lorenz:08}. Therefore, the authors in~\cite{Lechowicz:23} develop a \textit{double threshold} algorithm for \OPR and show that the proposed algorithms can achieve the best competitive ratios possible.  
In \OCS, however, the intuitive idea of the double threshold design fails because the design space of the algorithm is continuous. Therefore, this work aims to answer the following question:

\begin{center}
\textit{ Is it possible to design competitive algorithms for \OCS that simultaneously balance \\ switching costs and deadline constraints in continuous decision spaces?}
\end{center}

Although the previous literature in both \OWT and \CFC primarily focuses on providing algorithms with near-optimal adversarial competitive guarantees, moving beyond worst-case analysis has received considerable attention in recent years.  In particular, because algorithms designed to optimize the worst case may be overly cautious or pessimistic, a line of work focused on using machine-learned advice has emerged.  These \textit{learning-augmented algorithms}~\cite{Lykouris:18, Purohit:18} have, in many settings, been able to take advantage of untrusted machine-learned advice to improve the average-case performance (\textit{consistency}) while still maintaining strong adversarial guarantees when the advice is bad (\textit{robustness}). Recent works proposed learning-augmented algorithms for both \OWT~\cite{SunLee:21} and \CFC~\cite{Christianson:22, Christianson:23MTS}; however,
incorporating machine-learned predictions into algorithm design for \OCS is uniquely challenging since even the choice of advice model is not clear for \OCS. The prior literature for online algorithms with advice for online search and metrics problems uses two different advice models. 
In \citet{SunLee:21}, based on an advice model for predicting an \textit{input} parameter (i.e., the \textit{maximum price}), the authors present learning-augmented algorithms for the one-way trading problem that achieves the optimal consistency-robustness trade-off.
For metrics problems and, specifically, convex function chasing, \citet{Christianson:23MTS} give a learning-augmented algorithm that uses black-box advice of the optimal \textit{decision} at each time step and achieves an optimal consistency-robustness trade-off. Given these challenges, a natural open question is: 

\begin{center}
\textit{Can predictions be effectively integrated into an online algorithm for \OCS to provably improve the average case performance of competitive algorithms while maintaining their worst-case guarantees?} 
\end{center}

\subsection{Contributions and Techniques} 
We present online algorithms for both variants of \OCS and show that they achieve the best possible competitive ratios. We then extend our algorithms to the learning-augmented setting, significantly improving practical performance when predictions are accurate while maintaining competitive guarantees. Our technical contributions are summarized in detail below.

\paragraph{Competitive results} \ \
The core algorithmic idea is based on a novel optimization-based framework called ``Ramp-On Ramp-Off'' (\RORO), which we instantiate to develop competitive algorithms for \OCSmin and \OCSmax.  \RORO extends the framework of online threshold-based algorithms (\texttt{OTA}), a preeminent design paradigm for related online search problems~\cite{ElYaniv:01, SunLee:21,Lechowicz:23,Lorenz:08}.
We give upper bounds on the competitive ratio of \RORO in the minimization and maximization settings as a function of the problem's parameters (see Theorems~\ref{thm:roromin} and~\ref{thm:roromax}).  Furthermore, we show lower bounds for the competitive ratio of any deterministic online algorithm, which imply that \RORO is optimal for this problem (see Theorems~\ref{thm:lowerboundmin} and~\ref{thm:lowerboundmax}).  

To achieve the above competitive results for \OCS, the \RORO framework has to tackle substantially different technical challenges than prior work.  
Existing work leveraging the \texttt{OTA} framework either does not consider the presence of switching costs~\cite{ElYaniv:01, SunLee:21, SunZeynali:20}, or only considers a binary decision space~\cite{Lechowicz:23}.  For the online pause and resume  (\OPR) problem, authors in~\cite{Lechowicz:23} show a ``double threshold'' algorithm achieving the optimal competitive ratio.
However, the double threshold technique cannot be generalized to $\OCS$. That is,  
each of the two defined threshold families corresponds to a single case for the previous decision made by the algorithm -- making the decision space continuous in $[0,1]$ breaks this relationship between decisions and the threshold.  
Thus, the primary technical challenge in applying \texttt{OTA} design to \OCS is deriving a technique that can specify continuous decisions while incorporating ``awareness'' of the switching cost.  
The key enabling result for the \RORO framework is a \textit{dynamic threshold} approach (in \autoref{sec:warmup}). Instead of the double threshold idea, the \RORO framework uses a single threshold function complemented by an adaptive mechanism that dynamically adjusts admission criteria based on the previous online decision.  Compared to existing approaches, this framework is more general and can be applied to the continuous $\OCS$ setting. \RORO also recovers the double threshold algorithm for \OPR~\cite{Lechowicz:23} as a special case.

\paragraph{Learning-augmented results} \ \
To go beyond worst-case analysis and attain better \textit{average-case} performance for $\OCS$, we then turn our attention to learning-augmented design. 
The existing learning-augmented results for related problems can be roughly split into two advice models. Firstly, the advice model in online search (\OWT and $k$-search) leverages \textit{input predictions}, which, e.g., predicts the ``best price'' of the instance for \OWT. In contrast, prior results for \texttt{CFC} and \texttt{MTS} typically leverage ``black-box''-type advice, where the online player receives a prediction of an \textit{optimal decision} at each time step.
In learning-augmented design for \OCS, we consider both advice models. For the input advice model, we present an impossibility result: our lower bound result proves that ``best price'' predictions cannot achieve $1$-consistency for any value of robustness in \OCS. This result implies that switching costs complicates the problem beyond the expressiveness of simple predictions (see \autoref{thm:bestPriceConsistency}), and that an advice model providing more information is necessary.  %
Leveraging proof techniques from the advice complexity literature~\cite{Bockenhauer:09, Bockenhauer:14, Gupta:13}, we further show that any $1$-consistent algorithm for $\OCS$ must use advice that grows linearly in the sequence length.   This motivates our usage of an \textit{untrusted black-box advice model} which predicts the online decision at each time step.  Using such advice, we present \ROAdv, a meta-algorithm that uses \RORO as a subroutine to achieve $(1+\epsilon)$-consistency and bounded robustness (see Theorems~\ref{thm:advice-consist-robust-min} and~\ref{thm:advice-consist-robust-max}). We expect that our technique of using advice complexity to prove lower bounds for a prediction model will be broadly applicable to other online problems where the choice of a prediction model in learning augmentation is not obvious.

\paragraph{Case study}

We evaluate the performance of the proposed algorithms using a case study on the carbon-aware electric vehicle (EV) charging problem. In this problem, there is an interruptible and deferrable workload in the form of a charging schedule for an EV.  From the user, we receive a requested amount of charge and a deadline of $T$, representing the time the vehicle will be plugged into an adaptive charger.  Such a charger can change its charging rate based on grid constraints or external signals~\cite{Lee:21:ACN}. Since frequent fluctuations in the charging rate are undesirable from a lithium-ion battery health perspective~\cite{Zhang:06}, the operator can define a \textit{switching penalty} to incentivize a ``smooth'' charging rate.  The objective is to minimize the charging carbon footprint while satisfying the charging demand and maintaining a smooth charging rate within the deadline limit.  As renewable sources such as solar and wind make up greater fractions of power grids worldwide, their intermittency contributes to time-varying carbon intensities, which can be volatile~\cite{Lechowicz:23, Maji:22, Maji:22:CC}, so future values are unknown.  This problem could be exactly modeled by \OCS.

In \autoref{sec:eval}, we evaluate the performance of our robust \RORO and learning-augmented \ROAdv algorithms through the EV charging scheduling case study. We utilize real-world data describing EV charging sessions from the Adaptive Charging Network data set~\cite{Lee:19:ACNData}, combined with carbon intensity traces for the California ISO~\cite{electricity-map}, and simulated local solar generation computed using the National Solar Radiation Data Base~\cite{Sengupta:18:NSRDB}. Our experiments simulate different strategies to charge an EV to the requested capacity before a deadline while dealing with uncertain future carbon intensities. We show that our proposed robust algorithms outperform baseline methods and adapted algorithms for related problems such as \OWT.  
Furthermore, we evaluate our learning-augmented $\ROAdv$ algorithm when advice is provided by an \textit{``off-the-shelf'' ML model} providing time-series predictions of carbon intensity~\cite{Maji:22:CC}.  We show that the imperfect predictions provided by this model are powerful enough for $\ROAdv$ to outperform existing algorithms significantly.

Lastly, we note that $\OCS$ also captures other interesting applications in the sustainability space. 
One such example is the shifting of carbon-aware workloads~\cite{radovanovic2022carbon,acun2022holistic,bashir2021enabling,Wiesner:21,Sukprasert:23}, where time-varying resources are allocated to an interruptible and deferrable workload so that total carbon emissions from the workload are minimized.  Recent work~\cite{Hanafy:23:CarbonScaler, Souza:23} has additionally considered scalability (i.e., when training an ML model, a parallelizable training job can take more resources to ``run harder'' when low-carbon energy is available), which is analogous to the continuous decision space we consider in \OCS. Progress in this field builds on a long line of work considering the temporal shifting of workloads for more sustainable data centers, e.g.,~\cite{gupta2019combining,liu2011greening,liu2012renewable,lin2012dynamic}.
Carbon awareness has also been explored in other contexts, such as HVAC in residential buildings, which considers, e.g., scheduling heating and cooling cycles to coincide with the availability of low carbon energy~\cite{Bovornkeeratiroj:23}. For many such applications, the \OCS formulation is applicable.

\section{Problem Formulation and Preliminaries} %
\label{sec:prob}
We first introduce and formulate the \OCS problem, then provide brief background on design paradigms which will be useful in the design of our proposed algorithms. \autoref{tab:notations} (in the Appendix) summarizes core \OCS notations for convenience.

\subsection{Problem Formulation} \label{sec:probform}

We formulate two variants of the Online Conversion (\OCS) problem, {\color{blue} primarily considering the minimization variant (\OCSmin) in the main body, and deferring the maximization variant (\OCSmax) to \autoref{apx:ocsmax}}. \footnote{We use \OCS whenever the context applies to both minimization (\PrObmin) and maximization (\PrObmax) variants of the problem; otherwise, we refer to the specific variant. The same notation style extends to our proposed algorithms (\RORO and \ROAdv) throughout the paper.}
In \OCSmin, an online player must buy an asset with total size %
$C$, while minimizing their total cost.  Without loss of generality and for notational simplicity, we assume that $C = 1$.
At each time step $t \in [T]$, a \textit{convex cost function} $g_t(\cdot)$ %
arrives online. 
The player can buy $x_t \in [0, d_t]$ amount of the asset at a cost of $g_t( x_t )$.  {\color{blue} Following convention, each cost function satisfies $g_t(0) = 0$; i.e., if the player purchases nothing, they pay no cost, and $g_t(x_t) \geq 0$ for any valid $x_t$.}  Furthermore, $d_t \leq C$ is a \textit{rate constraint}, which limits the amount of purchases at a given time step $t$.\footnote{For motivating applications such as carbon-aware EV charging, this rate constraint primarily models the case where the requested capacity cannot be met in a single time step (e.g., an EV requests 20 kWh of charge from a 10kW source).}  \ \
Whenever the player's decision changes in consecutive time steps, they incur a \textit{switching cost}, which is proportional to the ``distance'' between $x_t$ and $x_{t-1}$.  For one time step, this is formalized as $\beta \lvert x_t - x_{t-1} \rvert$.  For modeling purposes, we assume $x_0 = 0$ and $x^{T+1} = 0$, which jointly imply that the player must incur \textit{some} switching costs to ``turn on'' and ``turn off'', respectively.  Note that $\OPT$ must incur at least $\frac{2}{T} \beta$ in switching cost\footnote{This follows by observing that the sequence consists of $T$ cost functions.  If each $g_t$ is identical, $\OPT$ can minimize their switching cost by choosing to purchase $1/T$ of the asset at every possible time step.  $\OPT$ then incurs one switching cost to turn on ($\beta \lvert 1/T - x_{0} \rvert = \beta/T$), and one switching cost to turn off ($\beta \lvert x_{t+1} - 1/T \rvert = \beta/T$), for a total of $\frac{2}{T} \beta$.}
and that the maximum switching cost is $2\beta$. The switching cost parameter $\beta$ can be considered a linear coefficient that charges the online player proportionally to the movement between consecutive time steps. 

For simplicity in modeling, this formulation assumes that the cost to ``turn on'' is equivalent to the cost to ``turn off'', and that the magnitude of $\beta$ is constant across all time steps.  
Our formulation and algorithms can be extended to cases where the switching cost is time-varying or asymmetric by setting $\beta$ to be an upper bound on the actual switching cost (e.g $\beta \lvert x_t - x_{t-1} \rvert \geq s_t (x_t, x_{t-1})$ for all $t \in [T]$, where $s_t(x, x')$ is the time-varying or asymmetric switching cost).

Importantly, \OCS requires the online player to buy/sell the entire demand $C$ before the end of the sequence (the ``deadline'').  If the player has bought $w^{(t)}$ fraction of the utilization at time $t$, a \textit{compulsory trade} begins at time step $t$ as soon as $\sum_{\tau=t+1}^T d_{\tau} < 1 - w^{(t)}$ (i.e. when the remaining purchase opportunities in future time steps are not enough to fulfill the demand).  During this compulsory trade, a cost-agnostic algorithm takes over and trades maximally at each time step to meet the constraint -- intuitively, this algorithm comes with no competitive guarantees.  We assume that the rate constraints $\{ d_t \}_{t \in [T]}$ are known in advance, so the player can always identify when the compulsory trade should begin.\footnote{We note that if the rate constraints are not known in advance (i.e., revealed online), the formulation still holds as long as the player receives a signal near the end of the sequence indicating that they must begin the compulsory trade.} 

To ensure that the problem is non-trivial, we generally assume that the compulsory trade does not make up a large fraction of the sequence.  Concretely, the earliest time step $j$ at which the compulsory trade begins (i.e. the first time step such that $\sum_{\tau=j+1}^T d_{\tau} < 1 $) is $j \gg 1$, which implies that $T$ and the rate constraints $\{ d_t \}_{t \in [T]}$ are sized appropriately for the demand.  This assumption is reasonable in practice, since if $T$ is small or the rate constraints are small, the rate constraints will be binding or near-binding in each time step -- i.e., the task's \textit{temporal flexibility} will be low, so that even an omnipotent solution will be unable to take much advantage of ramping up and down to improve its performance.
In summary, the offline version of \PrObmin can be formalized as follows:
\begin{align}
\min_{\{x_t\}_{ t \in [T] }} & \underbrace{ \ \sum_{t=1}^T g_t( x_t ) }_{\text{Purchasing cost}} + \underbrace{ \sum_{t=1}^{T+1}\beta \lvert x_t - x_{t-1} \rvert }_{\text{Switching cost}}, \ \text{s.t., }  \underbrace{\sum_{t=1}^T x_t = 1,}_{\text{Deadline constraint}} \underbrace{ x_t \in [0, d_t] \ \forall t \in [T]}_{\text{Rate constraint}}. \label{align:objMin}
\end{align}

Our focus is on the online version of \OCS, where the player must make irrevocable decisions $x_t$ at each time step without the knowledge of future inputs.  The most important unknowns are the cost/price functions $g_t(\cdot)$, which are revealed online.  

\paragraph{Assumptions and additional notation}

{\color{blue}
For \OCSmin, we assume that cost functions $\{ g_t( \cdot ) \}_{t \in [T]}$ have a bounded derivative, i.e. $L \leq d g_t / d x_t \leq U$, where $L$ and $U$ are known positive constants.  Furthermore, we assume that all cost functions $g_t(\cdot)$ are \textit{convex} -- this assumption is important as a way to model \textit{diminishing returns}, and is empirically valid for the applications of interest.

The switching cost coefficient $\beta$ is assumed to be known to the player, and is bounded within an interval ($\beta \in (0, \nicefrac{U-L}{2})$). 
If $\beta = 0$, the problem is equivalent to one-way trading, and if $\beta$ is ``too large'' (i.e., $> \nicefrac{U-L}{2}$), we can show that any competitive algorithm should \textit{only} consider the switching cost.
\footnote{As brief justification for the bounds on $\beta$, consider the following solutions for \OCSmin.  
In \OCSmin, a feasible solution can have objective value $L + 2\beta$.  Note that if $\beta > \nicefrac{U-L}{2}$, $L + 2\beta > U$, and we argue that the incurred switching cost is more important than the cost values accepted.
}
}

In applications, the deadline $T$ is generally known in advance, although our formulation and algorithms do not require this to be true.  The most important technicality that changes when $T$ is unknown is the compulsory trading at the end of the period, which is used to ensure that the entire asset is purchased/sold before the deadline.  If $T$ is unknown, the player must receive a signal to indicate that the deadline is close and compulsory trading should begin.

\paragraph{Competitive analysis} 

Our goal is to design an online algorithm that maintains a small \textit{competitive ratio}~\cite{Manasse:88, Borodin:92}, i.e., performs nearly as well as the offline optimal solution.   Given an online algorithm $\ALG$, an offline optimal solver $\OPT$, and a valid input sequence $\mathcal{I}$, we denote the optimal cost by $\OPT(\mathcal{I})$, and $\ALG(\mathcal{I})$ is the cost of the solution obtained by running the online algorithm over this input.  Then if $\Omega$ is the set of all feasible input instances, the \textit{competitive ratio} for a minimization problem is defined as:
$
\textnormal{CR}(\ALG) = \max_{\mathcal{I} \in \Omega} \nicefrac{ \ALG(\mathcal{I}) }{ \OPT(\mathcal{I}) }.
$  
Under these definitions, the competitive ratio is always greater than or equal to one, and a \textit{lower} competitive ratio implies that the online algorithm is guaranteed to be \textit{closer} to the offline optimal solution. 

In the emerging literature on learning-augmented algorithms, the competitive ratio is interpreted via the notions of \textit{consistency} and \textit{robustness}, introduced in~\cite{Lykouris:18, Purohit:18}.  {\color{blue} Formally, let $\ALG(\mathcal{I}, \varepsilon)$ denote the cost of a learning-augmented online algorithm on input sequence $\mathcal{I}$ when provided predictions with error factor $\varepsilon$. Then the notion of consistency describes the competitive ratio of a learning-augmented algorithm when the predictions provided to it are exactly correct, i.e. $\max_{\mathcal{I} \in \Omega} \nicefrac{ \ALG(\mathcal{I}, 0) }{ \OPT(\mathcal{I}) }$.  Conversely, robustness captures the competitive ratio when predictions are adversarially incorrect, i.e. $\max_{\mathcal{I} \in \Omega} \nicefrac{ \ALG(\mathcal{I}, \mathbf{E}) }{ \OPT(\mathcal{I}) }$, where $\mathbf{E}$ is a maximum error factor (or $\infty$). }

\subsection{Background: Threshold-based Algorithms for Online Problems} \label{sec:OTA}

Existing work on related problems such as online pause and resume \cite{Lechowicz:23}, online search \cite{Lee:22,Lorenz:08,ElYaniv:01}, one-way trading \cite{SunLee:21,SunZeynali:20}, and online knapsack \cite{Zhou:08,sun2022online,Yang2021Competitive} often use of a technique known as \textit{threshold-based algorithm design}. Threshold-based algorithms are a family of online algorithms where a thoughtfully-designed \textit{threshold function} is used to guide an online %
decision at each time step.  In broad terms, this threshold establishes a ``minimal acceptable quality'' that an incoming input must meet to be accepted by the algorithm. The threshold values are deliberately chosen to ensure that an algorithm consistently accepting/rejecting inputs based on this (potentially changing) ``standard'' at each step is guaranteed to obtain a certain competitive ratio.

This algorithmic framework has seen success in the related problems mentioned above; many of the threshold-based online algorithms presented in these works are optimal in the sense that the threshold function guarantees a competitive ratio that matches the information-theoretic lower bound for the online problem.  
Threshold-based design has not historically been applied to problems with switching costs such as \OCS.  In \cite{Lechowicz:23}, the authors present the first (to our knowledge) threshold-based algorithm for a problem with switching costs, namely the online pause and resume problem.  Although these results and the general framework do not directly translate to the \OCS setting, we make use of ideas from these existing works.  We briefly detail the most relevant results from prior work before discussing how this framework generalizes to \OCS in \autoref{sec:roro}.

\paragraph{One-way trading.}  
In the canonical one-way trading problem, a sequence of time-varying prices is revealed to an online player one-by-one~\cite{ElYaniv:01}.  The player's objective is to maximize their profit from selling an asset.  When each price $c_t$ arrives, the player must immediately decide a fraction of their asset $x_t \in [0,1]$ to sell at the current price, and they must sell the entire asset by the end of the sequence.  
A generalization of the problem considers price functions $g_t(\cdot)$, which must be \textit{concave}, but are not necessarily linear.  Assuming bounded derivative for the price functions, i.e., $L \le \nicefrac{d g_t}{d x} \le U$, \citet{SunZeynali:20} show a deterministic threshold-based algorithm which achieves the optimal competitive ratio of $1 + W\left( \left( \nicefrac{U}{L} - 1 \right)/e \right)$, where $W(\cdot)$ is the Lambert $W$ function.

\paragraph{Pause and resume (and $k$-search).}
In recent work, \citet{Lechowicz:23} presents the online pause and resume problem (\OPR), which can be summarized as $k$-min/$k$-max search~\cite{Lorenz:08} with an added switching cost.
In both \texttt{OPR} and $k$-search, a sequence of time-varying prices is revealed to an online player one by one.  When each price $c_t$ arrives, the player must immediately decide whether to reject or accept the price (i.e., $x_t \in \{0, 1\}$).  The player's objective in both problems is to find the $k$ lowest (conversely, highest) prices in the sequence to minimize their cost (conversely, maximize their profit), and they must accept $k$ prices before the end of the sequence.  In \OPR, the player additionally pays a switching cost $\beta$ whenever their decision changes in consecutive time steps.  Assuming bounded prices in the range $[L, U]$, \cite{Lechowicz:23} shows deterministic \textit{double threshold} algorithms for \OPR, where each threshold separately corresponds to each of the possible prior decisions $x_{t-1} \in \{0, 1\}$.  These algorithms achieve the optimal competitive ratio for both variants of \OPR.

\section{Competitive Algorithms}
\label{sec:roro}
We first introduce a general algorithmic framework that we term Ramp-On, Ramp-Off (\RORO). 
Then, we demonstrate the application of \RORO in the context of the pause and resume problem for simplicity before applying it to the \OCSmin problem {\color{blue} (and \OCSmax in \autoref{apx:ocsmax-roro}).} %

\subsection{The online Ramp-On, Ramp-Off (\RORO) framework} \label{sec:warmup}

\begin{algorithm}[!t]
    \small
	\caption{Online Ramp-On, Ramp-Off (\RORO) framework}
	\label{alg:roro}
	\begin{algorithmic}[1]
		\State \textbf{input:} ramping-on problem $\textsc{RampOn}(\cdot)$, ramping-off problem $\textsc{RampOff}(\cdot)$,\\ pseudo-cost function $\textsc{PCost}(\cdot)$
        \State \textbf{initialization:} initial decision $x_0 = 0$, initial utilization $w^{(0)} = 0$;
		\While{cost/price function $g_t(\cdot)$ is revealed and $w^{(t-1)} < 1$}
		\State solve the \textbf{(ramping-on problem)} to obtain decision $x_t^+$ and its pseudo cost $r_t^+$, 
        \begin{align}
            {x}_t^+ &= \textsc{RampOn}(g_t( \cdot ), x_{t-1} ),\\
            r_t^+ &= \textsc{PCost}(g_t( \cdot ), x_{t}^+, x_{t-1}).
        \end{align}
        \State solve the \textbf{(ramping-off problem)} to obtain decision $x_t^-$ and its pseudo cost $r_t^-$, 
        \begin{align}
            {x}_t^- &= \textsc{RampOff}(g_t( \cdot ), x_{t-1} ),\\
            r_t^- &= \textsc{PCost}(g_t( \cdot ), x_{t}^-, x_{t-1}).
        \end{align}
        \State \textbf{if } $r^+_t \le r^-_t$ \textbf{ then } set $x_t = x_t^+$ \textbf{ else } set $x_t = x_t^-$;
        \State update the utilization $w^{(t)} = w^{(t-1)} + x_t$;
        \EndWhile
	\end{algorithmic}
\end{algorithm}

This section presents a general online optimization framework called Ramp-On, Ramp-Off (\RORO) (see \autoref{alg:roro}).
In the \RORO framework, whenever an input arrives online, we solve two \textit{pseudo-cost minimization problems}, with a restricted decision space in each.  We denote these problems as the \textbf{ramping-on} and \textbf{ramping-off} problems.  
In this framework, we employ the \textit{pseudo-cost}, a notion inherited from the online search literature~\cite{SunZeynali:20}.  At a high level, the pseudo-cost generalizes the traditional concept of threshold-based algorithms, where, recall, a threshold value/function establishes the minimum/maximum acceptable inputs for the competitive algorithm to accept, given the current state.  Then, at each time step, when input arrives, the pseudo-cost of a particular decision gives the actual cost of the decision at the current time step, \textit{minus} a threshold value term describing the maximum/minimum acceptable input that would justify the decision.
Intuitively, the pseudo-cost serves as a mechanism preventing an algorithm from ``waiting too long'' to accept any prices. For example, in \OCS, naïvely minimizing the cost function $g_t$ at each time step results in decisions $x_t = 0$ for all $t \in [T]$, which incur no cost.  However, this sequence of decisions is undesirable since the final compulsory trade needed to satisfy the deadline constraint may force the player to pay a large cost. Thus, the pseudo-cost ``balances'' between the two extremes of buying immediately and waiting indefinitely by enforcing the acceptance of inputs that are ``good enough''.

The namesakes of the \RORO framework -- the \textit{ramping-on} and \textit{ramping-off} problems -- find the \textit{best decision} in either of two restricted decision spaces that minimize this pseudo-cost.  In the ramping-on problem, the decision space is restricted to online decisions where $x_t$ increases or stays the same, i.e., $ x_t \geq x_{t-1}$.  In the ramping-off problem, the considered decisions are all $\leq x_{t-1}$ (i.e., the online decision decreases or stays the same). The framework then chooses whichever of the two resulting decisions yields a better pseudo-cost.  
In what follows, we discuss these as two distinct problems for instructive value and ease of presentation.

\paragraph{Intuition and a motivating example} To provide intuition for the design of the \RORO framework, we show that the double threshold algorithm shown for online pause and resume in~\cite{Lechowicz:23} is a special case of this framework.
We focus on the minimization version of pause and resume (\OPRmin), summarized in \autoref{sec:OTA}.
At a high level, \OPRmin includes many of the same components as \OCSmin, with a few key differences.  Most notably, the decision space in \OPRmin is binary, i.e., $x_t \in \{0, 1\}$, while decisions in \OCS are continuous ($x_t \in [0, d_t]$).  Furthermore, the online player in \OPRmin is restricted to accept at most $1/k$ of their total demand in a single time step (recall that $k$ prices must be accepted before the deadline), while the online player in \OCS can choose to purchase their entire demand $C$ in a single time step (if rate constraint $d_t$ permits).

{\color{blue} 
In~\cite{Lechowicz:23}, the authors present a \textit{double threshold} algorithm for \OPRmin.  At a high level, the idea behind this algorithm is to change the criteria for accepting a price based on the switching cost that will be incurred with respect to the previous online decision.  Intuitively, when a price arrives and a previously inactive player considers whether to accept it, they must add the extra switching cost of $\beta$ they will incur to ``switch on''.   Conversely, if the player has accepted the previous price, once a new price arrives, they should discount the price by $\beta$, since \textit{not} accepting this new price will force them to ``switch off'' and incur an extra switching cost of $\beta$.

This idea is formalized in the specification of two threshold families $\{u_i\}_{i\in[k]}$ and $\{\ell_i\}_{i\in[k]}$, each representing $k$ ``minimum acceptable prices''.  The algorithm chooses which family to use based on the previous online decision $x_{t-1}$ as described above.  Then if $i-1$ prices have been accepted so far, the $i$th price accepted will be the first price which is at most $\ell_i$ if $x_{t-1} = 0$, or at most $u_i$ if $x_{t-1} = 1$.
}

\smallskip

We will illustrate the special case of \RORO by first considering the \textit{pseudo-cost} in the context of \OPRmin, where inputs are scalar values arriving online, and decisions are binary.  When a price $c_t$ arrives, the algorithm can consider accepting the price $(x_t = 1)$ or rejecting it $(x_t = 0)$.  The pseudo-cost is the actual cost $c_t x_t + \beta \lvert x_t - x_{t-1} \rvert$, minus a threshold value determining the maximum acceptable price.  Then, if $x_t = 1$ and the pseudo-cost $< 0$, we know that the threshold value is \textit{greater than} the actual cost and the price should be accepted. %
In the following, we formalize the above intuition by defining the ramping-on problem $\textsc{RampOn}$, the ramping-off problem $\textsc{RampOff}$, and the pseudo-cost function $\textsc{PCost}$ to obtain a %
\RORO instantiation, which we show to be equivalent to the double threshold algorithm.

\begin{claim}\label{claim:equiv}
The \RORO framework is equivalent to the double threshold algorithm for \OPRmin in~\cite{Lechowicz:23} when the framework is instantiated as follows.
Let $\phi$ be a family of $k$ thresholds defined as: $\phi_i = u_i - \beta = \ell_i + \beta, \forall i\in[k]$, where $\{u_i\}_{i\in[k]}$ and $\{\ell_i\}_{i\in[k]}$ are the double thresholds for \OPRmin in~\cite{Lechowicz:23}.
Then, define the ramping-on problem, ramping-off problem, and the pseudo-cost function as:
\begin{align}
    \emph{\textsc{RampOn}}(c_t, x_{t-1}) &= \argmin_{x_t \in \{0,1\}, \ x_t\ge x_{t-1}} c_t x_t + \beta (x_t - x_{t-1}) - \phi_{i} x_t,\\
    \emph{\textsc{RampOff}}(c_t, x_{t-1}) &= \argmin_{x_t \in \{0,1\}, \ x_t \le x_{t-1}} c_t x_t + \beta (x_{t-1} - x_{t}) - \phi_{i} x_t,\\
    \emph{\textsc{PCost}}(c_t, x_t, x_{t-1}) &= c_t x_t + \beta \lvert x_t - x_{t-1} \rvert - \phi_{i}x_t.
\end{align}
\end{claim}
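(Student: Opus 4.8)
The plan is to argue by induction on $t$ that, under the stated instantiation, \RORO and the double threshold algorithm of~\cite{Lechowicz:23} maintain the \emph{same state} at every step --- namely the previous decision $x_{t-1}\in\{0,1\}$ and the number of prices accepted so far (equivalently, via $w^{(t-1)}$, the index $i$ of the currently active threshold $\phi_i = u_i-\beta = \ell_i+\beta$) --- and consequently make the same decision $x_t$. Since both algorithms start from $x_0 = 0$ with zero accepted prices, it suffices to show the following: given a common state $(x_{t-1}, i)$ and an arriving price $c_t$, the decision produced by the selection rule ($r_t^+ \le r_t^-$) in \autoref{alg:roro} equals the double threshold rule, i.e.\ $x_t = 1$ iff $c_t \le \ell_i$ when $x_{t-1}=0$, and $x_t = 1$ iff $c_t \le u_i$ when $x_{t-1}=1$.

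The heart of the argument is a two-case analysis on $x_{t-1}$, in which I solve the two restricted subproblems in closed form (each is a linear objective over $\{0,1\}$, so this is immediate). When $x_{t-1}=0$: the ramping-off problem is forced to $x_t^- = 0$ with $r_t^- = 0$, while the ramping-on objective equals $(c_t+\beta-\phi_i)x_t = (c_t-\ell_i)x_t$ using $\phi_i = \ell_i+\beta$; hence $x_t^+ = 1$ with $r_t^+ = c_t - \ell_i$ exactly when $c_t \le \ell_i$, and $x_t^+ = 0$ with $r_t^+ = 0$ otherwise. Comparing $r_t^+$ against $r_t^- = 0$ gives $x_t = 1$ precisely when $c_t \le \ell_i$. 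When $x_{t-1}=1$: the ramping-on problem is forced to $x_t^+ = 1$ with $r_t^+ = c_t - \phi_i$; the ramping-off objective equals $\beta + (c_t-\beta-\phi_i)x_t = \beta + (c_t - u_i)x_t$ using $\phi_i = u_i-\beta$, so $x_t^- = 1$ with $r_t^- = c_t-\phi_i$ when $c_t \le u_i$, and $x_t^- = 0$ with $r_t^- = \beta$ otherwise. In the former subcase $r_t^+ = r_t^-$ and the tie-break yields $x_t = 1$; in the latter, $r_t^+ = (c_t - u_i) + \beta > \beta = r_t^-$, so $x_t = 0$. In both cases the outcome matches the threshold rule.

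To close the induction, I would observe that whenever $x_t = 1$ both algorithms advance the accepted-count (hence $i \mapsto i+1$) identically, and whenever $x_t = 0$ both leave it unchanged, so the common-state hypothesis is preserved; therefore the two algorithms produce identical decision sequences, and thus identical total cost, which is the claimed equivalence. I expect the main obstacle --- or at least the only delicate point --- to be the \textsc{PCost} comparison in the $x_{t-1}=1$ case: since the ramping-on subproblem cannot move below $x_{t-1}$, a rejection can only arise from the ramping-off branch, and one must verify that the pseudo-cost correctly "charges" the $\beta$ needed to switch off so that the inequality $r_t^+ \le r_t^-$ flips at exactly $c_t = u_i$. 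The identity $\phi_i = u_i - \beta = \ell_i + \beta$ (equivalently $u_i - \ell_i = 2\beta$, a property of the double thresholds in~\cite{Lechowicz:23}) is exactly what makes both branches land on the intended thresholds. I would also explicitly check the boundary cases --- ties in $r_t^+ \le r_t^-$ and $c_t$ equal to a threshold --- against the "first price which is at most" convention used to define the double threshold algorithm.
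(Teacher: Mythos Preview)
Your proposal is correct and follows essentially the same approach as the paper: a two-case split on $x_{t-1}\in\{0,1\}$, in each case observing that one subproblem is forced to a single point so the decision reduces to the closed-form optimizer of the other, which lands on $\ell_i$ (when $x_{t-1}=0$) or $u_i$ (when $x_{t-1}=1$) via $\phi_i=\ell_i+\beta=u_i-\beta$. Your version is slightly more explicit---you wrap the argument in an induction on the shared state and carry out the $r_t^+$ vs.\ $r_t^-$ comparison directly---whereas the paper shortcuts by noting that the forced subproblem's feasible set is contained in the other's, so the latter always ``dominates'' and $x_t$ equals its optimizer; but the content is the same.
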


We summarize this instantiation in pseudocode form in the appendix, in \autoref{alg:warmup}. We proceed to justify the equivalence of the \RORO instantiation described in Claim~\ref{claim:equiv} against the double threshold algorithm shown for \OPRmin.

If $x_{t-1} = 0$, the ramping-off solution is restricted to $x_t^- = x_{t-1} = 0$, and thus the online solution is dominated by the ramping-on solution, i.e., $x_t = x_t^+$. 
The ramping-on solution is exactly
\begin{align}
    {\small
    x_t^+ =
    \begin{cases}
     1 & \textbf{if}\ c_t + \beta \le \phi_i \\
     0 & \textbf{if}\ c_t + \beta > \phi_i
    \end{cases},}
\end{align}
which is the same as the solution obtained by the double threshold algorithm, i.e., $x_t = 1$ if $c_t \le \phi_i - \beta = \ell_i$; otherwise, $x_t = 0$.
If $x_{t-1} = 1$, the ramping-on is restricted to $x_t^+ = 1 = x_{t-1}$, and thus the online solution is dominated by the ramping-off, i.e., $x_t = x_t^-$.
The ramping-off solution is
\begin{align}
    {\small
    x_t^- =
    \begin{cases}
     1 & \textbf{if}\ c_t - \phi_i \le \beta \\
     0 & \textbf{if}\ c_t - \phi_i > \beta
    \end{cases},}
\end{align}
which is the same as the solution obtained by the double threshold algorithm, i.e., $x_t = 1$ if $c_t \le \phi_i + \beta = u_i$ and otherwise $x_t = 0$.
In what follows, we describe how \RORO solves \OCS.  

\subsection{Solving \OCSmin using the \RORO framework} \label{sec:ocsmin-roro}

We now move to apply the \RORO framework and propose an algorithm for solving \OCSmin.  Prior work~\cite{SunLee:21} has observed that the limiting case of $k$-search (i.e.~$k \to \infty$) can be recast as the one-way trading problem if the formulation is generalized in a few key areas.  Leveraging the \RORO framework, we apply a similar technique to recast the limiting case of \OPRmin as \OCSmin and then generalize the threshold function in Claim~\ref{claim:equiv}.

Recall that in \OPR, the player must accept $k$ prices total, and at the time that they consider accepting the $i$th price, they have already accepted $i-1$ prices previously (where $i \in [k]$).  In the continuous setting, we let $w^{(t)} \in [0, 1]$ denote the \textit{utilization} at time $t$, i.e., $w^{(t)}$ gives the fraction of the total asset which has been purchased up to $t$. Equivalently, in the binary setting, $w^{(t)}$ is defined as $\nicefrac{i-1}{k}$.
In the binary setting of \OPR, the player can only accept a single price at a time, thus satisfying $\nicefrac{1}{k}$ of their task. This is insufficient for \OCS, so we relax this constraint and allow the player to purchase more than $\nicefrac{1}{k}$ amount in a single time step. In \OPR, this is analogous to allowing to purchase more than 1 ``item'' in a single time step. Finally, instead of a single price $c_t \in [L, U]$ arriving online, a convex cost function $g_t(\cdot)$ arrives online and the decision $x_t$ is in $[0, d_t]$.  

With these transformations, we can initialize a \textit{dynamic threshold} $\phi(w)$ for \OCSmin that will inform the design of our \RORO instantiation for \OCSmin.
In the following, we slightly abuse notation and use $w$ instead of $w^{(t)}$, wherever dropping $t$ is appropriate for the context.

\begin{definition}[Threshold function $\phi$ for \OCSmin]\label{dfn:phi-min}
For any utilization $w \in [0,1]$, $\phi$ is defined as:
\begin{align}\label{eq:phi-min}
    \phi(w) = U - \beta + (\nicefrac{U}{\alpha} - U + 2 \beta) \exp(\nicefrac{w}{\alpha}),
\end{align}
where $\alpha$ is the competitive ratio and is defined in \eqref{eq:alpha}.
\end{definition}

This threshold function follows by taking the limit (as $k \to \infty$) of the closed form threshold $\{\phi_i\}_{i \in [k]}$ defined in Claim~\ref{claim:equiv}, and observing that $w = \nicefrac{i-1}{k}$.  We note that $\phi(1) - \beta = L$, which corresponds to the case of the final threshold value in the \OPR instantiation, i.e., $\phi_{k+1} = L + \beta$.
Given $\phi (\cdot )$, we fully describe our instantiation of \RORO for \OCSmin (\ROROmin) by defining the ramping-on problem, the ramping-off problem, and the pseudo-cost function in the following.

\begin{definition}[\RORO instantiation for \OCSmin (\ROROmin)]
The \RORO framework solves \OCSmin when instantiated as follows.
Let $\phi (\cdot) : [0,1] \rightarrow [L, U]$ be the dynamic threshold defined in~\ref{dfn:phi-min}.
Then, define the ramping-on problem, ramping-off problem, and the pseudo-cost function as:
\begin{align}
    \textsc{RampOn}(g_t( \cdot ), w^{(t-1)}, x_{t-1}) &= \quad \ \ \argmin_{\mathclap{x \in [x_{t-1}, \min (1- w^{(t-1)}, d_t) ]}} \quad \ \ g_t( x ) + \beta (x - x_{t-1}) - \int_{w^{(t-1)}}^{w^{(t-1)} + x}\phi(u) du,\\ \label{eq:rampon}
    \textsc{RampOff}(g_t( \cdot ), w^{(t-1)}, x_{t-1}) &= \argmin_{x \in [0,  \min (x_{t-1}, d_t)]} g_t ( x ) + \beta (x_{t-1} - x) - \int_{w^{(t-1)}}^{w^{(t-1)} + x}\phi(u) du,\\ \label{eq:rampoff}
    \textsc{PCost}(g_t( \cdot ), w^{(t-1)}, x_t, x_{t-1}) &= g_t ( x_t ) + \beta |x_t - x_{t-1}| - \int_{w^{(t-1)}}^{w^{(t-1)} + x_t}\phi(u) du.
\end{align}
\end{definition}\label{dfn:roromin}

The optimizations defined above are inserted into the pseudocode defined in \autoref{alg:roro} to create an instance that solves \OCSmin (\ROROmin).  {\color{blue} At a glance, it is not obvious that the ramping-on and ramping-off problems given by \eqref{eq:rampon} and \eqref{eq:rampoff} can be solved efficiently.  In \autoref{apx:convexproofmin}, we show that both are convex minimization problems, which can be solved efficiently (e.g., using iterative methods). }
In the following theorem, we state the competitive result for $\ROROmin$.  We give a proof sketch of the result and discuss its significance here, and relegate the full proof to \autoref{apx:comp-proofs-min}.

\begin{thm}\label{thm:roromin}
\autoref{alg:roro} for \OCSmin (\ROROmin) is $\alpha$-competitive when the threshold function is given by $\phi(w)$ (from Def.~\ref{dfn:phi-min}), where $\alpha$ is the solution to $\frac{U - L - 2 \beta}{U/\alpha - U - 2 \beta} = \exp(1/\alpha)$ and is given by
\begin{equation}
    \alpha \defeq \left[ W \left( \left( \frac{2\beta}{U} + \frac{L}{U} -1 \right) e^{\nicefrac{2\beta}{U}-1} \right) - \frac{2\beta}{U} + 1 \right]^{-1} \label{eq:alpha}. 
\end{equation}
In the above, $W(\cdot)$ is the Lambert $W$ function, defined as the inverse of $f(x) = xe^x$.
\end{thm}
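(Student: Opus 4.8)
The plan is to follow the pseudo-cost / threshold-function template developed for related online search problems, adapted so that the dynamic threshold $\phi$ (Def.~\ref{dfn:phi-min}) absorbs the switching cost. I would split \ROROmin's trajectory on any instance $\mathcal{I}$ into a \emph{voluntary phase}, governed by the ramping-on and ramping-off pseudo-cost minimizations, and a final \emph{compulsory phase}; let $w^\ast$ denote the utilization at the instant the compulsory trade begins (or $w^{(T)}$ if it never does). Under the non-triviality assumption on the rate constraints, the compulsory phase buys at most $1-w^\ast$ units at marginal cost $\le U$ and incurs at most $2\beta$ in additional switching, so its contribution is at most $(1-w^\ast)U + 2\beta$; the substance of the argument is the voluntary phase.

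First I would establish an accounting identity: telescoping the integral term appearing in \textsc{PCost} across time steps gives $\sum_t \big[ g_t(x_t) + \beta|x_t - x_{t-1}| \big] = \sum_t r_t + \int_0^{w^\ast}\phi(u)\,du$, where $r_t$ is the pseudo-cost of \ALG's chosen decision at step $t$. The first key lemma is that this pseudo-cost total, together with the terminal turn-off term $\beta x_T$, is non-positive. At each step the status-quo decision $x = x_{t-1}$ is feasible in both subproblems and \ALG picks the pseudo-cost minimizer; combined with convexity of $g_t$ and the fact that the interior optimum of the ramping-on (resp.\ ramping-off) problem occurs where the marginal cost $g_t'$ equals $\phi(\cdot)-\beta$ (resp.\ $\phi(\cdot)+\beta$), a summation grouped over maximal ramp-up and ramp-down runs shows the positive and negative contributions cancel. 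With the identity and the compulsory-phase bound this yields $\ALG(\mathcal{I}) \le \int_0^{w^\ast}\phi(u)\,du + (1-w^\ast)U + 2\beta$.

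The second and harder ingredient is a matching lower bound of the form $\OPT(\mathcal{I}) \ge \frac{1}{\alpha}\big[ \int_0^{w^\ast}\phi(u)\,du + (1-w^\ast)U + 2\beta \big]$. The idea is that \ALG halting at utilization $w^\ast$ certifies the instance was ``expensive'': since no ramping-on step pushed \ALG past $w^\ast$ and $\phi$ is monotone, at every time step the marginal cost of acquiring asset beyond level $w^\ast$ exceeded $\phi(w^\ast)-\beta$; because \OPT must still purchase the entire unit, a water-filling / convexity argument applied to the offline program~\eqref{align:objMin} lower-bounds $\OPT$ by the stated quantity. I would then verify that $\alpha$ as given in~\eqref{eq:alpha} is exactly the constant for which the \ALG upper bound equals $\alpha$ times this \OPT lower bound for every $w^\ast \in [0,1]$: writing the competitive constraint as an identity in $w$ and differentiating produces a first-order linear ODE whose solution is the exponential of Def.~\ref{dfn:phi-min}, and the boundary condition $\phi(1) - \beta = L$ collapses the constraint to the transcendental equation $\frac{U-L-2\beta}{U/\alpha - U - 2\beta} = e^{1/\alpha}$, whose root is the Lambert-$W$ expression in~\eqref{eq:alpha} (positivity and uniqueness of the root follow from $\beta \in (0,\nicefrac{U-L}{2})$).

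I expect the main obstacle to be the \OPT lower bound in the presence of switching costs. Unlike in plain one-way trading, \OPT's offline optimum is not simply ``buy each marginal unit wherever it is cheapest'', since \OPT trades purchasing cost against its own movement cost and must respect the rate constraints $d_t$, so the bound has to hold against a strictly more flexible benchmark. Making the ``every later marginal cost exceeds $\phi(w^\ast)-\beta$'' certificate precise, and reconciling it with \OPT's smoothing behaviour so that the offsets $\beta$ and $2\beta$ line up exactly with the $-\beta$ and $+2\beta$ constants baked into $\phi$, is the delicate step; the \ALG upper bound and the ODE computation are comparatively routine.
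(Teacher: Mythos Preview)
Your high-level plan matches the paper's: let $w^\ast=w^{(j)}$ be the utilization when the compulsory trade begins, upper bound \ALG by something in $w^\ast$, lower bound \OPT by $\phi(w^\ast)-\beta$ via the ``no ramping-on step pushed past $w^\ast$'' certificate, and close with the identity satisfied by $\phi$. The gap is quantitative and it bites.

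The threshold in Def.~\ref{dfn:phi-min} is calibrated so that, for every $w\in[0,1]$,
\[
\int_0^{w}\phi(u)\,du \;+\; \beta w \;+\; (1-w)U \;=\; \alpha\bigl[\phi(w)-\beta\bigr].
\]
Your \ALG upper bound carries $+2\beta$ in place of $+\beta w$. Plugging your bound against the \OPT bound $\phi(w^\ast)-\beta$ therefore gives
\[
\frac{\ALG}{\OPT}\;\le\;\alpha\;+\;\frac{\beta(2-w^\ast)}{\phi(w^\ast)-\beta},
\]
which is strictly larger than $\alpha$ for every $w^\ast\le 1$. Equivalently, your stated \OPT lower bound $\OPT\ge\frac{1}{\alpha}\bigl[\int_0^{w^\ast}\phi+(1-w^\ast)U+2\beta\bigr]$ is \emph{not} what the marginal-cost certificate delivers; it delivers only $\OPT\ge\phi(w^\ast)-\beta=\frac{1}{\alpha}\bigl[\int_0^{w^\ast}\phi+\beta w^\ast+(1-w^\ast)U\bigr]$.

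The fix is both simpler and tighter than your proposed ``$\sum_t r_t+\beta x_T\le 0$ via grouping ramp-up/ramp-down runs.'' At every step the ramping-off candidate $x=0$ is feasible and has pseudo-cost exactly $\beta x_{t-1}$, so the chosen pseudo-cost satisfies $r_t\le\beta x_{t-1}$. Summing over $t\le j$ gives $\sum_t r_t\le\beta w^{(j)}$, and the telescoping identity then yields $\ALG(\mathcal I)\le\int_0^{w^{(j)}}\phi+\beta w^{(j)}+(1-w^{(j)})U$, matching the calibration of $\phi$ exactly. Your run-based cancellation lemma is not needed, and the $+2\beta$ you budget for compulsory-phase switching is what breaks the equality.

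One more recalibration: the step you flag as the ``main obstacle''---the \OPT lower bound under switching costs and rate constraints---is in fact the easy part. Since \OPT must buy a full unit and every marginal cost is at least $\phi(w^{(j)})-\beta$, $\OPT\ge\phi(w^{(j)})-\beta$ holds regardless of \OPT's smoothing (its own switching cost only increases \OPT). Rate constraints are handled by a separate monotonicity argument showing they cannot worsen the ratio. The place to spend care is getting the \ALG upper bound to land on $+\beta w$ rather than $+2\beta$.
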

\begin{proof}[Proof Sketch for \autoref{thm:roromin}]

To show this result, we give three lemmas characterizing the optimal solution and the solution obtained by \ROROmin.  Note that the solution obtained by \ROROmin is feasible, as the constraints in the ramping-on problem combined with the compulsory trade always enforce that $\sum_{t=1}^{T} x_t = 1$.

{\color{blue}
We first consider the case without rate constraints, and derive a lower bound on the cost incurred by \OPT.  We denote the final utilization on an arbitrary instance ${\mathcal{I} \in \Omega}$ attained by \ROROmin (before the compulsory trade) by $w^{(j)}$.  Leveraging the definition of the pseudo-cost and an analysis of the ramping-on solution, we show by contradiction that the \textit{best marginal cost} (i.e., the smallest cost function which arrives at any point in the sequence) is lower bounded by $\phi(w^{(j)}) - \beta$.  This gives a corresponding lower bound on $\OPT(\mathcal{I})$ (Lemma~\ref{lem:opt-lb}).

We continue by showing an upper bound on the cost incurred by \ROROmin under the same sequence (denoted by $\ROROmin(\mathcal{I})$).  By combining a worst-case analysis of the actual switching cost paid by \ROROmin with the definition of the pseudo-cost and its relationship to the cost functions accepted by \ROROmin, we show that $\ROROmin(\mathcal{I})$ is upper bounded by $\int^{w^{(t)}}_0 \phi(u) du + \beta w^{(t)} + (1-w^{(t)}) U$ (Lemma~\ref{lem:alg-ub}).
}

Finally, we consider the setting with rate constraints.  Through a careful analysis of the different costs paid by $\OPT$ and $\ALG$ in this setting, we show that the rate constraints do not degrade the worst-case competitive ratio (Lemma~\ref{lem:rate-const-min}).
{\color{blue}
A final analysis of the interaction between these three lemmas yields the result, showing that \ROROmin's competitive ratio is bounded by $\alpha$. The full proof can be found in \autoref{apx:comp-proofs-min}.
}
\end{proof}

\autoref{thm:roromin} gives an upper bound on the competitive ratio of \ROROmin, showing how the parameters of the problem (e.g., $L$, $U$, and $\beta$) filter through and appear in the competitive results.
A reasonable question to ask is whether any other algorithm for \OCSmin can achieve a better competitive ratio.  In the following result, we highlight that no improvement is possible, i.e., that \ROROmin achieves the optimal competitive ratio across all deterministic online algorithms for \OCSmin.  
We give a proof sketch of the result and discuss its significance here, and relegate the full proof to \autoref{sec:lb-proof-min}.

\begin{thm} \label{thm:lowerboundmin}
No deterministic online algorithm for \OCSmin can achieve a competitive ratio better than $\alpha$, as defined in \eqref{eq:alpha}.
\end{thm}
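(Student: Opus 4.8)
The plan is to adapt the classical adversary argument for online search lower bounds (as in~\cite{ElYaniv:01, Lorenz:08, SunZeynali:20}) to the continuous, switching-cost setting, essentially running the competitive analysis behind \autoref{thm:roromin} ``in reverse.'' Fix an arbitrary deterministic online algorithm \ALG. I would construct a one-parameter family of instances $\{\mathcal{I}_x\}_{x \in [L,U]}$ that share a common prefix: a long block of linear cost functions $g_t(y) = c_t y$ whose marginal price $c_t$ decreases slowly from (just below) $U$ toward $x$, interleaved with ``decoy'' cost functions of maximal marginal price $U$ whose role is to penalize \ALG whenever it carries a nonzero purchase rate between consecutive steps. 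Instance $\mathcal{I}_x$ truncates this prefix once the marginal price reaches level $x$ and then appends a tail of maximal-price functions, together with rate constraints tuned so that \ALG's remaining unbought demand must be met, through the compulsory trade, at marginal cost $U$ (this is where the nontriviality assumption on $T$ and $\{d_t\}$ enters). Since \ALG is deterministic and all $\mathcal{I}_x$ agree on their common prefix, the amount \ALG has purchased at the moment the price first reaches a level $c$ is a well-defined function $w(c)$; a short exchange argument shows it suffices to consider $w(\cdot)$ non-increasing in $c$, since ramping down and rebuying only adds switching cost and cannot help the worst case.

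I would then bound the two costs on $\mathcal{I}_x$. For \OPT, buying the entire unit near the minimum marginal price $x$, spread over enough steps that its switching cost is negligible, gives $\OPT(\mathcal{I}_x) \le x + o(1)$ as the instance is refined, matching the lower bound on \OPT used in Lemma~\ref{lem:opt-lb}. For \ALG, I would split its cost into (i) purchasing plus switching incurred over the decreasing block, and (ii) the cost $(1 - w(x))\,U$ of meeting the residual demand in the tail/compulsory phase. Part (i) is bounded below using two observations: each unit converted while \ALG's utilization is near $w$ is paid for at a marginal price at least $c$, where $w(c) \approx w$ (\ALG cannot foresee future price drops); and the interleaved decoys force \ALG to raise and then lower its purchase rate repeatedly, so its switching cost over the block is proportional to the total amount it converts there. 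Combining, $\ALG(\mathcal{I}_x)$ is at least (up to lower-order terms) $\int \bigl(\text{marginal price} + \Theta(\beta)\bigr)\, dw + (1 - w(x))\,U$, mirroring the upper-bound expression $\int_0^{w}\phi(u)\,du + \beta w + (1-w)U$ from Lemma~\ref{lem:alg-ub}.

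Imposing $\gamma$-competitiveness on every member of the family, $\ALG(\mathcal{I}_x) \le \gamma\,\OPT(\mathcal{I}_x)$ for all $x \in [L,U]$, and rewriting in terms of the conversion profile and differentiating in $x$, yields a differential inequality for $w(\cdot)$ — equivalently for its inverse, the ``price at utilization $w$'' function — of exactly the form satisfied by the threshold $\phi$ of Def.~\ref{dfn:phi-min}. Two boundary conditions pin down $\gamma$: \ALG has converted essentially nothing while prices are still near $U$, and \ALG must be nearly fully converted by the time prices bottom out near $L$, since otherwise the forced residual purchase at $U$ already violates competitiveness. A Gr\"onwall-type comparison against the extremal solution then shows the inequality is feasible only if $\gamma \ge \alpha$, with $\alpha$ the root of the transcendental equation in~\eqref{eq:alpha}; driving the instance granularity up makes the gap vanish, giving the claimed bound for all deterministic algorithms, and setting $\beta = 0$ recovers the known one-way-trading lower bound.

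The main obstacle, I expect, is making the switching-cost accounting in the construction tight: the decoy functions (or whatever device plays their role) must be arranged so that \emph{every} deterministic \ALG is provably forced to pay switching cost matching the $\beta$-dependent terms appearing in $\phi$ and in $\alpha$, while \OPT on the very same instances pays only negligible switching — too weak a construction yields a bound below $\alpha$, and overcharging \OPT yields a bound \RORO cannot actually meet. Secondary care is needed in (a) the monotonicity reduction that handles \ALG's freedom to ramp off and rebuy, and (b) choosing the tail's rate constraints and the compulsory-trade trigger so that the ``residual demand is served at cost $U$'' step is rigorous, an interaction analogous to the one treated in Lemma~\ref{lem:rate-const-min}.
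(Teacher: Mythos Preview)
Your proposal is correct and follows essentially the same approach as the paper: the paper constructs the same one-parameter family of $x$-decreasing instances with interleaved blocks of maximal-price cost functions (your ``decoys''), defines the conversion function $h(x)$ (your $w(c)$), argues the forced switching cost is exactly $2\beta h(x)$, and then applies integration by parts plus Gr\"onwall's inequality to the competitiveness constraint with boundary condition $h(L)=1$ to extract $\alpha$. The only cosmetic differences are that the paper parameterizes by $h$ rather than its inverse and does not bring rate constraints into the lower-bound construction at all, so your concern~(b) about the tail's rate constraints turns out to be unnecessary.
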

\begin{proof}[Proof Sketch for \autoref{thm:lowerboundmin}]

We first define a special class of worst-case $x$-decreasing instances $\mathcal{I}_x : x \in [L, U]$ for which the offline optimal solution is $\OPT(\mathcal{I}_x) = x$.
An $x$-decreasing instance (defined formally in Def.~\ref{dfn:xinstance-min}) is a sequence of linear cost functions of the form $g(y) = \kappa y$, where $\kappa$ is a linear coefficient.  Letting $n, m \in \mathbb{N}$ be sufficiently large, and $\delta := \nicefrac{(U-L)}{n}$, there are $n_x := 2 \cdot \lceil \nicefrac{(x - L)}{\delta} \rceil + 1$ alternating batches of cost functions.  The $i^{\text{th}}$ batch ($i\in [n_x-2]$) contains a single cost function with $\kappa = U-(\lceil i / 2 \rceil)\delta$ if $i$ is even, and these ``decreasing'' batches are interrupted by blocks of $m$ cost functions with $\kappa = U$ (i.e., when $i$ is odd).  See below for an illustration.
\begin{align*}
\{\underbrace{Uy,\dots,Uy }_{m}, (U-\delta)y, \underbrace{Uy,\dots,Uy}_{m}, (U-2\delta)y,\dots, \underbrace{Uy,\dots,Uy }_{m},\underbrace{(x+\varepsilon)y,\dots,(x+\varepsilon)y }_{m}, \underbrace{Uy,\dots,Uy }_{m}\}
\end{align*}

As $n \to \infty$, the alternating single cost functions in an $x$-decreasing sequence continuously decrease down to $x$, and each of these ``good cost functions'' is interrupted by a section of worst-case $U$ cost functions.  Note that the last few cost functions in an $x$-decreasing instance are always $U$.  We also note that $\mathcal{I}_{U}$ is simply a stream of $m$ cost functions $U$.

By defining the special class of instances, we solve a key technical challenge -- the actions of any deterministic online algorithm $\ALG$ on these instances can then be fully described by an arbitrary conversion function $h(x) : [L, U] \rightarrow [0,1]$, where $h(x)$ denotes the fractional amount purchased (before the final compulsory trade) on instance $\mathcal{I}_x$.  

{\color{blue}
Using this conversion function, we can construct several necessary conditions that must be satisfied by any $\alpha^\star$-competitive algorithm, where $\alpha^\star$ is the target (best) competitive ratio.  First, we can observe that the cost of the arbitrary algorithm is $\ALG(\mathcal{I}_x) = h(\nicefrac{U}{\alpha^\star}) \nicefrac{U}{\alpha^\star} - \int^x_{\nicefrac{U}{\alpha^\star}} u d h(u) + 2\beta h(x) + (1 - h(x))U$,
where $u dh (u)$ is the cost of purchasing $dh(u)$ utilization at price $u$.  Based on this, since $\ALG$'s cost must be $\alpha^\star$-competitive against $\OPT$, we can derive a differential inequality that lower bounds the conversion function $h(x)$.  

We can then use Gr\"{o}nwall's Inequality~\cite[Theorem 1, p. 356]{Mitrinovic:91} to obtain a \textit{necessary condition}, where $\alpha^\star$ and $h(x)$ must jointly satisfy the following: 
$\alpha^\star \ln \left( L + 2\beta -U \right) - \alpha^\star \ln \left( \nicefrac{U}{\alpha^\star} + 2\beta -U \right) \leq h(L) = 1$.  The optimal $\alpha^\star$ is obtained when the inequality is binding, and the result follows.  %
}
\end{proof}

By combining Theorems~\ref{thm:roromin} and \ref{thm:lowerboundmin}, we conclude that \ROROmin is optimal for \OCSmin.  We note that the definition of $\alpha$ gives that when the switching cost $\beta = 0$, our \RORO algorithm recovers the optimal competitive ratio for the minimization variant of one-way trading, which is $\alpha \thicksim \left[ W \left( \left( \nicefrac{L}{U} - 1 \right) e^{-1} \right) + 1 \right]^{-1}$, as given by~\cite{Lorenz:08, SunLee:21}.  It is known that $W(x) \thicksim \ln (x)$ as $x \to \infty$ ~\cite{HoorfarHassani:08, Stewart:09}.  As $\beta$ grows, the competitive ratio $\alpha$ intuitively degrades linearly in the magnitude of $\beta$, although the assumed upper bound on $\beta$ prevents the competitive ratio from becoming unbounded.

\section{Learning-augmented Algorithms}
\label{sec:preds}
In this section, we explore how ML advice can help break past the pessimistic worst-case bounds of competitive algorithms for \OCS.  We propose a meta-algorithm, \ROAdv, which integrates advice to significantly improve empirical performance and theoretical bounds when the advice is of high quality. Along the way, we discuss and consider two distinct models of advice employed in prior work on learning-augmented algorithms.  We present lower bounds that show that ``simple predictions'' are not powerful enough to allow learning-augmented algorithms to achieve $1$-consistency in \PrOb (\autoref{thm:bestPriceConsistency}). In addition, we show that the amount of advice necessary to break through this barrier is at least linear in the length of the instance (\autoref{thm:advicecomplexity}).  These results, combined with the availability of applicable machine learning techniques, heavily influence our choice of advice model, which is detailed below.

\subsection{Advice Model} \label{sec:advice-model}

In~\cite{SunLee:21, Lee:22}, the authors present learning-augmented algorithms for one-way trading and $k$-search, respectively.  Both works consider a ``best price'' prediction model, where the algorithm receives a single prediction $\hat{P} \in [L, U]$ of the best-quality input in the upcoming sequence.  For instance, $\hat{P}$ in the \OCS problem would predict the lowest cost function in the minimization setting or the largest price function in the maximization setting.
On the other hand, learning-augmented work for convex function chasing (\CFC)~\cite{Christianson:22}, smoothed online convex optimization (\texttt{SOCO})~\cite{Rutten:23}, and metrical task systems (\texttt{MTS})~\cite{Antoniadis:20MTS, Christianson:23MTS} use \textit{black-box advice}, which directly predicts the optimal online decision at each time step.  In the \OCS problem, this advice would involve predicting $x_t$ for each time step $t \in [T]$.
We first consider leveraging ``best price'' predictions since our \RORO algorithm design is closer in lineage to the robust one-way trading algorithm used as a baseline in~\cite{SunLee:21}.  Notably, however, as we show in the following theorem, 
these relatively simple predictions are not expressive enough in the \OCS setting for learning-augmented algorithms to obtain performance comparable to the offline optimal.
The full proof of \autoref{thm:bestPriceConsistency} is deferred to Appendix~\ref{apx:advice}.

\begin{thm}\label{thm:bestPriceConsistency}
    Any learning-augmented algorithm for \PrOb which uses a prediction of the \textit{``best price''} in a sequence cannot achieve $1$-consistency, regardless of robustness.
\end{thm}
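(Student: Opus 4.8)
The plan is to establish this impossibility via an adversarial argument that exploits the gap between what a ``best price'' prediction conveys and what an optimal solution actually does in the presence of switching costs. The key observation is that knowing the single lowest cost coefficient $\hat P$ in the sequence tells the algorithm \emph{what} the best marginal price is, but not \emph{when} it arrives nor \emph{for how long} it persists --- and with switching costs, both of those matter enormously for $\OPT$. I would construct two instances $\mathcal{I}_1$ and $\mathcal{I}_2$ that share the same ``best price'' $\hat P$ (so the prediction is identical and correct on both), but on which $\OPT$ behaves very differently, so that no single online response can be simultaneously $1$-competitive on both.

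Concretely, I would work in the minimization setting and reuse the family of $x$-decreasing instances from the proof of \autoref{thm:lowerboundmin}. Fix some target price $P \in (L, U)$ and let $\hat P = P$ be the prediction. For $\mathcal{I}_1$, take an instance where the price $P$ appears only in a \emph{short} block (just long enough, given the rate constraint $d_t$, that a continuously-running player could fill a modest fraction of the asset), preceded and followed by $U$-cost functions; here $\OPT$ accepts a small amount near $P$ and the rest near $U$, incurring switching cost essentially $2\beta$. For $\mathcal{I}_2$, take an instance where $P$ is the \emph{terminal} persistent price --- the sequence decreases down to $P$ and then stays at $P$ for a long stretch before the compulsory trade --- so $\OPT$ can fill (almost) the entire asset at rate $1/T$ smoothly around price $P$, incurring only the minimal $\frac{2}{T}\beta$ switching cost and total cost $\approx P + \frac{2}{T}\beta$. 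A $1$-consistent algorithm on $\mathcal{I}_2$ must therefore have converted essentially nothing before seeing the long $P$-block, since any early conversion at prices near $U$ is irrecoverable and pushes its cost above $P$. But that same cautious behavior is forced on $\mathcal{I}_1$ up to the point where the two instances diverge (they are indistinguishable online until the $P$-block ends in $\mathcal{I}_1$), at which point the player is caught with almost the whole asset unpurchased and only $U$-functions and the compulsory trade remaining --- giving cost $\approx U + 2\beta$ against $\OPT(\mathcal{I}_1) \approx P + 2\beta$, a ratio bounded away from $1$. Carefully choosing $P$, $n$, $m$, and the block lengths makes this ratio a fixed constant $> 1$ regardless of how robustness is set, since the construction never relies on the robustness parameter.

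The main obstacle will be handling the interaction between the rate constraints $d_t$, the compulsory trade, and the switching cost in a way that makes the dichotomy \emph{clean} --- in particular, ensuring that on $\mathcal{I}_1$ the algorithm genuinely cannot ``hedge'' by purchasing a moderate amount early without breaking $1$-consistency on $\mathcal{I}_2$, and conversely that a $1$-consistent response on $\mathcal{I}_2$ is forced to be essentially degenerate before the divergence point. I would formalize this by letting $\mathcal{I}_1$ and $\mathcal{I}_2$ agree on a common prefix and arguing that the online decisions on this prefix are identical; then I quantify, as a function of the total amount $w$ converted during the prefix, both the penalty on $\mathcal{I}_1$ (which is increasing in how little was converted, because of the forced high-price tail) and the penalty on $\mathcal{I}_2$ (which is increasing in how \emph{much} was converted at high prices), and show the two lower bounds cannot both be $1$ --- their maximum is at least some $c(L,U,\beta) > 1$. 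A secondary technical point is verifying that $\hat P = P$ really is the exact ``best price'' on both instances (i.e.\ no cost function cheaper than $P$ ever appears), which follows directly from the $x$-decreasing construction by stopping the decrease at $P$. I would present the full details, including the explicit choice of constants and the case analysis on $w$, in Appendix~\ref{apx:advice}.
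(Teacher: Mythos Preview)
Your core idea is right and matches the paper's: construct two instances sharing the same best price $\hat P$ on which $\OPT$ incurs very different switching costs, so no single online response can be $1$-consistent on both. But your execution overcomplicates the construction and slightly misidentifies the crux. The paper's argument is much cleaner: take $\mathcal{I}_1 = \{\hat P y,\; Uy,\ldots,Uy\}$ and $\mathcal{I}_2 = \{\hat P y,\; Uy,\ldots,Uy,\; \underbrace{\hat P y,\ldots,\hat P y}_{m}\}$. Both start with $\hat P$ at time~$1$, so the prediction is correct and the first-step decision is forced to coincide. On $\mathcal{I}_1$, $\OPT = \hat P + 2\beta$ (buy everything at step~$1$); on $\mathcal{I}_2$, $\OPT = \hat P + 2\beta/m \to \hat P$. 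To be $1$-consistent on $\mathcal{I}_1$ the algorithm must buy everything at step~$1$, incurring switching cost $2\beta$; but then on $\mathcal{I}_2$ its ratio is $(\hat P + 2\beta)/\hat P > 1$. Taking $\hat P = L$ gives the bound $1 + 2\beta/L$.

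The key simplification you are missing is that the impossibility is \emph{not} about when $\hat P$ first arrives --- a best-price-aware algorithm can simply wait for it --- but purely about \emph{how many times} it arrives, since that alone governs $\OPT$'s switching cost. By placing $\hat P$ at time~$1$ in both instances, the paper removes the ``when'' dimension entirely and isolates the switching-cost ambiguity. Your framing (``must have converted essentially nothing before seeing the long $P$-block'') leans on the timing dimension and pulls in the $x$-decreasing machinery and rate constraints unnecessarily. This also leads to the internal inconsistency in your description of $\mathcal{I}_1$: you first say $\OPT$ ``accepts a small amount near $P$ and the rest near $U$,'' then later write $\OPT(\mathcal{I}_1)\approx P+2\beta$, which requires $\OPT$ to buy \emph{everything} at $P$. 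Dropping the rate-constraint hedge and the decreasing prefix resolves this and collapses your argument to the paper's.
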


\begin{proof}[Proof Sketch for \autoref{thm:bestPriceConsistency}]
Assume that an algorithm for \OCS receives a prediction of the best price $\hat{P}$, and that this prediction is correct. While this information is useful because it enables the algorithm to wait to accept cost/price functions matching $\hat{P}$, it does not convey any information about the switching cost, which is a crucial part of making optimal decisions.

We construct two toy sequences to illustrate how the best price prediction fails in the presence of switching costs.  The first sequence consists of one cost/price function $g(y) = \hat{P}y$, followed by any number of significantly worse cost/price functions.
The second sequence consists of one cost/price function $g(y) = \hat{P}y$, followed by some number of significantly worse cost/price functions, and ending with a block of $m$ cost/price functions $\{ g(y) = \hat{P}y \} \times m$.

From the viewpoint of the first time step, both sequences look identical to the learning-augmented algorithm, but the switching cost incurred by the optimal solution is significantly different ($2\beta$ for the first sequence, $\nicefrac{2\beta}{m}$ for the second sequence). %
Therefore, the online algorithm must incur a switching cost of $2\beta$ (otherwise, it would not be $1$-consistent for the first sequence), while the optimal solution in the second sequence incurs a switching cost of $\nicefrac{2\beta}{m}$, which approaches $0$ as $m \to \infty$.
Thus, a learning-augmented algorithm with simple predictions cannot achieve a consistency better than $\nicefrac{\ALG}{\OPT} \geq \nicefrac{L + 2\beta}{L}$ in the minimization or $\nicefrac{\OPT}{\ALG} \geq \nicefrac{U}{U - 2\beta}$ in the maximization setting.
\end{proof}

The result in \autoref{thm:bestPriceConsistency} motivates considering the question: \textit{how much advice is needed to achieve $1$-consistency?}  Answering this question provides insight about which advice model is the ``right choice'' for \OCS \ -- although we may not require $1$-consistency in all cases, it is valuable to have a parameterized learning-augmented algorithm which \textit{can} achieve $1$ consistency if predictions are consistently accurate.  In what follows, we leverage techniques from the advice complexity literature~\cite{Bockenhauer:09, Emek:11, Bockenhauer:14, Gupta:13} to prove a lower bound on the amount of advice necessary for $1$-consistency in \OCS.  We give a proof sketch of the result and discuss its significance here, relegating the full proof to \autoref{sec:advice-comp}.

\begin{thm}\label{thm:advicecomplexity}
    For any instance of \OCS with sequence length $T$, at least $\Omega(T)$ bits of advice are necessary to achieve $1$-consistency.
\end{thm}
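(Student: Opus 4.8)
The plan is to follow the standard advice-complexity lower-bound recipe: exhibit a family $\mathcal{F}$ of $2^{\Omega(T)}$ \OCS instances that are \emph{pairwise incompatible}, meaning no single deterministic online algorithm is optimal on two of them. Then, since a learning-augmented algorithm with $b$ advice bits, once its (correct) advice string is hard-wired, becomes one of at most $2^b$ deterministic algorithms, and since $1$-consistency means that with the correct advice the algorithm reproduces the (unique) offline optimum, two instances receiving the same correct advice would force one deterministic algorithm to be optimal on both — contradicting incompatibility. Hence the map from instances to correct advice strings is injective, $2^b \ge |\mathcal{F}|$, and $b = \Omega(T)$.

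First I would build $\mathcal{F}$ from $N=\Theta(T)$ consecutive ``gadgets,'' indexed by a binary string $\sigma$. Gadget $k$ exploits exactly the phenomenon behind \autoref{thm:bestPriceConsistency}: if $\sigma_k=0$, the gadget offers only a single \emph{isolated} cheap step (surrounded by expensive filler), and since using it would incur a full on/off switching penalty while better opportunities exist elsewhere, the offline optimum skips gadget $k$ and buys nothing there; if $\sigma_k=1$, the gadget offers a short \emph{sustained} cheap block, which the offline optimum does exploit, ramping on gently and buying a strictly positive amount at each block step, in particular at the \emph{first} one. Crucially, the first step of gadget $k$ looks identical in the two cases (same linear cost function, same rate constraint), and the portion of the instance preceding it is determined only by $\sigma_1,\dots,\sigma_{k-1}$. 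So the online player at that step cannot tell $\sigma_k=0$ from $\sigma_k=1$, yet the offline-optimal decision there is $0$ in one case and strictly positive in the other. Two distinct instances of $\mathcal{F}$ first differ at some gadget $k$, so at the first step of that gadget their optimal decisions diverge while their observed prefixes coincide — precisely pairwise incompatibility — and the counting argument above then delivers $b=\Omega(T)$.

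The main obstacle, and where the construction needs care, is guaranteeing that the offline-optimal decision at the first step of gadget $k$ depends on $\sigma_k$ \emph{alone} while no side channel leaks $\sigma_k$ to the player. Three pitfalls arise. (i) If one forces each gadget to carry a fixed $1/N$-fraction of the demand via a tight deadline, the rate constraints on block steps must shrink to roughly $1/(\text{block length})\cdot 1/N$, which both makes the optimal first-step purchase equal in the two cases \emph{and} leaks $\sigma_k$ through $d_t$; so I would instead keep all rate constraints uniform and absorb the extra capacity of the $\sigma_k=1$ gadgets as deadline slack. (ii) With slack the optimum has freedom in distributing its purchases, which could decouple the gadget-$k$ decision from $\sigma_k$; I would remove this freedom by (a) appending a block of ``forced-$1$'' gadgets so the cheap-block capacity always strictly exceeds the demand, pinning the optimum to skip every $\sigma_k=0$ gadget entirely, and (b) adding vanishing, index-dependent perturbations to the cheap prices so the optimum is unique and fills the $\sigma_k=1$ gadgets in a fixed order, ensuring each front gadget receives a positive amount. (iii) Because $\sigma_k=1$ gadgets are one step longer, the horizon length $T$ would reveal the number of ones; I would neutralize this by restricting $\mathcal{F}$ to strings of a fixed Hamming weight (e.g. $\binom{N/2}{N/4}=2^{\Theta(T)}$ of them), so that all instances share the same $T$. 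With these devices in place, verifying incompatibility and checking that the optimum is deadline-feasible (so a $1$-consistent algorithm genuinely reproduces it, with no premature compulsory trade) is routine, and the counting bound $b\ge\log_2|\mathcal{F}|=\Omega(T)$ follows.
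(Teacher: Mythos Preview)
Your proposal is sound and takes a genuinely different route from the paper. The paper's proof is a pure counting argument: it discretizes each $x_t$ to the grid $\{0, 1/T, \ldots, 1\}$, counts feasible solution vectors via the balls-and-bars formula to get at least $\binom{2T}{T-1} = 2^{\Theta(T)}$ of them, and then invokes pigeonhole---with $o(T)$ advice bits, two distinct vectors must share an advice string---leaving the step ``for each such vector, construct an instance where it is the unique optimum'' entirely implicit. You instead build the hard family explicitly, encoding a binary string into a sequence of gadgets that reuse the ``one cheap step versus a cheap block'' indistinguishability behind \autoref{thm:bestPriceConsistency}, and you verify pairwise incompatibility directly. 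The paper's approach is shorter and more in the spirit of an information-theoretic dimension bound, but it rests on an unproven existence claim; your approach requires the engineering you flag in (i)--(iii)---and the key reason it works is that the per-unit switching cost on a block of length $m$ is $2\beta/m$ versus $2\beta$ on an isolated step, so with enough block capacity the optimum strictly prefers blocks and skips every isolated cheap step---but it is self-contained and has the pleasant feature of tracing the $\Omega(T)$ requirement back to exactly the switching-cost phenomenon that defeats single-value predictions.
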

\begin{proof}[Proof Sketch for \autoref{thm:advicecomplexity}]
We first characterize the \textit{number of unique solutions} for \OCS sequences with length $T$.  
We leverage the structure of the problem, observing first that any feasible solution satisfies $\sum_{i=1}^{T} x_t = 1, 0 \leq x_t \leq 1 \ \forall t \in [T]$.  
We use a careful discretization of the decision space $[0,1]$ to the set $\{ 1/T, 2/T \dots, 1\}$ to capture two extreme cases for the possible solutions.  One feasible solution (rate constraints permitting) is to distribute purchasing decisions across all $T$ time steps, setting $x_t = 1/T$ at each step.  The other extreme is to group all purchasing decisions into a single time step $m$, such that $x_m = 1$ and $x_t = 0 : \forall t \not = m$.  This careful discretization allows us to derive a lower bound on the number of unique solutions using combinatorial principles.

Using this lower bound, we can then show that the growth in the number of unique solutions as $T$ grows is fast enough to overwhelm an advice model that is sublinear in $T$.  The intuition behind the proof hinges on the fact that as $T$ grows, any general advice model which uses $A_T$ bits of advice (where $A_T = o(T)$) can distinguish between $2^{A_T}$ inputs based on the advice.  As $T$ grows, the number of unique solutions will quickly exceed $2^{A_T}$, and by the pigeonhole principle, there must be at least two unique solutions that map to the same advice string.  The full proof is given in \autoref{sec:advice-comp}.
\end{proof}

The above result concludes that any algorithm with advice for \PrOb must read $\Omega(T)$ advice bits to achieve a competitive ratio of $1$; in the \textit{untrusted advice} setting, this corresponds to the advice necessary for $1$-consistency. Consequently, it naturally motivates the usage of black-box advice (as in~\cite{Antoniadis:20MTS, Christianson:22, Rutten:23, Christianson:23MTS}) which predicts the online decision at each time step.  
Below we formally define the black-box advice model we use for \OCS, and some assumptions.

\begin{definition}[Black-box advice model for \OCS]
     A learning-augmented algorithm $\ALG$ for \OCS receives advice of the form $\{ \hat{x}_t \}_{t \in [T]}$.  If the advice is correct, a naïve algorithm $\ADV$ which chooses $\hat{x}_t$ at each time step obtains the optimal cost/profit, i.e., for any valid \OCS instance $\mathcal{I}$, $\ADV(\mathcal{I}) = \OPT(\mathcal{I})$.

     We assume that regardless of whether the black-box advice is correct, it is always \textit{feasible}, i.e., it satisfies $\sum_{t = 1}^{T} \hat{x}_t = 1$ and $\hat{x}_t \in [0, d_t] \ \forall t \in [T]$. 
\end{definition}

We note here that it is not obvious that this black-box advice model is something that can be directly learned by a machine learning model in practice, and the assumption that the black-box advice is always feasible can similarly be questionable.  In practice, however, the model's generality means that it can accommodate any ``black-box predictor'', which can actually be a layer of abstraction over several components, such as an ML model plus a post-processing pipeline to enforce feasibility.  In \autoref{sec:eval}, we show that this ``black-box'' advice model is still useful in the real world by leveraging an off-the-shelf ML model for our empirical experiments.

Before we describe our main learning-augmented results, we note that the prediction lower bound shown in \autoref{thm:advicecomplexity} can be improved if aspects of the \OCS problem are relaxed or simplified.  Specifically, if rate constraints are removed (i.e., $d_t = 1 \ \forall t \in [T]$) and the cost/price functions are all linear, the problem's advice complexity seems to decrease substantially.  We provide some more discussion on this point in Appendix~\ref{apx:specialcase-advice}, although we do not explore this dynamic in detail since these simplifications are not relevant to the real-world applications we consider.  It would be very interesting to explore this trade-off between the problem's complexity and the amount of advice required in future work, particularly since \OCS seems to lie on a boundary between the simple advice used in the online search and the decision advice used in the online metric literature.

\subsection{\ROAdvmin: robustly incorporating black-box decision advice} \label{sec:ocsmin-adv}

We present a learning-augmented algorithm for \OCS called Ramp-On-Advice (\ROAdv, pseudocode in Algorithm~\ref{alg:ro-advice}). This algorithm combines the robust decision of \RORO ($\tilde{x}_t$) at each time step with the predicted optimal solution ($\hat{x}_t$) obtained from the black-box advice. \ROAdv takes a hyperparameter $\epsilon$, which parameterizes a trade-off between consistency and robustness (i.e., as $\epsilon$ approaches $0$, consistency improves and robustness degrades).  Based on $\epsilon$, \ROAdv sets a combination factor $\lambda \in [0,1]$, which determines the decision portion that it should take from each subroutine (i.e., $\lambda$ from the black-box advice and $(1- \lambda)$ from the robust solution).
We note that the exact definitions of $\epsilon$ and $\lambda$ are different in the minimization and maximization settings{\color{blue}, and we defer discussion of the maximization setting to \autoref{apx:ocsmax-adv}.}

\begin{algorithm}[h]
	\caption{Ramp-On-Advice meta-algorithm for \textit{learning-augmented} \OCS (\texttt{RO-Advice})}
	\label{alg:ro-advice}
	\begin{algorithmic}[1]
		\State \textbf{input:} The black-box advice $\{ \hat{x}_t \}_{\forall t \in [T]}$, competitive robust advice $\{ \tilde{x}_t \}_{\forall t \in [T]}$ (e.g., given by $\RORO$), combination factor $\lambda$ (see Definitions~\ref{dfn:roadvmin} and \ref{dfn:roadvmax}).
		\While{cost/price function $g_t(\cdot)$ is revealed and $w^{(t-1)} < 1$}
		\State obtain untrusted advice $\hat{x}_t$;
        \State obtain robust advice $\tilde{x}_t$;
        \State set the online decision as $x_t = \lambda \hat{x}_t + (1 - \lambda) \tilde{x}_t$;
	\State update the utilization $w^{(t)} = w^{(t-1)} + x_t$;
	\EndWhile
	\end{algorithmic}
\end{algorithm}

We start by describing our instantiation of \ROAdv for \OCSmin (\ROAdvmin), which includes definitions of $\epsilon$, $\lambda$, and the robust instantiation of \ROROmin as a subroutine.

\begin{definition}[\ROAdv instantiation for \OCSmin (\ROAdvmin)]\label{dfn:roadvmin}
    Let $\epsilon \in [0, \alpha - 1]$, where $\alpha$ is the robust competitive ratio defined in \eqref{eq:alpha}.  \ROAdvmin then sets a combination factor $\lambda~\vcentcolon=~\left( \alpha - 1 - \epsilon \right) \cdot \frac{1}{\alpha - 1}$, which is in $[0, 1]$.  The robust advice $\{ \tilde{x}_t \}_{\forall t \in [T]}$ is given by the \RORO instantiation for \OCSmin (\ROROmin), given in Definition~\ref{dfn:roromin}.
\end{definition}

In the following theorem, we state the consistency and robustness bounds for the \ROAdvmin meta-algorithm.  For brevity, we relegate the full proof to \autoref{sec:advice-proofs-min}.

\begin{thm}\label{thm:advice-consist-robust-min}
Given a parameter $\epsilon \in [0, \alpha - 1]$, where $\alpha$ is defined as in~\eqref{eq:alpha},\\ \texttt{RO-Advice-min} is $(1 + \epsilon)$-consistent and $\left( \frac{\frac{(U + 2 \beta)}{L} (\alpha - 1 - \epsilon) + \alpha \epsilon}{( \alpha - 1) } \right)$-robust for \OCSmin. 
\end{thm}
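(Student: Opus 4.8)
The plan is to derive a single structural inequality and then read off both bounds from it by elementary algebra in $\alpha$, $\epsilon$, and $\lambda$. The key observation is that \ROAdvmin plays, at every step, the convex combination $x_t = \lambda \hat x_t + (1-\lambda)\tilde x_t$ of the black-box advice decision and the \ROROmin decision. First I would check that this trajectory is feasible: each $x_t$ is a convex combination of two points in $[0,d_t]$ (the advice is assumed feasible, and every \ROROmin decision, including those in its compulsory trade, lies in $[0,d_t]$), and $\sum_t x_t = \lambda \sum_t \hat x_t + (1-\lambda)\sum_t \tilde x_t = \lambda + (1-\lambda) = 1$, so no additional compulsory trade is triggered and the conventions $x_0 = x_{T+1} = 0$ carry over from both components. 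Then, applying convexity of each $g_t$ (Jensen's inequality) to the purchasing terms and the triangle inequality to each switching term $\beta \lvert x_t - x_{t-1}\rvert$, and summing over $t$, I obtain the structural bound $\ALG(\mathcal I) \le \lambda\,\ADV(\mathcal I) + (1-\lambda)\,\ROROmin(\mathcal I)$ for every instance $\mathcal I$.

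For consistency, I would take the advice to be correct, so $\ADV(\mathcal I) = \OPT(\mathcal I)$, and apply \autoref{thm:roromin} to get $\ROROmin(\mathcal I) \le \alpha\,\OPT(\mathcal I)$; the structural bound then gives $\ALG(\mathcal I) \le \bigl(\lambda + (1-\lambda)\alpha\bigr)\OPT(\mathcal I)$, and since $\lambda = (\alpha - 1 - \epsilon)/(\alpha - 1)$ we have $\lambda + (1-\lambda)\alpha = \alpha - \lambda(\alpha - 1) = \alpha - (\alpha - 1 - \epsilon) = 1 + \epsilon$, which is the claimed consistency.

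For robustness, the advice need not be correct but is still assumed feasible, so I would bound $\ADV(\mathcal I)$ unconditionally: its purchasing cost is $\sum_t g_t(\hat x_t) \le \sum_t U\hat x_t = U$ (since $g_t(0) = 0$ and $\nicefrac{dg_t}{dx_t} \le U$ together with convexity give $g_t(x) \le Ux$) and its switching cost is at most the problem's maximum of $2\beta$, so $\ADV(\mathcal I) \le U + 2\beta$; dually, since $\nicefrac{dg_t}{dx_t} \ge L$ and $\sum_t x_t = 1$ for any feasible solution, $\OPT(\mathcal I) \ge L$. Combining these with \autoref{thm:roromin} and the structural bound yields $\nicefrac{\ALG(\mathcal I)}{\OPT(\mathcal I)} \le \nicefrac{\lambda(U + 2\beta)}{L} + (1-\lambda)\alpha$, and substituting $1 - \lambda = \epsilon/(\alpha - 1)$ gives exactly $\left( \frac{\frac{(U + 2\beta)}{L}(\alpha - 1 - \epsilon) + \alpha\epsilon}{\alpha - 1} \right)$, the claimed robustness.

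I expect the only genuinely delicate step to be establishing the structural inequality --- in particular, justifying that the switching costs at the ``turn-on''/``turn-off'' boundaries combine correctly and that the compulsory-trade mechanism inside the \ROROmin subroutine does not disturb feasibility or the convex-combination bound. Once that is in place, the consistency and robustness claims follow from routine manipulation of $\alpha$, $\epsilon$, and $\lambda$.
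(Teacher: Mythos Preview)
Your proposal is correct and follows essentially the same approach as the paper: establish feasibility of the convex combination, use convexity of $g_t$ plus the triangle inequality on the switching terms to get the structural bound $\ALG(\mathcal I)\le\lambda\,\ADV(\mathcal I)+(1-\lambda)\,\ROROmin(\mathcal I)$, and then read off consistency via $\ROROmin\le\alpha\,\OPT$ and robustness via $\ADV\le U+2\beta$ and $\OPT\ge L$. Your justification of the auxiliary bounds $g_t(x)\le Ux$, $\OPT\ge L$, and the rate-constraint feasibility $x_t\in[0,d_t]$ is in fact slightly more explicit than the paper's.
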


Note that as $\epsilon \rightarrow \alpha - 1$, both the consistency and robustness bounds approach $\alpha$, which is the optimal competitive bound shown for the setting without advice.  When $\epsilon \to 0$, the robustness bound degrades to $\nicefrac{U + 2\beta}{L}$, but the consistency bound implies that perfect advice allows \ROAdvmin to obtain the optimal solution. {\color{blue} Although the result in \autoref{thm:lowerboundmin} implies that the consistency-robustness trade off of \ROAdvmin is optimal as $\epsilon \rightarrow \alpha - 1$, it is not clear whether the trade off is \textit{Pareto-optimal} for any setting of $\epsilon$.  It would be very interesting to consider a corresponding lower bound on consistency-robustness in this setting with predictions.

It is notable that a relatively simple linear combination of the untrusted advice and the robust online decision simultaneously achieves bounded consistency and robustness, particularly since prior works have shown that such a technique fails to achieve bounded robustness for other online problems.  In \OCS, our learning-augmented technique can take advantage of structure in the problem provided by the deadline constraint -- namely, since the deadline constraint \textit{must} be satisfied by any valid solution, we can derive upper and lower bounds on the cost of any valid solution.  Since any cost function must have derivative $\ge L$, the cost of any solution is lower bounded by $L$.  Conversely, since any cost function must have derivative $\le U$ and the worst-case switching cost is exactly $2\beta$, the cost of any solution is upper bounded by $U+2\beta$ (see \autoref{sec:probform}).  These bounds are exactly reflected in the worst-case robustness ratio.

}

\section{Case Study: Carbon-Aware EV Charging} %
\label{sec:eval}
This section uses carbon-aware electric vehicle (EV) charging as a case study and compares our proposed algorithms for  \PrObmin against existing algorithms from the literature.

\subsection{Experimental Setup} \label{sec:expsetup}

We consider a carbon-aware EV charging system that supplies a requested amount of electricity to a customer's EV while minimizing the total CO$_2$ emissions of the energy used.
For simplicity, we consider a single EV charger connected to the electric grid and a local solar installation, such as a solar parking canopy~\cite{EPA:17}.  
The simulated charger is an \textit{adaptive} Level 2 AC charger with a peak charging rate of 19kW \cite{DOT:chargers} that can modulate its charging rate between 0kW and 19kW.
The charger first uses the electricity from the local solar and resorts to the electric grid to meet the deficit. We assume that local solar electricity has zero operational carbon intensity, while grid-supplied electricity has a \textit{time-varying carbon intensity} based on the mix of generation sources for the region~\cite{sukprasert2023quantifying}.

We construct instances of \OCSmin as follows: Each EV arrives with a charging demand~(in~kWh), a connection time, and a disconnection time.  The objective of the charging system is to selectively ramp the charging rate up or down to meet the full demand before the deadline (disconnection time) while minimizing total carbon emissions. 
The switching cost smooths the charging rate and penalizes large fluctuations, which accelerate lithium-ion battery degradation~\cite{Zhang:06} (see \autoref{fig:in-action} for ``smoothing'' effect illustration).
The rate constraints $\{d_t\}_{t \in [T]}$ are set to $1$ if the total charging demand is less than 19 kWh, and otherwise it is set to reflect the fraction of the total demand that our simulated Level 2 AC charger can meet during one hour (one time slot) of charging.\footnote{Our setup assumes a slow AC charger, and the EV can always accept its full charging rate. Considering a DC fast charger, which limits the charging rate when the battery is above a certain charge level~\cite{ChargePoint}, is outside the scope of this work.}

\paragraph{EV charging data.} We use the Adaptive Charging Network Dataset (ACN-Data)~\cite{Lee:19:ACNData}, which provides real EV charging data for $\sim$16,000 EV charging sessions, including arrival time, departure time, and energy demand (in kWh) for each session.  We focus on charging sessions that are $5$ hours or longer, since the carbon intensity of the electric grid does not change over shorter time periods, and carbon savings can only be realized when charging has temporal flexibility.

\paragraph{Carbon data traces.}  We use historical grid carbon intensity data for the California ISO (CAISO), obtained via Electricity Maps~\cite{electricity-map}, which provides hourly \textit{average carbon emissions} of the CAISO grid, expressed in grams of CO$_2$ equivalent per kilowatt-hour (gCO$_2$eq/kWh).%

\paragraph{Carbon forecasts.}  To evaluate our learning-augmented \texttt{RO-Advice} algorithm, we use an \textit{``off-the-shelf''} predictive model called \texttt{CarbonCast}~\cite{Maji:22:CC}.  \texttt{CarbonCast} is a CNN-LSTM-based ML architecture that accurately forecasts the hourly carbon intensity of the grid in up to 96-hour windows. \texttt{CarbonCast} provides pre-trained models and training data for 13 regions, including CAISO.

\paragraph{Solar data traces.} To simulate the impact of solar installations of varying sizes, we obtain historical solar data from the National Solar Radiation Database (NSRDB)~\cite{Sengupta:18:NSRDB}.  The NSRDB provides satellite measurements of hourly solar irradiation, including diffuse horizontal irradiance (DHI) and direct normal irradiance (DNI), and solar zenith angle, at a 2 km spatial resolution.  
Using this data, we can approximate the historical generation of a simulated solar system in a particular location using the following method~\cite{dobos2014pvwatts}. 
We use the optimal panel tilt angle for the location to compute the \textit{global tilted irradiance} (GTI) on the panel's surface based on the DHI, DNI, and solar elevation angle as
$\text{GTI} = \text{DNI} \cdot \sin(\text{solar elevation angle}) + \text{DHI}$.
The DC rating of a solar cell is customarily given under ``ideal'' irradiance conditions of 1000 W/m$^2$.  Given the DC rating of an entire solar system, we can compute the estimated generation as follows:
\[
\text{Generation (in kW)} = \text{DC rating} \cdot \frac{\text{GTI}}{1000 \text{W/m}^2} \cdot \textit{AC Inverter Efficiency} \cdot \left( 1 - \textit{System Losses} \right).
\]
We assume that the AC inverter is 95\% efficient and the panel-to-output system losses are $\sim$14\% -- default values from the PVWatts solar calculator~\cite{NREL:PVWatts}.
In Appendix \autoref{fig:solar-vis}, we visualize a two-week sample of the grid's carbon intensity and simulated solar generation for a 15 kW solar system. 

\paragraph{Parameter settings} Most algorithms we test accept $L$ and $U$ as parameters for their threshold functions.  To set these parameters, we examine a month's worth of grid carbon intensity values leading up to the connection time for a given charging session, setting $L$ and $U$ to be the minimum and maximum observed carbon intensity values over the past month, respectively.

First, to gauge the impact of local solar generation, we tested differently sized simulated local solar systems with DC-rated nameplate capacities of $0$, $5$, $10$, and $15$ kW.  A rating of $0$ implies that the charger can only draw current from the grid, while a larger DC rating represents greater impact from local solar (during periods when the sun is shining).  The switching cost is fixed at $\beta = 20$.
Second, to gauge the impact of the switching cost magnitude, we run a set of experiments to test different values for $\beta$ in the range $[0, 40]$.  Local solar generation is fixed at $0$ kW.  Note that when $\beta = 0$, \PrOb reduces exactly to the one-way trading problem.

Finally, to measure the impact of prediction error, we generate a spectrum of simulated black-box advice, ranging from perfect advice to completely adversarial advice, and test $\ROAdv$ for different values of $\epsilon$.  
Unlike other experiments, in these experiments, $\ROAdv$ does not use \texttt{CarbonCast} predictions.  Instead, we define a simulated advice vector $\hat{\mathbf{x}} = (1-\zeta) \mathbf{x}^\star + \zeta \breve{\mathbf{x}}$, where $\mathbf{x}^\star$ represents the decisions made by an optimal solution, $\breve{\mathbf{x}}$ represents the decisions made by a solution which maximizes the objective (rather than minimizing it), and $\zeta$ is an \textit{adversarial factor} in $[0,1]$.  When $\zeta = 0$, the simulated prediction is exactly correct ($\hat{\mathbf{x}} = \mathbf{x}^\star $), and when $\zeta = 1$, the simulated prediction is fully adversarial ($\hat{\mathbf{x}} = \breve{\mathbf{x}}$). We note that $\breve{\mathbf{x}}$ is adversarial from the perspective of the objective, although it is still a feasible solution for the problem.

\paragraph{Benchmark algorithms}
To evaluate the performance of the tested algorithms, we compute the offline optimal solution for each charging session (instance) using a numerical solver~\cite{SciPy}.  This allows us to report the empirical competitive ratio for each algorithm on each charging session.  
We compare \RORO and \ROAdv against three benchmark algorithms, which are not developed for solving \OCSmin. Still, we construct these algorithms to represent simple adaptations of existing ideas for adjacent problems and evaluate how our algorithms measure up.

The first benchmark algorithm is a \textit{carbon-agnostic} approach, which starts charging the EV at the fastest rate as soon as it is plugged in.  This carbon-agnostic approach simulates the behavior of a traditional, non-adaptive charger.  
We also compare \RORO against carbon-aware \textit{switching-cost-agnostic} algorithms adapted for \PrOb. 
We have two algorithms of this type, each drawing from existing online search methods in the literature.  
The first algorithm is a \textit{simple threshold algorithm}, which uses the $\sqrt{UL}$ threshold value first presented for the online search~\cite{ElYaniv:01}.  In our experiments, this algorithm chooses to charge the EV (at the maximum rate) whenever the carbon intensity is less than or equal to $\sqrt{UL}$.  This approach simulates the behavior of the simplest carbon-aware charger, which is not adaptive but can switch on or off in response to a carbon signal.
The other switching-cost-agnostic algorithm tested is the threshold-based one-way trading algorithm, \OWT, shown by~\cite{SunZeynali:20} and described in \autoref{sec:OTA}. This algorithm chooses to ramp to a specific charging rate $x_t$ based on a threshold function $\varphi$.  Note that when $\beta = 0$, \RORO reduces exactly to this one-way trading algorithm. In all experiments, we report the empirical competitive ratio of different algorithms, i.e., closer to 1 is a better competitive ratio, and 1 is the best. 

\begin{figure}[b]
\begin{minipage}{0.38\textwidth}
\centering \vspace{-2em}
\includegraphics[width=\linewidth]{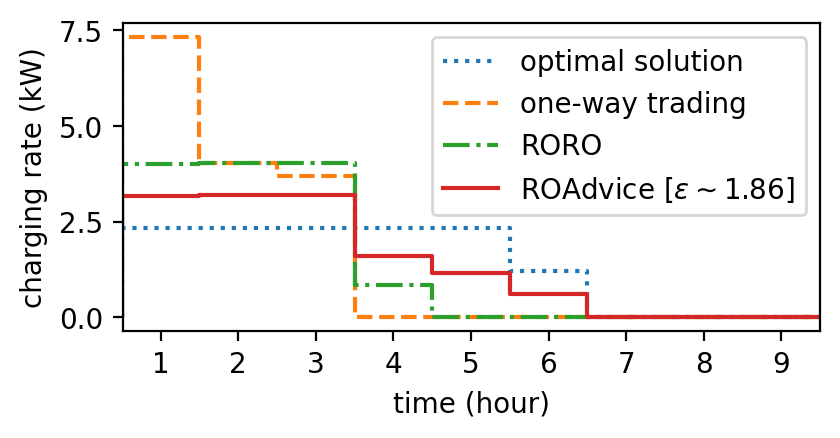}\vspace{-1.2em}
\caption{The decisions made by different algorithms and the optimal solution during a 9-hour EV charging session with 10kW of local solar capacity and $\beta = 30$.  The total energy requested is $\sim$12.9 kWh.}\label{fig:in-action}
\end{minipage}\hfill
\begin{minipage}{0.6\textwidth}
\small
\captionsetup{type=table} %
\begin{tabular}{l|ll|ll|ll}
\hline
\begin{tabular}{@{}c@{}}\footnotesize improvement \\ \footnotesize against...\end{tabular} & \multicolumn{2}{l|}{\begin{tabular}{@{}c@{}}simple \\ threshold\end{tabular}} & \multicolumn{2}{l|}{\begin{tabular}{@{}c@{}}one-way \\ trading\end{tabular}} & \multicolumn{2}{l}{\RORO} \\ \hline
                           & avg.  & 95$^\text{th}_\text{\%ile}$        & avg.        & 95$^\text{th}_\text{\%ile}$ & avg.        & 95$^\text{th}_\text{\%ile}$        \\ \hline
\RORO       &  $52.4$\%  &  $54.1$\%  &  $12.1$\%  &  $3.6$\% & -- & --  \\ \hline
\ROAdv   &  $66.4$\%  &  $73.3$\%  &  $41.4$\%  &  $46.2$\%  &  $33.4$\%  &  $44.9$\%  \\ \hline
\end{tabular} \vspace{1em}
\caption{Overall performance improvement of \ROROmin and \ROAdvmin ($\epsilon\sim 1.86$) over baseline threshold $\sqrt{UL}$ algorithm and one-way trading algorithm.  The empirical competitive ratio measures performance, and we report the average and the 95$^\text{th}$ percentile (i.e., worst-case) improvement for all experiments.}\label{tab:top-line-results}
\end{minipage}
\vspace{-0.5cm}
\end{figure}

\subsection{Experimental Results} \label{sec:eval-results}
Table~\ref{tab:top-line-results} summarizes the most notable overall results, aggregating over the first two experiments. We observe that the average empirical competitive ratio of \ROROmin is a $57.3$\% improvement on the carbon agnostic method, a $52.4$\% improvement on the simple threshold algorithm, and a $12.1$\% improvement on the one-way trading algorithm.
Across the first two experiments, \ROAdv with $\epsilon \thicksim 1.86$ and predictions from \texttt{CarbonCast} shows a $33.4$\% improvement on \ROROmin, a $69.2$\% improvement on the carbon-agnostic method, a $66.4$\% improvement on the simple threshold algorithm, and a $41.4$\% improvement on the one-way trading algorithm. 

To further motivate the inclusion of the switching cost $\beta$ in this EV charging application, in \autoref{fig:in-action}, we plot a step function of the decisions made by several algorithms on a representative 9-hour charging session from Feb 2020.  We note that the switching cost $\beta$ has a ``smoothing'' effect on the charging schedule generated by our algorithms, which are \textit{switching-aware}.  This is in contrast to the schedule generated by the one-way trading algorithm, which front loads most of the charging into the first hour of the charging session.

\begin{figure*}[t]
    \vspace{-0.5em}
    \minipage{0.8\textwidth}
    \includegraphics[width=\linewidth]{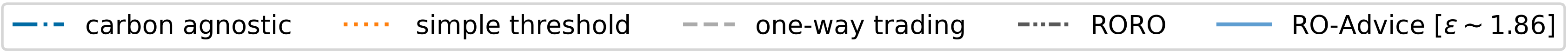}
    \endminipage\hfill\\
	\minipage{0.24\textwidth}
	\includegraphics[width=\linewidth]{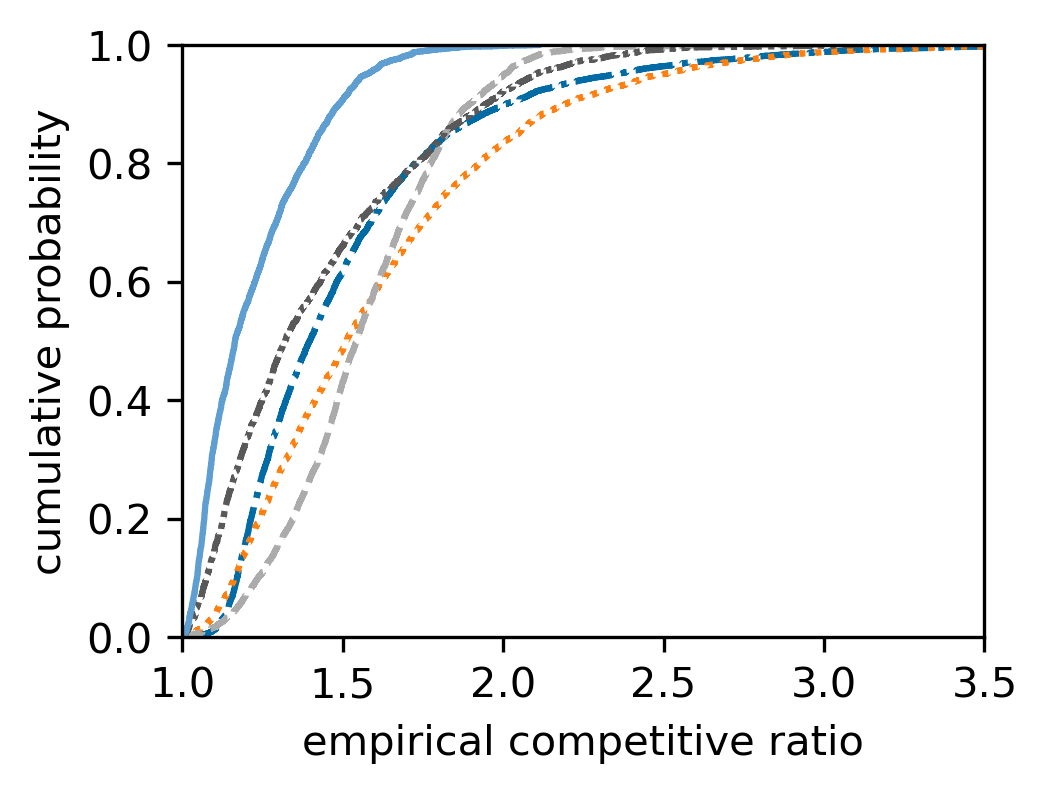}\vspace{-1.2em}
    \caption*{(a) 0 kW solar}
	\endminipage\hfill
	\minipage{0.24\textwidth}
	\includegraphics[width=\linewidth]{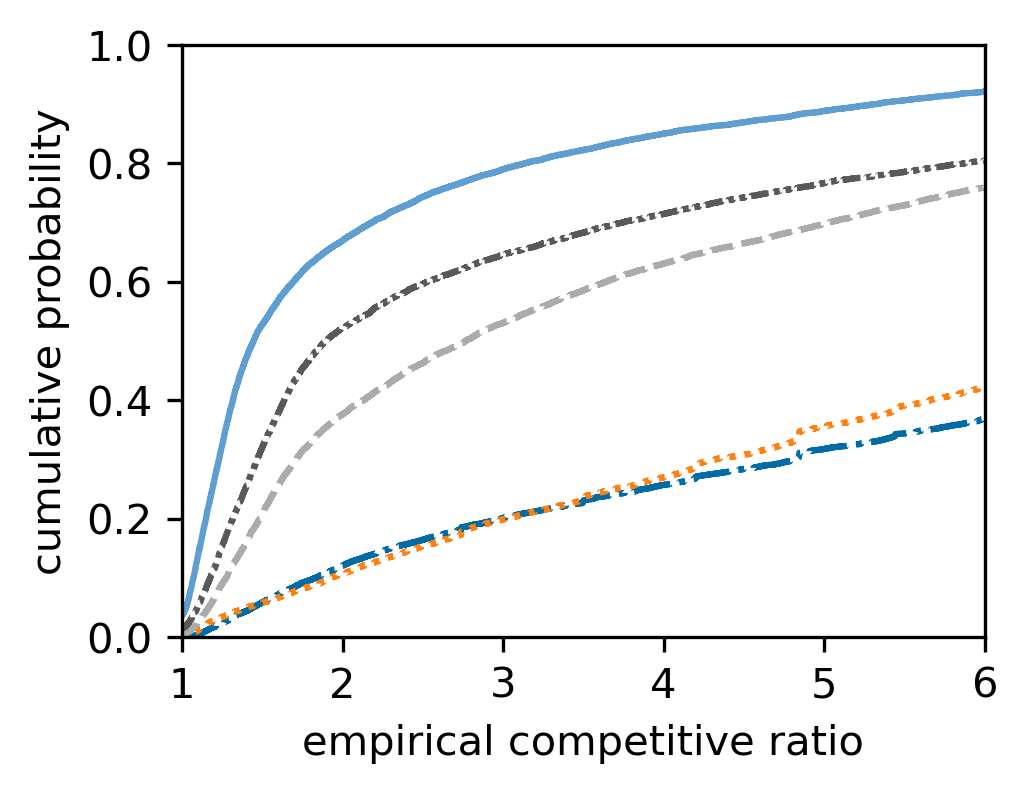}\vspace{-1.2em}
    \caption*{(b) 5 kW solar}
	\endminipage\hfill
    \minipage{0.24\textwidth}
	\includegraphics[width=\linewidth]{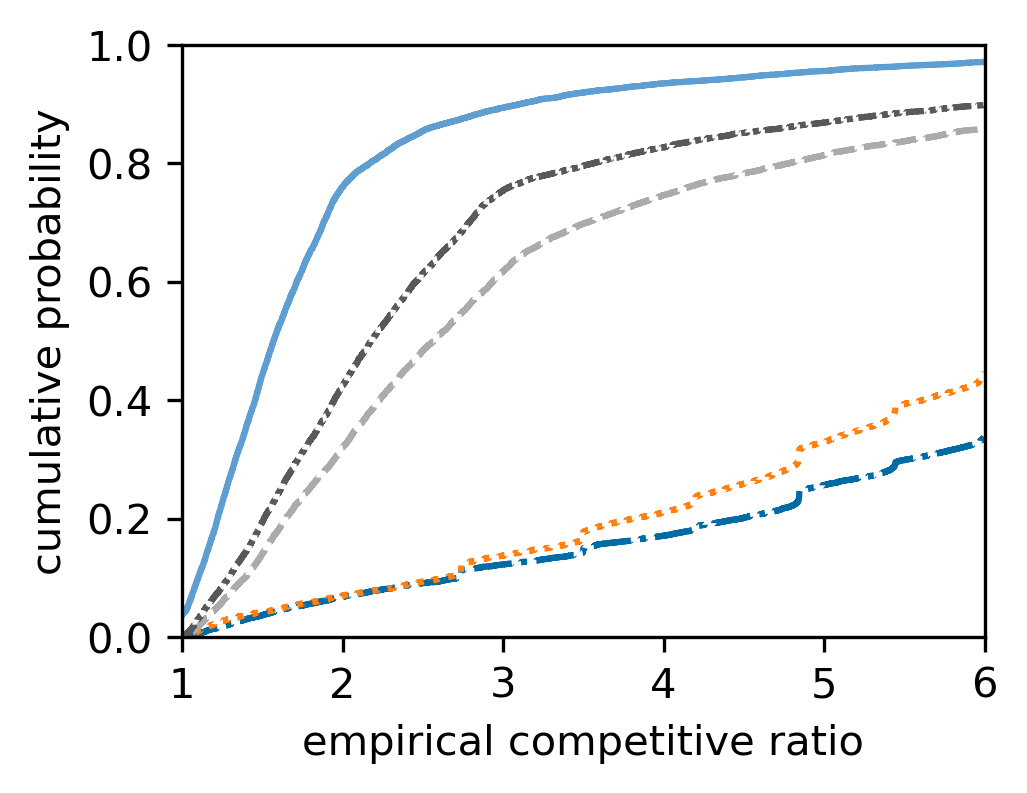}\vspace{-1.2em}
    \caption*{(c) 10 kW solar}
	\endminipage\hfill
    \minipage{0.24\textwidth}
	\includegraphics[width=\linewidth]{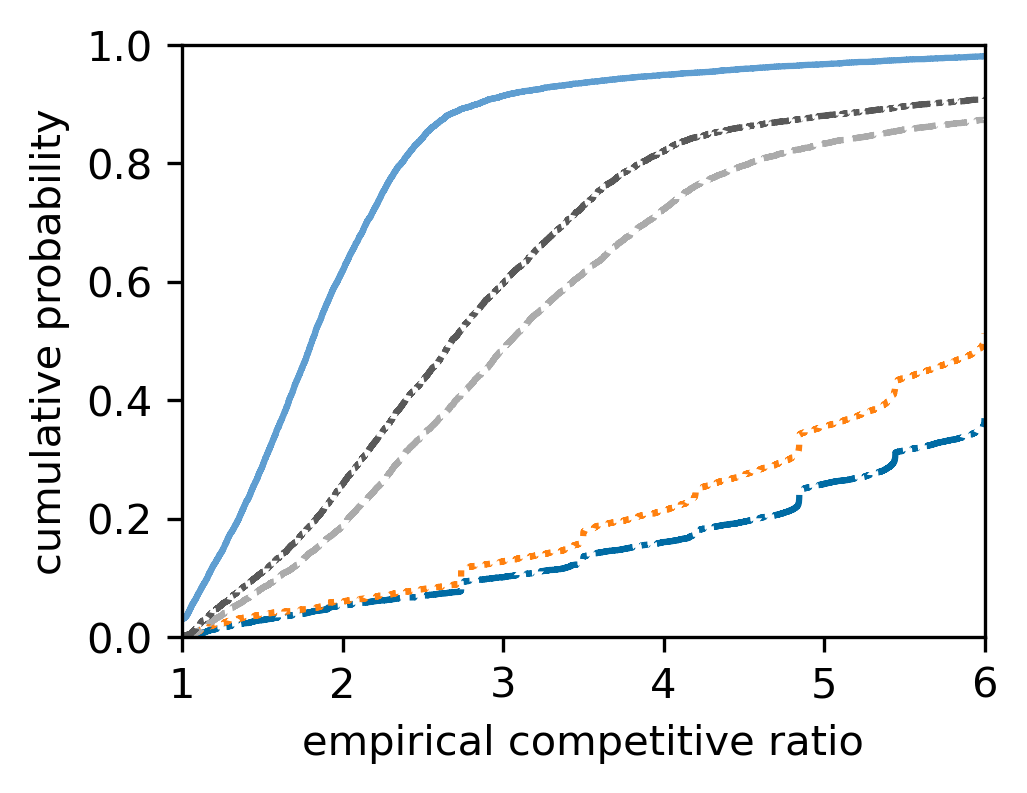}\vspace{-1.2em}
    \caption*{(d) 15 kW solar} %
	\endminipage\hfill
    \vspace{-1em}
    \caption{Cumulative distribution functions (CDFs) of empirical competitive ratios for all tested algorithms, in experiments testing the impact of \textit{differently-sized local solar generation}.  Each simulated solar system is described using the DC rating.  Switching cost is fixed $\beta = 20$, and \ROAdv uses \texttt{CarbonCast} predictions.}\label{fig:solarGen}
    \vspace{-1em}
\end{figure*}

We now proceed to explain the details of our experiments. \autoref{fig:solarGen} shows a series of cumulative distribution function (CDF) plots. In this experiment, we test all algorithms for differently sized local solar installations, ranging from $0$ kW (no solar) to $15$ kW.  The switching cost $\beta$ is fixed to $\beta = 20$, and \ROAdv uses \texttt{CarbonCast} predictions with $\epsilon \thicksim 1.86$ (this value of $\epsilon$ gives $\lambda = 0.5$).  By testing different simulated solar system ratings, this experiment gauges how the makeup of the grid can affect performance (i.e., more or less renewable volatility).  Unsurprisingly, \ROAdv performs well across the board, improving on the next closest algorithm \ROROmin by $33.4$\% in the average case and $52.4$\% in the 95th percentile, respectively.  \ROROmin (our robust algorithm without predictions) compares favorably to the other algorithms, achieving an average empirical competitive ratio which is a $57.4$\% improvement on the carbon-agnostic method, a $52.4$\% improvement on the simple threshold algorithm, and a $12.1$\% improvement on the one-way trading algorithm.

We further note that in \autoref{fig:solarGen}(a) (experiment with no local solar generation), while \ROROmin outperforms the one-way trading algorithm on average, the tail of the CDF implies that \ROROmin's empirical competitive ratio is worse on roughly ${\thicksim} 20$\% of the instances.  Since the one-way trading algorithm is \textit{switching-agnostic}, there are cases where \ROROmin is too conservative for a given sequence -- in other words, \ROROmin avoids ramping up to avoid a large switching penalty and eventually is forced to fulfill the charge at the end of the sequence (a compulsory trade). This dynamic disappears when simulated local solar generation is available because \ROROmin's conservative design matches well with the intermittent presence of local solar.

\begin{figure*}[t]
	\minipage{0.48\textwidth}
	\includegraphics[width=\linewidth]{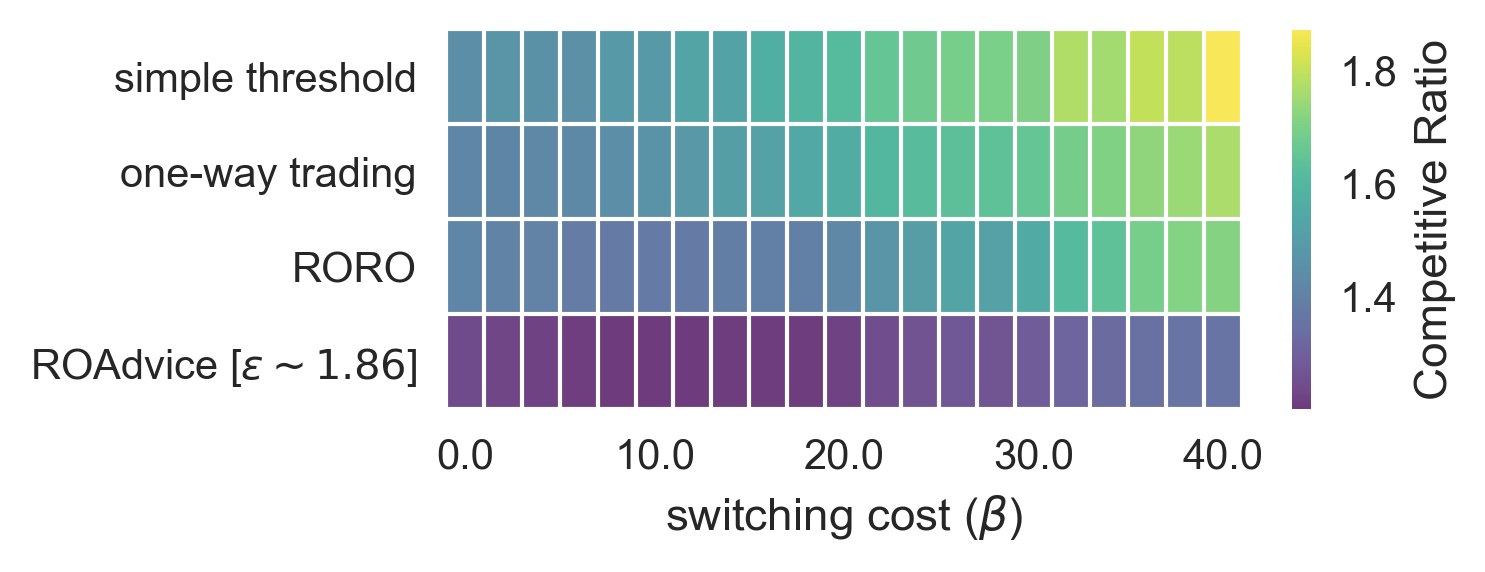}\vspace{-1.2em}
    \caption{Heat map of empirical competitive ratios for algorithms in a set of experiments testing the impact of different \textit{switching costs}.  Local solar generation is fixed to $0$, and \ROAdv uses \texttt{CarbonCast} predictions with $\epsilon \thicksim 1.86$.  Note that the simple threshold and one-way trading algorithms are \textit{switching-oblivious}, while our proposed \RORO and \ROAdv algorithms are aware of the switching cost.} \label{fig:switching}
	\endminipage\hfill
	\minipage{0.48\textwidth}
	\includegraphics[width=\linewidth]{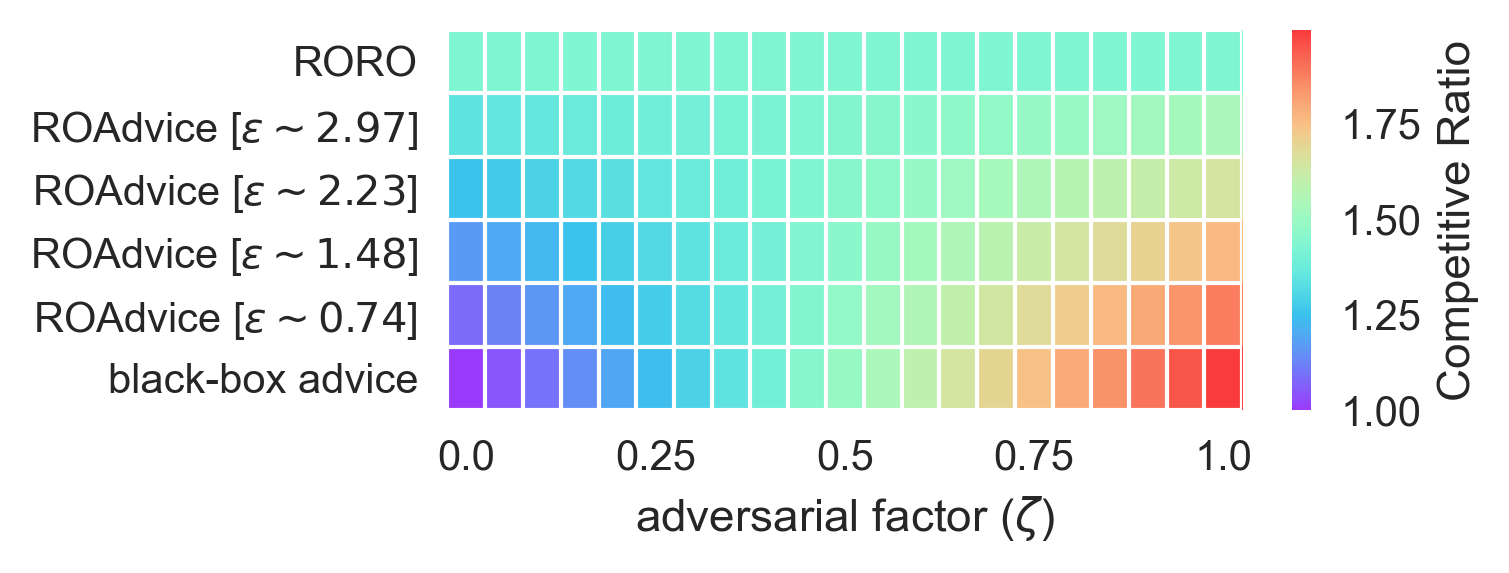}\vspace{-1.2em}
    \caption{Heat map of empirical competitive ratios for algorithms in a set of experiments testing the consistency and robustness of \ROAdvmin by simulating advice from perfect ($\zeta = 0$) to adversarial ($\zeta = 1$).  No local solar, and four version of \ROAdvmin, with $\epsilon \in [{\thicksim} 2.97, {\thicksim} 2.23, {\thicksim} 1.48, {\thicksim} 0.74]$.  The black-box advice algorithm naïvely plays the simulated advice at each time step.} \label{fig:advice}
	\endminipage\hfill
    \vspace{-1em}
\end{figure*}

In the second experiment, reported as a heat map of empirical competitive ratios in \autoref{fig:switching}, we test our proposed algorithms against the \textit{switching-agnostic} algorithms for different switching costs $\beta$ in the range from $0$ to $40$ (see Appendix \autoref{fig:switching-stddev} for a more detailed version of this plot). By testing different values for $\beta$, this experiment tests how an increasing switching cost impacts the performance of \ROROmin and \ROAdvmin with respect to other algorithms that do not consider the switching cost in their designs. We observe that as the switching cost $\beta$ increases, the empirical performance of all algorithms degrades (indicated by the luminance of the plot). Notably, comparing \ROROmin and \ROAdvmin against the simple threshold algorithm and the one-way trading algorithm, our proposed algorithms exhibit competitive ratios that degrade \textit{slower} as a function of the increasing competitive ratio, which validates our theoretical results. 
\ROAdv outperforms across the board, improving on the next closest algorithm \ROROmin by $14.3$\% in the average case and $27.9$\% in the 95th percentile, respectively.  \ROROmin also outperforms the other tested algorithms, achieving an average improvement of $11.4$\% on the simple threshold algorithm and a $8.5$\% improvement on the one-way trading algorithm.  Note that when $\beta = 0$, \ROROmin is equivalent to the one-way trading algorithm, which the experimental results confirm.

Finally, in the third experiment, we rigorously test the consistency and robustness of \ROAdvmin and include \RORO and the naïve black-box advice as baselines.  We test four different variants of \ROAdvmin, parameterized by $\epsilon \in [{\thicksim} 2.97, {\thicksim} 2.23, {\thicksim} 1.48, {\thicksim} 0.74]$ (these values of $\epsilon$ correspond to $\lambda$ parameters $[0.2, 0.4, 0.6, 0.8]$).  As described in \autoref{sec:expsetup}, \ROAdvmin and the naïve black-box advice receive simulated advice as input, which is manipulated according to an adversarial factor $\zeta$ (where $\zeta = 0$ is perfect advice and $\zeta = 1$ is fully adversarial advice).
We test different values of $\zeta$ in the range $[0,1]$ and visualize these results as a heat map in \autoref{fig:advice} (see Appendix \autoref{fig:advice-stddev} for a more detailed version).  As discussed in the theoretical results, we observe opposing dynamics on either side of the heat map.  When $\zeta = 0$, setting $\epsilon \to 0$ yields increasingly better performance as the (accurate) advice is ``trusted'' more.  On the other hand, when $\zeta = 1$, setting $\epsilon \to 1$ can quickly increase the competitive ratio well beyond what is achieved by the robust \RORO algorithm since trusting a prediction that is actually incorrect or even adversarial can significantly degrade performance.  Importantly, \autoref{fig:advice} shows that, depending on the desired $\epsilon$, \ROAdvmin maintains a competitive ratio bounded away from the naïve worst-case regardless of the prediction quality, while significantly improving the average-case performance when predictions are correct or nearly correct, showing that it can achieve the ``best of both worlds''.

\section{Conclusion}
\label{sec:conclusion}
Motivated by sustainability applications such as carbon-aware EV charging, we introduce and study online conversion with switching costs (\OCS), which bridges gaps between several existing online problems. 
It is the first online optimization problem to simultaneously capture a continuous decision space, long-term constraints, and switching costs.  Our main results introduce and analyze the \RORO framework, which achieves the optimal deterministic competitive ratio for the problem.  We also consider designing learning-augmented algorithms in this setting, first showing that existing advice models for related online search problems are insufficient to achieve improved consistency.  We prove an advice complexity lower bound, which shows that advice growing linearly in the length of the sequence is necessary. This result motivates our design of learning-augmented algorithms that take advantage of untrusted black-box advice (such as ML-based predictions of decisions), which achieve bounded consistency and robustness.  Finally, we empirically evaluate our proposed algorithms using a carbon-aware EV charging case study.

There are a number of interesting directions in which to continue the study of \OCS.  In particular, there are several open theoretical questions.  Considering a target application of carbon-aware load shifting, some workloads can be shifted both spatially and temporally~\cite{Sukprasert:23, radovanovic2022carbon, bashir2021enabling}, which would require a generalization to a multidimensional setting where cost functions and decisions are in $\mathbb{R}^d$ (e.g., each dimension corresponds to a location).  Such a generalization is highly non-trivial, as there are no existing works that consider online search-type deadline constraints in a multidimensional setting, and the threshold-based algorithm design we generalize in this work has not historically been applied to multidimensional problems.  As this paper considers a problem that bridges the gap between online search problems and a special case of an online metric problem (convex function chasing in $\mathbb{R}$), any multidimensional generalization naturally prompts further questions about possible extensions of other online metric problems, such as metrical task systems (\texttt{MTS}).  These questions are substantial additions that would be very interesting to explore in future work.

\begin{acks}
We thank our shepherd Arpan Mukhopadhyay and the anonymous SIGMETRICS / Performance reviewers for their valuable insight and feedback.

This research is supported by National Science Foundation grants CAREER-2045641, CNS-2102963, CNS-2106299, CNS-2146814, CNS-1518941, CPS-2136197, CPS-2136199, NGSDI-2105494, NGSDI-2105648, 1908298, 2020888, 2021693, 2045641, 2213636, and 2211888.

This material is based upon work supported by the U.S. Department of Energy, Office of Science, Office of Advanced Scientific Computing Research, Department of Energy Computational Science Graduate Fellowship under Award Number DE-SC0024386, and an NSF Graduate Research Fellowship (DGE-1745301).
\end{acks}

\section*{Disclaimers}
This report was prepared as an account of work sponsored by an agency of the United States Government. Neither the United States Government nor any agency thereof, nor any of their employees, makes any warranty, express or implied, or assumes any legal liability or responsibility for the accuracy, completeness, or usefulness of any information, apparatus, product, or process disclosed, or represents that its use would not infringe privately owned rights. Reference herein to any specific commercial product, process, or service by trade name, trademark, manufacturer, or otherwise does not necessarily constitute or imply its endorsement, recommendation, or favoring by the United States Government or any agency thereof. The views and opinions of authors expressed herein do not necessarily state or reflect those of the United States Government or any agency thereof.

\bibliographystyle{ACM-Reference-Format}
\bibliography{main}
\clearpage

\appendix
\section*{Appendix}

\renewcommand{\thefigure}{A\arabic{figure}}
\renewcommand{\thetable}{A\arabic{table}}
\renewcommand{\thealgorithm}{A\arabic{algorithm}}
\setcounter{figure}{0}
\setcounter{table}{0}
\setcounter{algorithm}{0}

\begin{table}[h]
	\caption{A summary of key notations }
 \vspace{-3mm}
	\label{tab:notations}
 	\small
	\begin{center}
		\begin{tabular}[P]{|c|l|}
            \hline
            \textbf{Notation} & \textbf{Description} \\
            \hline
            $C = 1$ & Total item to be bought (or sold) \\
			\hline
            $T$ & Deadline constraint; the player must finish purchasing (or selling) before $T$ \\
            \hline
            $t \in [T]$ & Current time step \\
            \hline
            $d_t : t \in [T]$ & Rate constraint at time step $t$ \\
            \hline
            $x_t \in [0, d_t]$ & Decision at time $t$. \\
			\hline
            $\beta$ & Switching cost coefficient of the cost incurred from decision change $\lvert x_t - x_{t-1} \rvert$ \\
			\hline
            $U$ & Upper bound on any partial derivative of $g_t(\cdot)$ that will be encountered  \\
            \hline
            $L$ & Lower bound on any partial derivative of $g_t(\cdot)$ that will be encountered \\
            \hline
            \hline
			$g_t(\cdot)$ & (\textit{Online input}) Cost/price function revealed to the player at time $t$ \\
			\hline
		\end{tabular}
	\end{center}
\end{table}

\begin{figure*}[h]
    \minipage{\textwidth}
     \vspace{-3mm}
    \includegraphics[width=\linewidth]{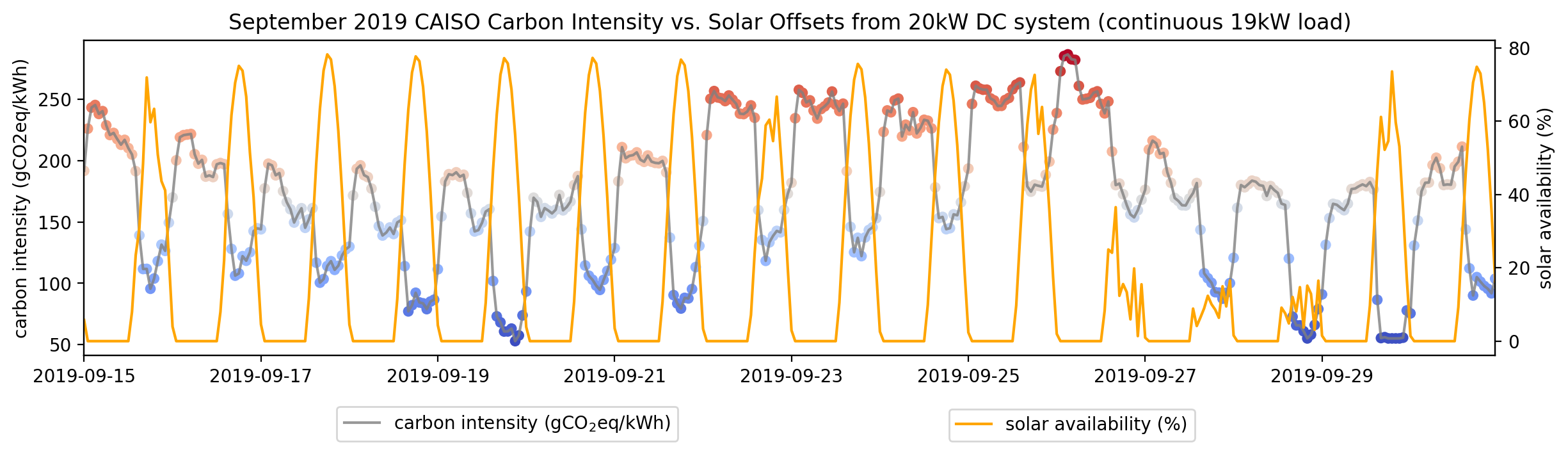}
    \endminipage\hfill\\
    \vspace{-1em}
    \caption{ (see \autoref{sec:eval}) Grid carbon intensity (in gCO$_2$eq/kWh) and simulated solar generation for the Caltech campus plotted over a two-week period in Sep. 2019, with one-hour granularity.  The left-hand $y$-axis, scatter dots, and gray line correspond to the carbon intensity on the CAISO grid provided by Electricity Maps~\cite{electricity-map}.  The right-hand $y$ axis and the orange line give the simulated generation of a 15 kWh (DC nameplate rating) solar system, with irradiation data provided by the NSRDB data set~\cite{Sengupta:18:NSRDB}.  Solar generation is expressed as a percentage with respect to a 19 kW continuous load, corresponding to the Level 2 AC charger simulated in our case study experiments (see \autoref{sec:expsetup} for more setup details).}
    \label{fig:solar-vis}
\end{figure*}

\clearpage

\begin{figure*}[t]
	\minipage{0.45\textwidth}
    \centering
	\includegraphics[width=\linewidth]{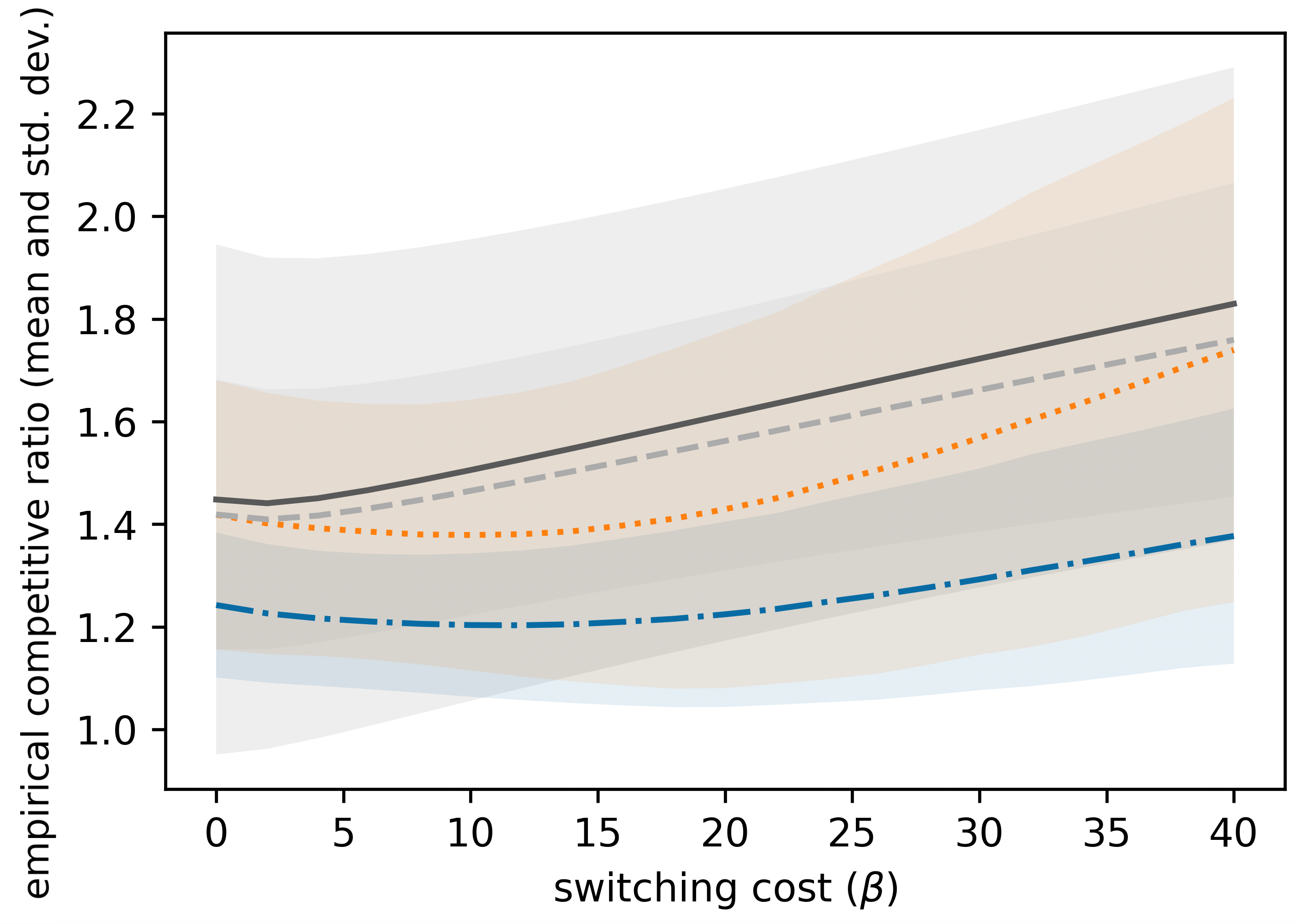}
    \includegraphics[width=0.8\linewidth]{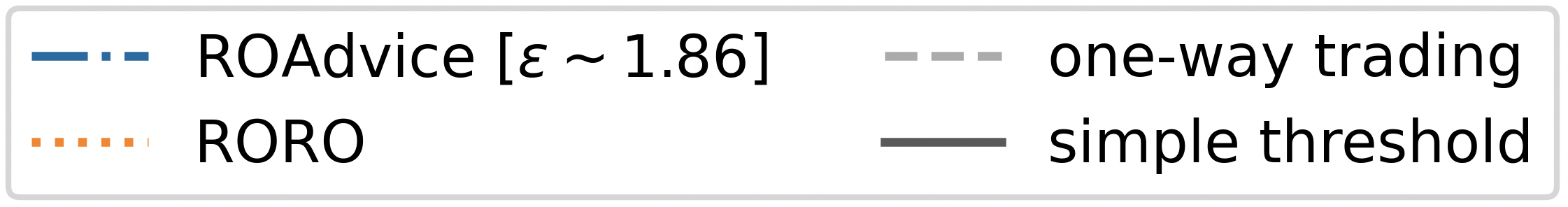}\vspace{-0.3em}
    \caption{ (see \autoref{sec:eval-results}) Average empirical competitive ratios and their standard deviations for experiments testing the impact of \textit{switching cost magnitude}.  Local solar generation is fixed to $0$, and \ROAdv uses \texttt{CarbonCast} predictions with $\epsilon \thicksim 1.86$.  Note that the simple threshold and one-way trading algorithms are \textit{switching-oblivious}, while our proposed \RORO and \ROAdv algorithms are aware of the switching cost.} \label{fig:switching-stddev}
	\endminipage\hfill
	\minipage{0.48\textwidth}
    \centering
	\includegraphics[width=\linewidth]{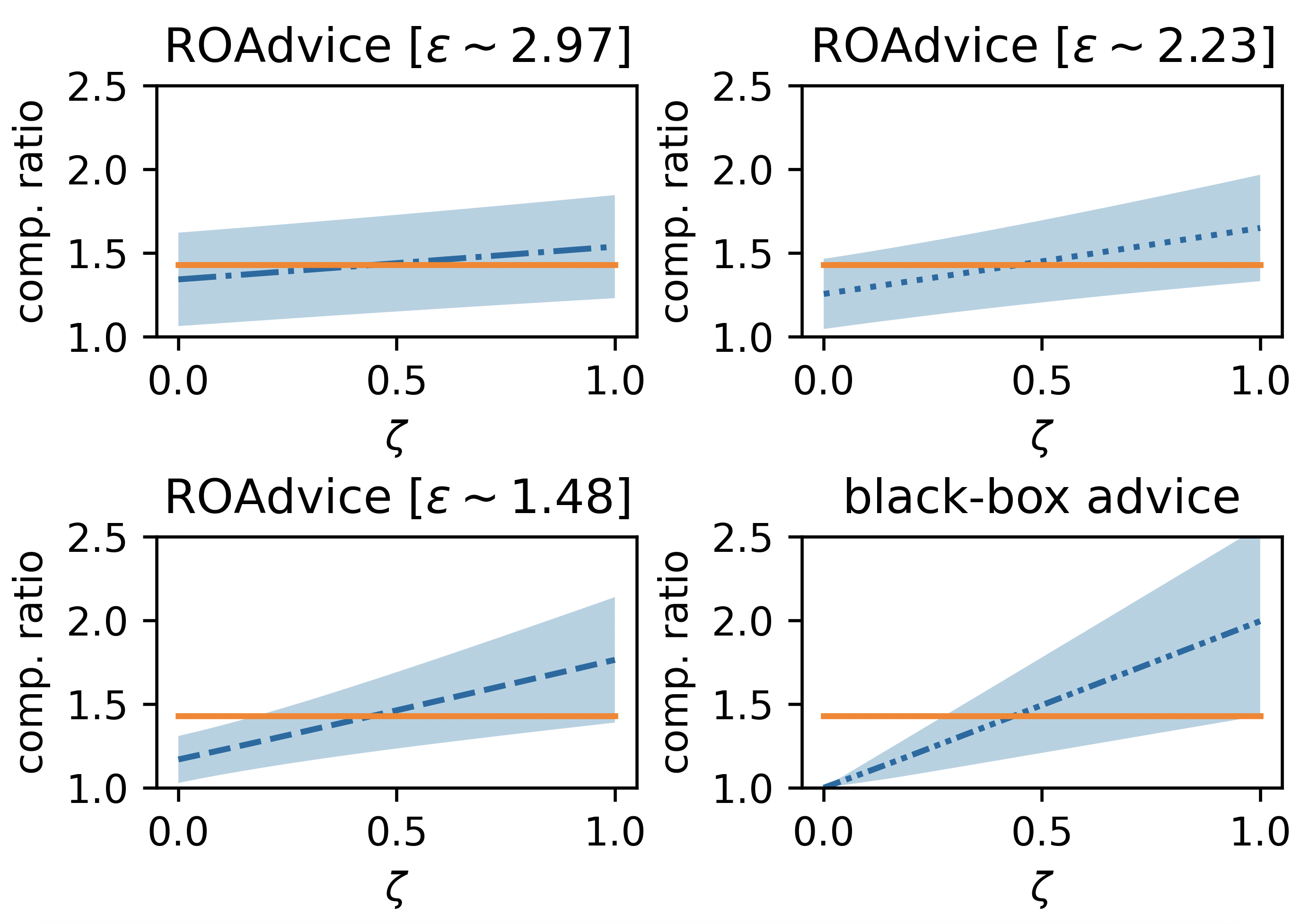}
    \includegraphics[width=0.6\linewidth]{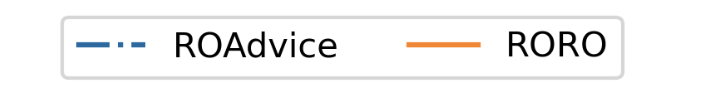}\vspace{-0.5em}
    \caption{ (see \autoref{sec:eval-results}) Average empirical competitive ratios and their standard deviations for experiments testing the consistency and robustness of \ROAdvmin by simulating advice from perfect ($\zeta = 0$) to adversarial ($\zeta = 1$).  No local solar, and three versions of \ROAdvmin, with $\epsilon \in [{\thicksim} 2.97, {\thicksim} 2.23, {\thicksim} 1.48]$.  The black-box advice algorithm naïvely plays the simulated advice at each time step.  \RORO (orange solid line) is a benchmark.} \label{fig:advice-stddev}
	\endminipage\hfill
\end{figure*}

\section{Deferred Pseudocode from Section~\ref{sec:roro}} 

\begin{algorithm}[h]
	\caption{Online Ramp-On, Ramp-Off (\RORO) framework instantiated for the online pause \& resume problem (\OPRmin)}
	\label{alg:warmup}
	\begin{algorithmic}[1]
		\State \textbf{input:} threshold $\{\phi_i\}_{i\in[k]}$, switching cost $\beta$;
        \State \textbf{initialization:} initial decision $x_0 = 0$, $i = 1$;
		\While{price $c_t$ is revealed and $i < k$}
		\State solve a \textbf{(ramping-on problem)} and obtain decision $x_t^+$ and its pseudo cost $r_t^+$, 
        \begin{align}
            {x}_t^+ &= \argmin_{x_t \in \{0,1\}, \ x_t\ge x_{t-1}} c_t x_t + \beta (x_t - x_{t-1}) - \phi_{i} x_t;\\
            r_t^+ &= c_t x_t^+ + \beta (x_t^+ - x_{t-1}) - \phi_{i}x_t^+;
        \end{align}
        \State solve a \textbf{(ramping-off problem)} and obtain decision $x_t^-$ and its pseudo cost $r_t^-$, 
        \begin{align}
            {x}_t^- &= \argmin_{x_t \in \{0,1\}, \ x_t \le x_{t-1}} c_t x_t + \beta (x_{t-1} - x_{n}) - \phi_{i} x_t;\\
            r_t^- &= c_t x_t^- + \beta (x_{t-1} - x_t^-) - \phi_{i}x_t^-;
        \end{align}
        \State \textbf{if } $r^+_t \le r^-_t$ \textbf{ then } set $x_t = x_t^+$ \textbf{ else } set $x_t = x_t^-$;
	   \State update $i \leftarrow i + x_t$;
		\EndWhile
	\end{algorithmic}
\end{algorithm}

\clearpage

\section{Proofs}

We now prove several results described in the main body.  

{\color{blue}
In \autoref{apx:convexproofmin}, we show that the ramping-on and ramping-off optimization problems used in the \RORO instantiations for \OCSmin can be efficiently solved using iterative convex optimization techniques.

\smallskip

In \autoref{apx:comp-proofs-min}, we prove the competitive upper bound for \ROROmin (\autoref{thm:roromin}).  
In \autoref{sec:lb-proof-min}, we provide the proof of the information-theoretic lower bound for \OCSmin (\autoref{thm:lowerboundmin}), which subsequently proves that \ROROmin is optimal.  

\smallskip

In \autoref{sec:advice-comp}, we prove the $\Omega(T)$ advice complexity lower bound for \OCS presented in \autoref{thm:advicecomplexity}.
In \autoref{sec:advice-proofs-min}, we prove the consistency and robustness bounds for \ROAdvmin (\autoref{thm:advice-consist-robust-min}).
}

{\color{blue}

\subsection{Efficiently solving the ramping-on and ramping-off problems for \ROROmin} \label{apx:convexproofmin}

Recall the ramping-on and ramping-off problems given by \eqref{eq:rampon} and \eqref{eq:rampoff}, respectively.  First, we note that because the primary difference between the two problems is the restriction on the decision space, if we are only interested in the actual online decision $x_t$, it is valid to merge these problems and consider a single optimization problem as follows:
\begin{align*}
    \textsc{RampOnRampOff}(g_t( \cdot ), w^{(t-1)}, x_{t-1}) &= \quad \ \ \argmin_{\mathclap{x \in [0, \min (1- w^{(t-1)}, d_t) ]}} \quad \ \ g_t( x ) + \beta \lvert x - x_{t-1} \rvert - \int_{w^{(t-1)}}^{w^{(t-1)} + x}\phi(u) du.
\end{align*}

\noindent Let us define $f_t(x) : t \in [T]$ as the right-hand side of the combined ramping-on and ramping-off minimization problem defined above:
\begin{align}
    f_t(x) = g_t( x ) + \beta \lvert x - x_{t-1} \rvert - \int_{w^{(t-1)}}^{w^{(t-1)} + x}\phi(u) du.
\end{align}

\begin{thm}
    Under the assumptions of \OCSmin, $f_t(x)$ is convex on the entire interval $x\in [0, 1]$ for any $t \in [T]$.
\end{thm}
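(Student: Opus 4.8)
The plan is to decompose $f_t$ into a sum of three terms and show each is convex on $[0,1]$; since convexity is preserved under finite sums, this settles the claim. Writing $\Phi(x) := \int_{w^{(t-1)}}^{w^{(t-1)}+x}\phi(u)\,du$, we have $f_t(x) = g_t(x) + \beta\lvert x - x_{t-1}\rvert - \Phi(x)$. The term $g_t$ is convex by the standing assumption on cost functions in \OCSmin; the term $\beta\lvert x - x_{t-1}\rvert$ is convex as a nonnegative scalar multiple of the absolute value composed with the affine map $x \mapsto x - x_{t-1}$. Hence the only real work is to show that $-\Phi$ is convex on $[0,1]$.

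For $-\Phi$, I would use the fundamental theorem of calculus: $\Phi$ is differentiable with $\Phi'(x) = \phi(w^{(t-1)}+x)$, so $-\Phi$ is convex on $[0,1]$ precisely when $\Phi'$ is nonincreasing there, i.e. when $\phi$ is nonincreasing. From Def.~\ref{dfn:phi-min}, $\phi(w) = U - \beta + A\,\exp(\nicefrac{w}{\alpha})$ with $A := \nicefrac{U}{\alpha} - U + 2\beta$, so $\phi'(w) = \nicefrac{A}{\alpha}\,\exp(\nicefrac{w}{\alpha})$, whose sign is that of $A$ (as $\alpha > 0$ and the exponential is positive). To fix $\mathrm{sign}(A)$, I would invoke the identity $\phi(1) - \beta = L$ recorded just after Def.~\ref{dfn:phi-min}: evaluating $\phi$ at $w = 1$ gives $A\,\exp(\nicefrac{1}{\alpha}) = \phi(1) - (U - \beta) = L + 2\beta - U$, and the standing assumption $\beta \in (0, \nicefrac{U-L}{2})$ gives $L + 2\beta - U < 0$; since $\exp(\nicefrac{1}{\alpha}) > 0$, this forces $A < 0$. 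Therefore $\phi' < 0$ everywhere, so $\phi$ is strictly decreasing and $-\Phi$ is (strictly) convex on $[0,1]$.

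Combining the three parts, $f_t$ is a sum of functions convex on $[0,1]$ and hence convex there for every $t \in [T]$; in particular the merged \textsc{RampOnRampOff} objective, and its two restrictions defining the ramping-on and ramping-off problems, are convex programs that can be solved by standard iterative methods. The only step carrying any subtlety is pinning down $\mathrm{sign}(A)$ — equivalently, showing the threshold $\phi$ is decreasing; the argument above via $\phi(1) - \beta = L$ avoids manipulating the closed form in \eqref{eq:alpha}, though one could alternatively verify $A < 0$ by showing the argument of the Lambert-$W$ term in \eqref{eq:alpha} lies in $(-\nicefrac{1}{e}, 0)$ under the bound on $\beta$, so that $W(\cdot) < 0$.
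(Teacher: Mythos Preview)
Your proof is correct and follows essentially the same decomposition as the paper: split $f_t$ into $g_t$, the switching term, and $-\int\phi$, show each is convex, and use closure under sums. The only difference is that you actually \emph{prove} $\phi$ is decreasing (via the sign of $A = U/\alpha - U + 2\beta$ and the identity $\phi(1)-\beta = L$), whereas the paper simply asserts this monotonicity; your version is thus slightly more self-contained.
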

\begin{proof}

We prove the above statement by contradiction.

By definition, the sum of two convex functions gives a convex function.  
First, note that the switching cost term can be equivalently defined in terms of the $\ell$1 norm as follows: $\beta \lVert x - x_{t-1} \rVert_1$.  By definition and by observing that $x_{t-1}$ is fixed, $\beta \lVert x - x_{t-1} \rVert_1$ is convex.  We have also assumed as part of the \OCSmin problem setting that each $g_t( x )$ is convex.  Thus, $g_t(x) + \beta \lVert x - x_{t-1} \rVert_1$ must be convex.

We turn our attention to the term $- \int_{w^{(t-1)}}^{w^{(t-1)} + x}\phi(u) du$.  Let $k(x) = \int_{w^{(t-1)}}^{w^{(t-1)} + x }\phi(u) du$.  

By the fundamental theorem of calculus, $\nicefrac{d}{dx} \ k(x) = \phi( z^{(t-1)} + x ) \cdot \nicefrac{d}{dx} \ x = \phi( z^{(t-1)} + x )$

Let $p(x) = \phi( z^{(t-1)} + x )$.  Then $\nicefrac{d^2}{d^2 x} \ k(x) = p'(x) \cdot \nicefrac{d}{dx} x  $.  Since $\phi$ is monotonically decreasing on the interval $[0,1]$, we know that $ p'(x) < 0$, and thus $\nicefrac{d}{dx} x \cdot p'(x)$ is negative.  This gives that $k(x)$ is concave in $x$.

Since the negation of a concave function is convex, this causes a contradiction, because the sum of two convex functions is a convex function.  Note that $\left( g_t(x) + \beta \lVert x - x_{t-1} \rVert_1\right)$ and $\left( - \int_{w^{(t-1)}}^{w^{(t-1)} + x}\phi(u) du \right)$ are both convex.

Thus, $f_t(\cdot) = g_t( x ) + \beta \lvert x - x_{t-1} \rvert - \int_{w^{(t-1)}}^{w^{(t-1)} + x}\phi(u) du$ is always convex under the assumptions of the problem setting.
\end{proof}

Since the right-hand side of the combined ramping-on and ramping-off minimization problem is convex, \eqref{eq:rampon} and \eqref{eq:rampoff} can be solved efficiently using an iterative convex optimization method.
}

\subsection{Competitive results for \texttt{RORO-min}} \label{apx:comp-proofs-min}

In the following, we prove \autoref{thm:roromin}, which states that the instantiation of \RORO for \OCSmin is $\alpha$-competitive, where $\alpha$ is as defined in \eqref{eq:alpha}.

\begin{proof}[Proof of \autoref{thm:roromin}]

Let $\mathcal{I} \in \Omega$ denote any valid \OCSmin sequence, and
let $w^{(j)}$ denote \ROROmin's final utilization before the compulsory trading, which begins at time step $j \leq T$. Note that $w^{(t)} = \sum_{m\in[t]} x_m$ is non-decreasing over $t$.

\begin{lem}
\label{lem:opt-lb}
The offline optimum is lower bounded by $\emph{\OPT}(\mathcal{I}) \ge \phi(w^{(j)}) - \beta$.
\end{lem}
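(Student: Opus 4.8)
The plan is to show that the best (smallest) marginal cost seen anywhere in the sequence $\mathcal{I}$ is lower bounded by $\phi(w^{(j)}) - \beta$, and then to leverage this to bound $\OPT(\mathcal{I})$ from below. The key observation is that \ROROmin's final pre-compulsory utilization $w^{(j)}$ is determined entirely by which cost functions were ``accepted'' (i.e., induced a positive purchase) by the ramping-on/ramping-off subroutines, and the threshold $\phi$ is monotonically decreasing, so $w^{(j)}$ being small forces every accepted marginal price to have been large.

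First I would argue by contradiction: suppose some cost function $g_t$ in the sequence had a marginal cost (derivative) strictly below $\phi(w^{(j)}) - \beta$ on some sub-interval of $[0, d_t]$ that \ROROmin did not fully exploit. I would then examine the ramping-on optimization \eqref{eq:rampon} at that time step $t$. Because the pseudo-cost of a small additional purchase $dx$ beyond $x_{t-1}$ is approximately $\big(g_t'(x_{t-1}^+) + \beta - \phi(w^{(t-1)} + \cdot)\big)\,dx$, and since $w^{(t-1)} \le w^{(j)}$ implies $\phi(w^{(t-1)}) \ge \phi(w^{(j)})$ by monotonicity of $\phi$, a marginal price below $\phi(w^{(j)}) - \beta \le \phi(w^{(t-1)}) - \beta$ would make the integrand of the pseudo-cost negative at $x_{t-1}$. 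Hence the minimizer $x_t^+$ of the ramping-on problem would strictly increase the purchase, contradicting the assertion that the utilization stopped at $w^{(j)}$ (more carefully: it would force $w^{(j)}$ to be larger than claimed, or the purchase at time $t$ to be larger). I would need to handle the boundary carefully — the marginal argument should be phrased so that if \emph{any} feasible increment from $x_{t-1}$ has negative pseudo-cost, the optimal ramping-on decision takes a positive step, which together with the definition of $w^{(j)}$ as the \emph{final} utilization yields the contradiction.

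Having established that every marginal price in $\mathcal{I}$ is at least $\phi(w^{(j)}) - \beta$, I would translate this into a lower bound on $\OPT$. Since $\OPT$ must also purchase the entire unit of asset (the deadline constraint $\sum_t x_t = 1$), and every unit it buys costs at least $\phi(w^{(j)}) - \beta$ in marginal purchasing cost (using convexity of each $g_t$, so that the derivative lower bound transfers to an average-cost lower bound), the total purchasing cost of $\OPT$ is at least $\phi(w^{(j)}) - \beta$. Since switching costs are nonnegative, $\OPT(\mathcal{I}) \ge \phi(w^{(j)}) - \beta$, as desired. One subtlety: the claim ``every marginal price is $\ge \phi(w^{(j)}) - \beta$'' must hold over the \emph{entire} range $[0, d_t]$ of each accepted function, not just near $x_{t-1}$; convexity of $g_t$ ensures the derivative is smallest at $0$, so it suffices to control $g_t'(0)$, and the compulsory-trade functions at the end of the sequence are exactly the ones excluded from this utilization-based argument but they don't affect the lower bound on $\OPT$'s purchasing cost.

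The main obstacle I anticipate is making the marginal/contradiction argument rigorous when the cost functions are genuinely nonlinear and the rate constraints $d_t$ bind — in particular, ensuring that ``\ROROmin would have bought more'' is correctly derived from first-order optimality of the convex ramping-on problem rather than hand-waved, and correctly accounting for the $\beta$ term that appears because ramping on incurs a switching penalty $\beta(x - x_{t-1})$ which must be offset by the threshold integral. The cleanest route is probably to invoke the convexity of $f_t$ (established in \autoref{apx:convexproofmin}) and characterize $x_t^+$ via the condition that its right derivative of $f_t$ is nonnegative, then read off the price bound directly from that stationarity condition evaluated at $w^{(j)}$.
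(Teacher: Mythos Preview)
Your proposal is correct and follows essentially the same argument as the paper: both proceed by contradiction, assuming some cost function has derivative at $0$ strictly below $\phi(w^{(j)}) - \beta$, and then use monotonicity of $\phi$ together with the optimality condition of the ramping-on subproblem to conclude that \ROROmin would have purchased enough at that step to push the utilization beyond $w^{(j)}$, contradicting the definition of $w^{(j)}$. Your treatment is in fact a bit more careful about the convexity/first-order details and the role of $g_t'(0)$ than the paper's own proof, but the route and the key idea are the same.
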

\paragraph{Proof of Lemma~\ref{lem:opt-lb}.} We prove this lemma by contradiction. Note that the offline optimum (ignoring additional switching costs) is to trade all items at the best cost function over the sequence $\{g_t ( \cdot ) \}_{t\in[T]}$. 

{\color{blue}
Suppose this best cost function is at an arbitrary step $m$ ($m\in[T], \ m \leq j$), denoted by $g_m (\cdot)$.  Since cost functions are convex and additionally satisfy the conditions $g_m(0) = 0$ and $g_m(x) \ge 0 \ \forall x \in [0,1]$, the derivative of $g_m$ at $x = 0$ is a lower bound on the best marginal cost (i.e. per unit of capacity) that the optimal solution can obtain.  We henceforth denote this derivative by $\frac{d g_m}{d x}$.  Assume for the sake of contradiction that $\frac{d g_m}{d x} = \OPT(\mathcal{I}) < \phi(w^{(j)}) - \beta$.  

Next, we consider $\frac{d g_m}{d x} - \beta < \phi(w^{(j)}) - 2\beta$.  Since $\phi(z)$ is strictly decreasing on $z \in [0,1]$, it follows that $\phi(w^{(j)}) - 2\beta \le \phi(w^{(m)})$, as $m \leq j$.  By solving the ramping-off problem, we have $x_m^- = x_{m-1}$. Thus, the online solution of step $m$ is dominated by $x_m^+$, i.e., $x_m = x_m^+$.

Furthermore, since $\frac{d g_m}{d x} < \phi(w^{(j)}) - \beta$ and $\phi(w^{(j)}) - \beta < \phi(w^{(m)}) - \beta$ as previously, by solving the ramping-on problem, we must have that the resulting decision satisfies $x_m^+ > w^{(j)} - w^{(m-1)}$.  This implies that $w^{(m)} > w^{(j)}$, which, given that $w^{(t)}$ is non-decreasing in $t \in [T]$, contradicts with the assumption that the final utilization before the compulsory trade is $w^{(j)}$. 

Thus, we conclude that $\OPT(\mathcal{I}) \geq \phi(w^{(j)}) - \beta$.

}

\begin{lem}
\label{lem:alg-ub}
The cost of $\ROROmin(\mathcal{I})$ is upper bounded by
\begin{align}
    \emph{\ROROmin}(\mathcal{I}) \le \int_{0}^{w^{(j)}} \phi(u) du + \beta w^{(j)} + (1- w^{(j)}) U.
\end{align}
\end{lem}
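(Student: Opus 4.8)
The plan is to decompose $\ROROmin(\mathcal{I})$ into its purchasing-cost part and its switching-cost part, and bound each separately using the structure of the \RORO instantiation. First I would isolate the switching cost. Since \ROROmin's decisions $\{x_t\}$ are obtained from a ramping-on/ramping-off split, the sequence $w^{(t)} = \sum_{m \le t} x_m$ is monotone non-decreasing up to $w^{(j)}$ (the utilization before the compulsory trade), and the compulsory trade then raises it to $1$. The total switching cost is $\sum_{t=1}^{T+1}\beta|x_t - x_{t-1}|$; because the decision sequence first ramps up to some peak and then ramps down to $0$ (with the compulsory trade contributing at most $(1-w^{(j)})$ more to be purchased and later released), a worst-case telescoping argument bounds the switching cost by $2\beta \cdot (\text{peak decision}) \le$ a term controlled by $\beta w^{(j)}$ plus the $U$-priced compulsory-trade contribution. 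I would make this precise by noting that the peak value is at most $1$ and, more carefully, that the increments $x_t - x_{t-1}$ during the ramp-up phase sum (in absolute value) to the peak, as do the decrements during ramp-down; pairing these against the utilization actually accumulated gives the $\beta w^{(j)}$ contribution, while the residual $(1-w^{(j)})$ purchased during the compulsory trade is charged at the worst rate $U$ (and its associated switching cost is absorbed into the same bound).

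Next I would bound the purchasing cost $\sum_{t} g_t(x_t)$. The key device is the pseudo-cost: by the selection rule in \autoref{alg:roro}, whenever \ROROmin makes a nonzero purchase $x_t$ at step $t$, the chosen decision has pseudo-cost $r_t = g_t(x_t) + \beta|x_t - x_{t-1}| - \int_{w^{(t-1)}}^{w^{(t-1)}+x_t}\phi(u)\,du$ no larger than the pseudo-cost of the alternative (in particular no larger than the pseudo-cost of \emph{not} moving, which, combined with $g_t(0)=0$, is bounded appropriately). Rearranging across all steps $t \le j$ and telescoping the integral terms $\int_{w^{(t-1)}}^{w^{(t-1)}+x_t}\phi(u)\,du$, which sum to $\int_0^{w^{(j)}}\phi(u)\,du$ since the $w^{(t)}$ partition $[0,w^{(j)}]$, yields $\sum_{t \le j} g_t(x_t) + (\text{switching cost up to }j) \le \int_0^{w^{(j)}}\phi(u)\,du + (\text{slack terms})$. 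I would then fold the switching-cost bound from the previous paragraph together with the compulsory-trade cost bound $(1-w^{(j)})U$ to assemble the stated inequality $\ROROmin(\mathcal{I}) \le \int_0^{w^{(j)}}\phi(u)\,du + \beta w^{(j)} + (1-w^{(j)})U$.

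The main obstacle I anticipate is handling the switching cost cleanly. Unlike the purchasing cost, which telescopes naturally through the pseudo-cost definition, the term $\beta|x_t - x_{t-1}|$ appears inside each pseudo-cost with a sign that depends on whether step $t$ is a ramp-on or ramp-off step, so the naive telescoping over-counts. The careful part is arguing that the \emph{actual} switching cost paid by \ROROmin over the whole horizon — including the turn-on at the start, the turn-off at the end, and any oscillation forced by the compulsory trade — is simultaneously (i) at least partially cancelled against the $-\beta$ discount already baked into $\phi$ (recall $\phi(w) = U - \beta + (\cdots)$), and (ii) bounded above by $\beta w^{(j)}$ plus what is charged at rate $U$ during the compulsory trade. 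I would resolve this by a separate worst-case accounting of the switching cost as a function of the utilization trajectory, treating the monotone ramp-up phase and the (at most one) ramp-down phase separately, rather than trying to track it step-by-step inside the pseudo-cost sum. Once that accounting lemma is in hand, combining it with the telescoped pseudo-cost bound is routine.
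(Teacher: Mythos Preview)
Your proposal has a genuine gap and misses the clean one-line idea that makes the paper's proof work.

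First, the gap: your switching-cost argument assumes the decision sequence $\{x_t\}$ is unimodal (``first ramps up to some peak and then ramps down to $0$'', ``at most one ramp-down phase''). This is false in general for \ROROmin. The algorithm can oscillate: ramp up on a good cost function, ramp down when the next one is bad, ramp up again, and so on. So $\sum_t \beta|x_t - x_{t-1}|$ is \emph{not} bounded by $2\beta \cdot (\text{peak decision})$, and the separate worst-case accounting you describe for the switching cost does not go through as stated.

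Second, and more importantly, the whole plan of separating purchasing and switching costs is unnecessary. The paper's proof never decomposes them. The key observation is simply this: at every step $t \le j$, the candidate $x = 0$ is feasible for the ramping-off problem, and its pseudo-cost there is $g_t(0) + \beta(x_{t-1} - 0) - 0 = \beta x_{t-1}$. Hence the pseudo-cost actually incurred satisfies
\[
g_t(x_t) + \beta|x_t - x_{t-1}| - \int_{w^{(t-1)}}^{w^{(t)}}\phi(u)\,du \;=\; \min\{r_t^+, r_t^-\} \;\le\; r_t^- \;\le\; \beta x_{t-1}.
\]
Summing over $t \in [j]$, the right-hand side telescopes to $\beta \sum_{t=1}^{j} x_{t-1} = \beta w^{(j-1)} \le \beta w^{(j)}$, the integrals telescope to $\int_0^{w^{(j)}}\phi(u)\,du$, and the left-hand side is exactly $\ROROmin(\mathcal{I}) - (1-w^{(j)})U$ (the compulsory-trade cost). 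Rearranging gives the lemma immediately. You gesture at comparing to ``the pseudo-cost of not moving,'' which is exactly this idea, but you then abandon it in favour of a separate switching-cost bound; instead, recognise that the $\beta x_{t-1}$ slack from the $x=0$ comparison \emph{is} the switching-cost bound, and no further work is needed.
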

\paragraph{Proof of Lemma~\ref{lem:alg-ub}.}
By solving the ramping-off problem for any arbitrary time step $t \in [j]$, we can observe that $g_t (x_t^-) - \beta x_t^- \le \int_{w^{(t-1)}}^{w^{(t-1)}+x_t^-} \phi(u) du, \forall t\in[j]$. Therefore, we have the following inequality
\begin{align}
   \min\{r_t^+, r_t^-\} \le r_t^- \le \beta x_{t-1}, \forall t\in[j].
\end{align}

Thus, we have
\begin{align}
   \beta w^{(j)} \ge \sum_{t\in[j]} (\beta x_{t-1}) &\ge \sum_{t\in[j]}\min\{r_t^+, r_t^-\}\\
    &= \sum_{t\in[j]} \left[ g_t (x_t) + \beta |x_t - x_{t-1}| - \int_{w^{(t-1)}}^{w^{(t-1)} + x_t}\phi(u) du \right]\\
    &=  \sum_{t\in[j]} \left[ g_t (x_t) + \beta |x_t - x_{t-1}|\right] - \int_{0}^{w^{(j)}}\phi(u) du\\
    &= \ROROmin(\mathcal{I}) - (1-w^{(j)})U - \int_{0}^{w^{(j)}}\phi(u) du.
\end{align}
 
Combining Lemma~\ref{lem:opt-lb} and Lemma~\ref{lem:alg-ub} gives
\begin{align}
    \texttt{CR} \le \frac{\ROROmin(\mathcal{I})}{\OPT(\mathcal{I})} \le \frac{\int_{0}^{w^{(j)}} \phi(u) du + \beta w^{(j)} + (1- w^{(j)}) U}{\phi(w^{(j)}) - \beta} \leq \alpha, \label{eq:cr-alpha}
\end{align}
where the last inequality holds since for any $w\in[0,1]$ 
\begin{align}
  \int_{0}^{w} \phi(u) du + \beta w + (1- w) U &=  \int_{0}^{w}\left[U - \beta + (U/\alpha - U + 2 \beta) \exp(u/\alpha)\right] du + \beta w + (1 - w)U\\
  & = (U - \beta)w + \alpha (U/\alpha - U + 2 \beta) [\exp(w/\alpha) - 1] + \beta w + (1 - w)U\\
  & = \alpha (U/\alpha - U + 2 \beta) [\exp(w/\alpha) - 1] + U\\
  &= \alpha \left[U - 2 \beta + (U/\alpha - U + 2 \beta) \exp(w/\alpha) \right]\\
  &= \alpha [\phi(w) - \beta].
\end{align}

We note that the rate constraints $\{ d_t \}_{t \in [T]}$ surprisingly do not appear in this worst-case analysis.  For completeness, we state and prove Lemma~\ref{lem:rate-const-min}.

\begin{lem} \label{lem:rate-const-min}
    If $d_t < 1 \ \forall t \in [T]$, the competitive ratio of \ROROmin is still upper bounded by $\alpha$.
\end{lem}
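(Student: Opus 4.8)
The plan is to show that imposing rate constraints restricts the offline optimum \OPT by at least as much as it restricts \ROROmin, so that the worst-case ratio $\alpha$ established in the unconstrained analysis is preserved. The first step is the half that transfers for free: the upper bound on the algorithm's cost. The proof of Lemma~\ref{lem:alg-ub} only invokes (i) feasibility of the decision $x_t^- = 0$ in the ramping-off problem, which still holds because $0 \in [0, \min(x_{t-1}, d_t)]$ for every $t$, (ii) the telescoping of the $-\int \phi$ terms, and (iii) that the compulsory trade purchases exactly $1 - w^{(j)}$ units of demand at per-unit marginal cost at most $U$. None of these is affected by rate constraints, so $\ROROmin(\mathcal{I}) \le \int_0^{w^{(j)}} \phi(u)\,du + \beta w^{(j)} + (1-w^{(j)})U = \alpha[\phi(w^{(j)}) - \beta]$ continues to hold, where $w^{(j)}$ now denotes \ROROmin's utilization just before the (possibly longer) compulsory trade; the standing assumption that the compulsory trade occupies a small fraction of the sequence keeps this estimate meaningful.

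The heart of the argument is that Lemma~\ref{lem:opt-lb} must be revisited, since its contradiction step --- ``if a cost function with marginal cost below $\phi(w^{(j)}) - \beta$ existed, the ramping-on solution at that step would push utilization past $w^{(j)}$'' --- can fail when \ROROmin is capped at $x_t = d_t$. The plan is to classify time steps as \emph{cheap} (marginal cost, i.e.\ $g_t'(0)$, strictly below $\phi(w^{(j)}) - \beta$) or not, and to exploit that a step is cheap exactly when both \ROROmin \emph{and} \OPT would like to purchase as much as the rate constraint permits there. Using monotonicity of $\phi$ and the ramping-on rule one shows \ROROmin exhausts the rate capacity of every cheap step it reaches, so the total capacity of the cheap steps is at most $w^{(j)}$; since \OPT is subject to the same $d_t$ caps, it acquires at most that much of its unit demand cheaply and the remainder at marginal cost at least $\phi(w^{(j)}) - \beta$. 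The subtlety is that this does \emph{not} recover the clean bound $\OPT(\mathcal{I}) \ge \phi(w^{(j)}) - \beta$; it gives a lower bound on $\OPT(\mathcal{I})$ parametrized by how much cheap capacity \OPT can use, which must be paired with a correspondingly \emph{tightened} upper bound on $\ROROmin(\mathcal{I})$ --- the algorithm also pays strictly less than $\int_0^{w^{(j)}}\phi + \beta w^{(j)} + (1-w^{(j)})U$ precisely because it buys that cheap capacity (including during the compulsory trade, where the $(1-w^{(j)})U$ estimate is loose whenever cheap cost functions appear there). One then checks that these two refined bounds still combine, via the identity $\int_0^w \phi(u)\,du + \beta w + (1-w)U = \alpha[\phi(w) - \beta]$ from \eqref{eq:cr-alpha}, to yield $\ROROmin(\mathcal{I})/\OPT(\mathcal{I}) \le \alpha$. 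I would organize this as a case split on whether \ROROmin is ever rate-capped at a step where its unconstrained pseudo-cost minimizer is strictly larger: in the negative case the unconstrained analysis of Theorem~\ref{thm:roromin} applies essentially verbatim, and in the positive case the refined accounting above applies.

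The main obstacle is precisely this joint book-keeping of \OPT's and \ROROmin's purchases across capped and uncapped steps: one must track not merely \emph{how much} cheap capacity exists but also \emph{where} it sits relative to \ROROmin's running utilization and the compulsory-trade boundary, control the effect of convexity (rather than linearity) of the cost functions on the point at which \ROROmin stops ramping on, and verify that every ``loss'' \OPT incurs from a binding rate constraint is offset by a proportional reduction in $\ROROmin(\mathcal{I})$. Once the cheap/non-cheap decomposition is set up and the two-sided bounds are written in terms of $w^{(j)}$ and the cheap capacity, collapsing the ratio back to $\alpha$ via the integral identity should be routine; isolating the right pair of matched bounds is the delicate part.
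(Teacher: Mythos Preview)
Your plan is workable but takes a genuinely different route from the paper. You organize the argument as a \emph{global} cheap/non-cheap decomposition over the whole sequence, then jointly tighten the upper bound on $\ROROmin(\mathcal{I})$ and the lower bound on $\OPT(\mathcal{I})$ before collapsing with the integral identity. The paper instead runs a short \emph{local perturbation} argument at a single capped step: it fixes a step $t$ with $x_t = d_t$, first takes as baseline the linear cost function $g_t(x) = \phi(w^{(t)})\cdot x$ (so that the cap is exactly non-binding and the unconstrained analysis gives ratio $= \alpha$), and then replaces it by any ``better'' $g_t'$ with $g_t'(d_t) < \phi(w^{(t)})\cdot d_t$. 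Both $\OPT$ and $\ROROmin$ decrease, and because $[\phi(w^{(t)}) - \beta]\,d_t < \int_{w^{(t-1)}}^{w^{(t)}} \phi(u)\,du$ (by monotonicity of $\phi$) the algorithm's cost decreases \emph{more} than $\OPT$'s, forcing the ratio strictly below $\alpha$. This sidesteps entirely the bookkeeping you flag as the main obstacle: there is no need to track where cheap capacity sits relative to $w^{(j)}$ or to worry about convexity truncating \ROROmin's purchase below $d_t$, because the comparison is made against a linear baseline at a fixed step. What your approach buys is a more explicit accounting of how much both players spend on capped versus uncapped capacity, which could be useful if one wanted quantitative slack in the ratio; the paper's argument is shorter but gives only the qualitative ``$\le \alpha$'' conclusion. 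One caution on your side: your claim that \ROROmin ``exhausts the rate capacity of every cheap step'' hinges on defining cheap via $g_t'(0)$, which --- as you yourself note --- is not stable under convexity; to make your decomposition go through cleanly you would need to classify at the level of marginal units (i.e., slices $[w, w+du]$) rather than whole time steps.
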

\begin{proof}[Proof of Lemma~\ref{lem:rate-const-min}]
Suppose that the presence of a rate constraint $d_t$ causes \ROROmin to make a decision which violates $\alpha$-competitiveness.  At time $t$, the only difference between the setting where $x_t \in [0, 1]$ and the setting with rate constraints $< 1$, where $x_t \in [0, d_t]$, is that $x_t$ cannot be~$> d_t$.  

This implies that a challenging situation for \ROROmin under a rate constraint is the case where \ROROmin would otherwise accept $> d_t$ of a good cost function, but it cannot due to the rate constraint.

We can now show that such a situation implies that \ROROmin achieves a competitive ratio which is strictly better than $\alpha$ (in the minimization setting).

From \eqref{eq:cr-alpha}, we know that the following holds for any value of $w \in [0, 1]$:
\begin{align*}
\int^w_0 \phi(u) du + \beta w + (1-w) U \leq \alpha [ \phi(w) - \beta ].
\end{align*}

For an arbitrary instance $\mathcal{I} \in \Omega$ and an arbitrary time step $t$, let $w^{(t)} = w^{(t-1)} + d_t$, implying that $x_t = d_t$.  For the sake of comparison, we first consider this time step with a cost function $g_t(\cdot)$, such that $g_t(x_t) = \phi(w^{(t)}) \cdot x_t$, implying that without the presence of a rate constraint, \ROROmin would set $x_t = d_t$.  If no more cost functions are accepted by \ROROmin after time step $t$ (excepting the compulsory trade), we have the following:
\begin{align}\label{eq:rate-const-alpha}
\frac{\ROROmin(\mathcal{I})}{\OPT(\mathcal{I})} \leq \frac{\int^{w^{(t)}}_0 \phi(u) du + \beta w^{(t)} + (1-w^{(t)}) U}{\phi(w^{(t)}) - \beta} = \alpha.
\end{align}

Now, consider the exact same setting as above, except with a new cost function $g_t'(\cdot)$, such that $g_t'(x_t) < \phi(w^{(t)}) \cdot x_t$.  This implies that without the presence of a rate constraint, \ROROmin would set $x_t > d_t$.  In other words, $g_t'( \cdot)$ is a good cost function which $\ROROmin$ cannot accept more of due to the rate constraint.  

Since $\OPT$ is subject to the same rate constraint $d_t$, we know that $\OPT(\mathcal{I})$ is lower bounded by $[ \phi(w^{(t)}) - \beta ] (1- d_t) + g_t'(d_t)$ -- the rest of the optimal solution is bounded by the final threshold value, since we assume that no more prices are accepted by \ROROmin after time step $t$.

The cost \ROROmin incurs is upper bounded by $\ROROmin(\mathcal{I}) \le \int^{w^{(t)}}_0 \phi(u) du - \int^{w^{(t)}}_{w^{(t-1)}} \phi(u) du + g_t'(d_t) + \beta w^{(t)} + (1-w^{(t)}) U$.

Observe that compared to the previous setting, the $\OPT$ and $\ROROmin$ solutions have both decreased -- $\OPT(\mathcal{I})$ has decreased by a subtractive factor of $- [ \phi(w^{(t)}) - \beta ] d_t + g_t'(d_t)$, while $\ROROmin(\mathcal{I})$ has decreased by a subtractive factor of $- \int^{w^{(t)}}_{w^{(t-1)}} \phi(u) du + g_t'(d_t)$.

Since $\phi$ is monotonically decreasing on $w \in [0,1]$, we have that by definition, $[ \phi(w^{(t)}) - \beta ] d_t < \int^{w^{(t)}}_{w^{(t-1)}} \phi(u) du$.  Thus, the solution obtained by $\ROROmin$ has decreased \textit{more} than the solution obtained by $\OPT$.  This then implies the following:
\begin{align*}
\frac{\ROROmin(\mathcal{I})}{\OPT(\mathcal{I})} \leq \frac{\int^{w^{(t-1)}}_0 \phi(u) du - \int^{w^{(t)}}_{w^{(t-1)}} \phi(u) du + g_t'(d_t) + \beta w^{(t)} + (1-w^{(t)}) U}{[ \phi(w^{(t)}) - \beta ] (1- d_t) + g_t'(d_t)} <  \alpha,
\end{align*}
where the final inequality follows from \eqref{eq:rate-const-alpha}.
\end{proof}

At a high-level, this result shows that even if there is a rate constraint which prevents \ROROmin from accepting a good cost function, the worst-case competitive ratio does not change.  Combining Lemmas~\ref{lem:opt-lb}, \ref{lem:alg-ub}, and \ref{lem:rate-const-min} completes the proof.
\end{proof}

\subsection{Lower bound for \OCSmin} \label{sec:lb-proof-min}

In the following, we prove \autoref{thm:lowerboundmin}, which states that for any deterministic online algorithm solving \OCSmin, the optimal competitive ratio is $\alpha$, where $\alpha$ is as defined in \eqref{eq:alpha}.

To show this lower bound, we first define a family of special instances and then show that the competitive ratio for any deterministic algorithm is lower bounded under these instances. 
Prior work has shown that difficult instances for online search problems with a minimization objective occur when inputs arrive at the algorithm in an decreasing order of cost~\cite{Lorenz:08, Lechowicz:23, ElYaniv:01, SunZeynali:20}.  For $\OCSmin$, we additionally consider how an adaptive adversary can essentially force an algorithm to incur a large switching cost in the worst-case.
We now formalize such a family of instances $\{ \mathcal{I}_x \}_{x \in [L, U]}$, where $\mathcal{I}_x$ is called an \textit{$x$-decreasing instance}.

\begin{dfn}[$x$-decreasing instance for \OCSmin] Let $n, m \in \mathbb{N}$ be sufficiently large, and $\delta := \nicefrac{(U-L)}{n}$. For $x \in [L, U]$, $\mathcal{I}_x \in \Omega$ is an $x$-instance if it consists of $n_x := 2 \cdot \lceil \nicefrac{(x - L)}{\delta} \rceil + 1$ alternating batches of linear cost functions. The $i^{\text{th}}$ batch ($i\in [n_x-2]$) contains $m$ cost functions with coefficient $U$ if $i$ is odd, and $1$ cost function with coefficient $U-(\lceil i / 2 \rceil)\delta$ if $i$ is even. The last two batches consist of $m$ cost functions with coefficient $x + \varepsilon$, followed by $m$ cost functions with coefficient $U$.
\end{dfn} \label{dfn:xinstance-min}

\newcommand{\cItem}[1]{\boxed{ \{ g(y) = #1y \} \times m }}

\newcommand{\aItem}[1]{ g(y) = #1y }

Note that $\mathcal{I}_{U}$ is simply a stream of $m$ prices $U$. See Fig. \ref{fig:xinstance-min} for an illustration of an $x$-decreasing instance.  Since a competitive algorithm $\ALG$ should not accept the worst-case cost function with coefficient $U$ unless it is forced to, this special $x$-decreasing instance can be equivalently interpreted as an \textit{adaptive adversary}, which gives $\ALG$ $m$ worst-case inputs $U$ whenever any cost function is accepted.  As $n \to \infty$, the alternating single cost functions in an $x$-decreasing sequence continuously decrease down to $x$, and each of these ``good cost functions'' is interrupted by a section of worst-case $U$ cost functions.  Note that the last few cost functions in an $x$-decreasing instance are always $U$.

\begin{figure}[h]
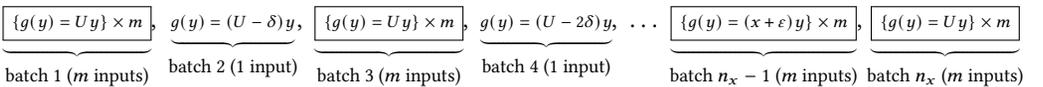

    \begin{align*}
    {\tiny 
        \underbrace{\cItem{U}}_{\text{batch 1 (} m \text{ inputs)}}, \; \underbrace{\aItem{(U - \delta)}}_{\text{batch 2 (} 1 \text{ input)}}, \; \underbrace{\cItem{U}}_{\text{batch 3 (} m \text{ inputs)}}, \; \underbrace{\aItem{(U - 2\delta)}}_{\text{batch 4 (} 1 \text{ input)}}, \; \dots \; \underbrace{\cItem{(x + \varepsilon)}}_{\text{batch $n_x-1$ (} m \text{ inputs)}}, \underbrace{\cItem{U}}_{\text{batch $n_x$ (} m \text{ inputs)}}
    }
    \end{align*}
    \vspace{-1em}
    \caption{$\mathcal{I}_x$ consists of $n_x$ batches of cost functions, where the alternating single functions are continuously decreasing from $U$ down to $x$.}\label{fig:xinstance-min}
\end{figure}

\begin{proof}[Proof of \autoref{thm:lowerboundmin}]
Let $h(x)$ denote a \textit{conversion function} $[L,U] \rightarrow [0,1]$, which fully describes the actions of a deterministic $\ALG$ for \OCSmin on an instance $\mathcal{I}_x$.  Note that for large $n$, processing the instance $\mathcal{I}_{x-\delta}$ is equivalent to first processing $\mathcal{I}_x$ (besides the last two batches), and then processing batches with prices $x - \delta$ and $U$.  Since $\ALG$ is deterministic and the conversion is unidirectional (irrevocable), we must have that $h(x - \delta) \geq h(x)$, i.e. $h(x)$ is non-increasing in $[L, U]$.  Intuitively, the entire capacity should be satisfied if the minimum possible price is observed, i.e $h(L) = 1$.
For instance $\mathcal{I}_x$, the optimal offline solution is $\OPT(\mathcal{I}_x) = x + \nicefrac{2}{m} \beta$.  Note that as $m$ grows large, $\OPT(\mathcal{I}_x) \rightarrow x$.

Due to the adaptive nature of each $x$-instance, any deterministic $\ALG$ incurs a switching cost proportional to $h(x)$, which gives the amount of utilization obtained by $\ALG$ before the end of the sequence on instance $\mathcal{I}_x$.  

Whenever $\ALG$ accepts some price $U-(\lceil i / 2 \rceil)\delta$, the adversary presents prices $U$ starting in the next time step.  Any $\ALG$ which does not switch away immediately obtains a competitive ratio strictly worse than an algorithm which does switch away (if an algorithm accepts $c$ fraction of a good price, the switching cost it will pay is $2\beta c$.  An algorithm may continue accepting $c$ fraction of prices $U$ in the subsequent time steps, but a sequence exists where this decision will take up too much utilization to recover when better prices are presented later.  In the extreme case, if an algorithm continues accepting $c$ fraction of these $U$ prices, it might fill its utilization and then $\OPT$ can accept a cost function which is arbitrarily better).

Since accepting any price by a factor of $c$ incurs a switching cost of $2 \beta c$, the switching cost paid by $\ALG$ on instance $\mathcal{I}_x$ is $2\beta h(x)$.  We assume that $\ALG$ is notified of the compulsory trade, and does not incur a significant switching cost during the final batch.

Then the total cost incurred by an $\alpha^\star$-competitive online algorithm $\ALG$ on instance $\mathcal{I}_x$ is $\ALG(\mathcal{I}_x) = h(\nicefrac{U}{\alpha^\star}) \nicefrac{U}{\alpha^\star} - \int^x_{\nicefrac{U}{\alpha^\star}} u d h(u) + 2\beta h(x) + (1 - h(x))U$, where $udh(u)$ is the cost of buying $dh(u)$ utilization at price $u$, the last term is from the compulsory trade, and the second to last term is the switching cost incurred by $\ALG$.  Note that any deterministic $\ALG$ which makes conversions when the price is larger than $\nicefrac{U}{\alpha^\star}$ can be strictly improved by restricting conversions to prices~$\leq \nicefrac{U}{\alpha^\star}$.

For any $\alpha^\star$-competitive online algorithm, the corresponding conversion function $h(\cdot)$ must satisfy $\ALG(\mathcal{I}_x) \leq \alpha^\star \OPT(\mathcal{I}_x) = \alpha^\star x, \forall x \in [L, U]$.  This gives a necessary condition which the conversion function must satisfy as follows:
\[
\ALG(\mathcal{I}_x) = h(\nicefrac{U}{\alpha^\star}) \nicefrac{U}{\alpha^\star} - \int^x_{\nicefrac{U}{\alpha^\star}} u d h(u) + 2\beta h(x) + (1 - h(x))U \leq \alpha^\star x , \quad \forall x \in [L, U].
\]
By integral by parts, the above implies that the conversion function must satisfy $h(x) \geq \frac{U - \alpha^\star x}{U - x - 2\beta} - \frac{1}{U - x - 2\beta} \int_{\nicefrac{U}{\alpha^\star}}^x h(u) du$.  By Gr\"{o}nwall's Inequality \cite[Theorem 1, p. 356]{Mitrinovic:91}, we have that
\begin{align*}
h(x) & \geq \frac{U - \alpha^\star x}{U - x - 2\beta} - \frac{1}{U - x - 2\beta} \int_{\nicefrac{U}{\alpha^\star}}^x \frac{U - \alpha^\star u}{U - u - 2\beta} \cdot \exp\left( \int_u^x \frac{1}{U - r - 2\beta} dr \right) du \\
& \geq \frac{U - \alpha^\star x}{U - x - 2\beta} - \int_{\nicefrac{U}{\alpha^\star}}^x \frac{U - \alpha^\star u}{(U - u - 2\beta)^2} du \\
& \geq \frac{U - \alpha^\star x}{U - x - 2\beta} - \left[ \frac{U\alpha^\star - U - 2\beta \alpha^\star}{u + 2\beta - U} - \alpha^\star \ln \left( u + 2\beta -U \right) \right]_{\nicefrac{U}{\alpha^\star}}^x \\
& \geq \alpha^\star \ln \left( x + 2\beta -U \right) - \alpha^\star \ln \left( \nicefrac{U}{\alpha^\star} + 2\beta -U \right), \quad \forall x \in [L, U].
\end{align*}

$h(L) = 1$ by the problem definition -- we can combine this with the above constraint to give the following condition for an $\alpha^\star$-competitive online algorithm:
\[
\alpha^\star \ln \left( L + 2\beta -U \right) - \alpha^\star \ln \left( \nicefrac{U}{\alpha^\star} + 2\beta -U \right) \leq h(L) = 1.
\]
The optimal $\alpha^\star$ is obtained when the above inequality is binding, so solving for the value of $\alpha^\star$ which solves $\alpha^\star \ln \left( L + 2\beta -U \right) - \alpha^\star \ln \left( \nicefrac{U}{\alpha^\star} + 2\beta -U \right) = 1$ yields that the best competitive ratio for any $\ALG$ solving \OCSmin is $\alpha^\star \geq \left[ W \left( \frac{e^{\nicefrac{2\beta}{U}} ( \nicefrac{L}{U} + \nicefrac{2 \beta}{U} - 1) }{e} \right) - \frac{2\beta}{U} + 1 \right]^{-1}$.
\end{proof}

\bigskip

\subsection{Lower bound for the ``best-price'' predictions in \OCS}\label{apx:advice}

In the following, we prove \autoref{thm:bestPriceConsistency}, which states that any learning-augmented algorithm for \OCS which uses a prediction of the ``best price'' in a sequence cannot achieve $1$ consistency.

\begin{proof}[Proof of \autoref{thm:bestPriceConsistency}]

To prove the result, we state and prove two lemmas, one for each of the minimization (\OCSmin) and maximization (\OCSmax) settings.  These results show that the achievable consistency is lower bounded by $1 + \nicefrac{2\beta}{L}$ and $\nicefrac{1}{1 - 2\beta / U}$ in \OCSmin and \OCSmax, respectively.

\begin{lem}\label{lem:bestPricePareto-min}
    Given a single prediction of the \textit{``minimum price''} in a sequence, any learning-augmented algorithm for \OCSmin which is $\gamma$-robust is at least $\eta$-consistent, where $\eta$ is
    \begin{align*}
        \eta \geq \gamma - \gamma \left( 1 + \frac{2\beta}{L} - \frac{U}{L} \right) \ln \left[ \frac{L + 2\beta - U}{\frac{U}{\gamma} + 2\beta - U}\right] + 1 - \frac{U}{L} + \frac{2\beta}{L}.
    \end{align*}
    Note that for any value of $\gamma$, $\eta \geq 1 + \nicefrac{2\beta}{L}$.
\end{lem}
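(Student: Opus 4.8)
The plan is to combine the worst-case conversion-function machinery from the proof of \autoref{thm:lowerboundmin} with a separate argument showing that a ``minimum price'' prediction conveys no information about the \emph{multiplicity} of the best cost function, and hence nothing about the optimal switching cost.

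First, fix the prediction to $\hat{P} = L$ and reuse the family of $x$-decreasing instances $\{\mathcal{I}_x\}_{x\in[L,U]}$ from the proof of \autoref{thm:lowerboundmin}. For every $x > L$ this prediction is \emph{incorrect} on $\mathcal{I}_x$ (whose best marginal cost is $x+\varepsilon$), so a $\gamma$-robust algorithm must satisfy $\ALG(\mathcal{I}_x) \le \gamma\,\OPT(\mathcal{I}_x) \to \gamma x$ as $m \to \infty$. Encoding the algorithm's behavior on this family by a non-increasing conversion function $h:[L,U]\to[0,1]$ --- well-defined since the instances share common prefixes and conversions are irrevocable --- and repeating the integration-by-parts step and the application of Gr\"{o}nwall's inequality from the proof of \autoref{thm:lowerboundmin} verbatim, but with $\gamma$ in the role of $\alpha^\star$, yields the pointwise bound $h(x) \ge \gamma\ln(x+2\beta-U) - \gamma\ln(\nicefrac{U}{\gamma}+2\beta-U)$ on $[L,U]$. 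In particular, by monotonicity of $h$ and letting $x \downarrow L$,
\[
\rho \;:=\; \gamma\ln\!\left[\frac{L+2\beta-U}{\nicefrac{U}{\gamma}+2\beta-U}\right] \;\le\; h(L),
\]
and the pointwise-minimal such $h$ (which makes every robustness constraint tight) incurs exactly $\gamma L - 2\beta\rho - (1-\rho)U$ in cost over the decreasing phase down to price $L$.

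Next, exploit consistency on a companion instance $\widetilde{\mathcal{I}}$ on which the prediction $\hat P = L$ is \emph{correct}: $\widetilde{\mathcal{I}}$ agrees with $\mathcal{I}_L$ throughout the decreasing phase, but its terminal low-cost batch consists of $m$ copies of $g(y) = (L+\varepsilon)y$, so $\OPT(\widetilde{\mathcal{I}}) = L + \nicefrac{2\beta}{m} \to L$ (the optimum spreads its leftover mass over the $m$ copies and pays vanishing switching cost). A deterministic algorithm cannot distinguish $\widetilde{\mathcal{I}}$ from the variant in which that batch has length one --- a variant on which $\OPT = L + 2\beta$ and a compulsory trade at price $U$ begins immediately afterwards --- so it is forced to commit essentially all of its remaining $1 - h(L)$ mass at the \emph{first} copy of $(L+\varepsilon)y$, paying $2\beta(1-h(L))$ switching cost there on top of the $2\beta h(L)$ already incurred for hedging during the decreasing phase, for a total switching cost of $2\beta$ against $\OPT$'s vanishing one. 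Since the pointwise-minimal $\gamma$-robust profile also minimizes $\ALG(\widetilde{\mathcal{I}})$, substituting it and simplifying gives $\ALG(\widetilde{\mathcal{I}}) \ge \gamma L - (1-\rho)(U-L-2\beta)$; dividing by $\OPT(\widetilde{\mathcal{I}}) \to L$ and rewriting $\tfrac{U-L-2\beta}{L} = -\big(1 + \tfrac{2\beta}{L} - \tfrac{U}{L}\big)$ together with the definition of $\rho$ reproduces the claimed bound on $\eta$.

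The uniform statement $\eta \ge 1 + \nicefrac{2\beta}{L}$ (for every $\gamma$) then follows from the two-sequence construction already sketched for \autoref{thm:bestPriceConsistency}: even with no hedging ($\rho = 0$), the indistinguishability above forces the algorithm to pay switching cost $2\beta$ on $\widetilde{\mathcal{I}}$ while $\OPT(\widetilde{\mathcal{I}}) \to L$, so $\ALG(\widetilde{\mathcal{I}})/\OPT(\widetilde{\mathcal{I}}) \ge (L+2\beta)/L$. The step I expect to require the most care is making the indistinguishability argument rigorous --- formally showing that a deterministic algorithm really is \emph{forced} to dump its leftover mass at the first copy of the minimal-cost function rather than hedging against the unknown length of that batch --- and reconciling the ``compulsory trade at $U$'' term inherited from the expression in \autoref{thm:lowerboundmin} with the fact that on $\widetilde{\mathcal{I}}$ the leftover is instead purchased at price $\approx L$; the remaining algebra is identical to that in the unconditional lower bound.
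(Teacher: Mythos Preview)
Your robustness half matches the paper exactly: fix the prediction at $\hat P=L$, run the Gr\"onwall argument from the proof of \autoref{thm:lowerboundmin} with $\gamma$ in place of $\alpha^\star$, and obtain the pointwise lower bound $h(x)\ge\gamma\ln\!\bigl[(x+2\beta-U)/(U/\gamma+2\beta-U)\bigr]$. The divergence is entirely in how the consistency constraint is extracted.

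The paper does \emph{not} introduce a companion instance or an indistinguishability argument. It works on $\mathcal{I}_L$ itself (where the prediction is correct) and, taking $h(L)=1$, writes the cost via integration by parts as $\ALG(\mathcal{I}_L)=L+\int_L^{U/\gamma} h(u)\,du+2\beta$. Consistency $\ALG(\mathcal{I}_L)\le\eta L$ then becomes the \emph{integral} constraint
\[
\int_L^{U/\gamma} h(u)\,du + 2\beta \;\le\; (\eta-1)L,
\]
and plugging the Gr\"onwall lower bound for $h$ into this integral and evaluating gives the claimed $\eta$-bound in one shot. No accounting of ``first copy versus later copies'' is ever needed: the integral of $h$ is what appears in the cost, and the robustness constraint lower-bounds that integral directly.

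Your indistinguishability step---asserting the algorithm is \emph{forced} to dump the leftover $1-\rho$ at the first copy of $(L+\varepsilon)y$---is the gap you correctly worry about, and it is real rather than cosmetic. An algorithm that commits only $c<1-\rho$ at the first copy and spreads the remainder over the next $m-1$ copies pays switching cost $2\beta c$ on $\widetilde{\mathcal I}$, not $2\beta(1-\rho)$; your expression $\gamma L-(1-\rho)(U-L-2\beta)$ therefore overstates $\ALG(\widetilde{\mathcal I})$. To repair this you would have to optimize over $c$ by balancing the length-$1$ and length-$m$ consistency constraints, which is more delicate than advertised. The paper's integral formulation sidesteps this entirely, and since it yields the same final expression, that is the route to take.
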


\begin{proof}[Proof of Lemma~\ref{lem:bestPricePareto-min}]
    To show this result, we leverage the same special family of $x$-decreasing instances defined in Definition~\ref{dfn:xinstance-min}.    

    We let $h(x)$ denote a \textit{conversion function} $[L,U] \rightarrow [0,1]$, which fully describes the actions of a deterministic $\ALG$ for \OCSmin on an instance $\mathcal{I}_x$, and $h(x)$ gives the total amount traded under the instance $\mathcal{I}_x$ before the compulsory trade.   Note that for large $n$, processing the instance $\mathcal{I}_{x-\delta}$ is equivalent to first processing $\mathcal{I}_x$ (besides the last two batches), and then processing batches with prices $x - \delta$ and $U$.  Since $\ALG$ is deterministic and the conversion is unidirectional (irrevocable), we must have that $h(x - \delta) \geq h(x)$, i.e. $h(x)$ is non-increasing in $[L, U]$.  Intuitively, the entire capacity should be satisfied if the minimum possible price is observed, i.e $h(L) = 1$.

    Recall that for instance $\mathcal{I}_x$, the optimal offline solution is $\OPT(\mathcal{I}_x) = x + \nicefrac{2}{m} \beta$, and that as $m$ grows large, $\OPT(\mathcal{I}_x) \to x$.

    For any $\gamma$-robust online algorithm $\ALG$ given a prediction $\hat{P} = L$, the following must hold:
    \[
    \ALG(\mathcal{I}_x) \leq \gamma \OPT(\mathcal{I}_x) = \gamma x, \ \forall x \in [L, U].
    \]

    The cost of $\ALG$ with conversion function $h$ on an instance $\mathcal{I}_x$ is $\ALG(\mathcal{I}_x) = h(\nicefrac{U}{\gamma}) \nicefrac{U}{\gamma} - \int^x_{\nicefrac{U}{\gamma}} u d h(u) + 2\beta h(x) + (1 - h(x))U$, where $udh(u)$ is the cost of buying $dh(u)$ utilization at price $u$, the last term is from the compulsory conversion, and the second to last term is the switching cost incurred by $\ALG$. 

    This implies that $h(x)$ must satisfy the following:
    \[
    h(\nicefrac{U}{\gamma}) \nicefrac{U}{\gamma} - \int^x_{\nicefrac{U}{\gamma}} u d h(u) + 2\beta h(x) + (1 - h(x))U \leq \gamma x, \ \forall x \in [L, U].
    \]

    By integral by parts, the above implies that the conversion function must satisfy $h(x) \geq \frac{U - \gamma x}{U - x - 2\beta} - \frac{1}{U - x - 2\beta} \int_{\nicefrac{U}{\gamma}}^x h(u) du$.  By Gr\"{o}nwall's Inequality \cite[Theorem 1, p. 356]{Mitrinovic:91}, we have that
    \begin{align}
    h(x) & \geq \frac{U - \gamma x}{U - x - 2\beta} - \frac{1}{U - x - 2\beta} \int_{\nicefrac{U}{\gamma}}^x \frac{U - \gamma u}{U - u - 2\beta} \cdot \exp\left( \int_u^x \frac{1}{U - r - 2\beta} dr \right) du \\
    & \geq \frac{U - \gamma x}{U - x - 2\beta} - \int_{\nicefrac{U}{\gamma}}^x \frac{U - \gamma u}{(U - u - 2\beta)^2} du \\
    & \geq \frac{U - \gamma x}{U - x - 2\beta} - \left[ \frac{U\gamma - U - 2\beta \gamma}{u + 2\beta - U} - \gamma \ln \left( u + 2\beta -U \right) \right]_{\nicefrac{U}{\gamma}}^x \\
    & \geq \gamma \ln \left( x + 2\beta -U \right) - \gamma \ln \left( \nicefrac{U}{\gamma} + 2\beta -U \right), \quad \forall x \in [L, U]. \label{eq:rob-bound-gron-min}
    \end{align}

    In addition, to simultaneously satisfy $\eta$-consistency when the minimum price prediction is $\hat{P} = L$, $\ALG$ must satisfy $\ALG(\mathcal{I}_L) \leq \eta \OPT(\mathcal{I}_L) = \eta L$.  Combining this constraint with $h(L) = 1$ gives:
    \begin{align}
    \int_{\nicefrac{U}{\gamma}}^L h(u) du + 2\beta &\leq \eta L - L. \label{eq:const-bound-gron-min}
    \end{align}

    By combining equations \eqref{eq:rob-bound-gron-min} and \eqref{eq:const-bound-gron-min}, the conversion function $h(x)$ of any $\gamma$-robust and $\eta$-consistent online algorithm given prediction $\hat{P} = L$ must satisfy the following:
    \begin{align}
    \gamma \int_{\nicefrac{U}{\gamma}}^L \ln \left( \frac{u + 2\beta -U}{ \nicefrac{U}{\gamma} + 2\beta -U } \right)  du + 2\beta &\leq \eta L - L.
    \end{align}
    When all inequalities are binding, this equivalently gives that 
    \[
    \eta \geq \gamma - \gamma( 1 + \nicefrac{2\beta}{L} - \nicefrac{U}{L}) \ln \left( \frac{L + 2\beta -U}{ \nicefrac{U}{\gamma} + 2\beta -U } \right) + 1 - \frac{U}{L} + \frac{2\beta}{L} .
    \]
    For any value of $\gamma$, $\eta \geq 1 + 2\beta/L$, and this completes the proof.
\end{proof}

\begin{lem}\label{lem:bestPricePareto-max}
    Given a single prediction of the \textit{``maximum price''} in a sequence, any learning-augmented algorithm for \OCSmax which is $\gamma$-robust is at least $\eta$-consistent, where $\eta$ is 
    \begin{align*}
        \eta \geq \left[ 1 - \frac{U- 2\beta -L}{U \gamma} \ln \left( \frac{ U - 2\beta -L }{ \gamma L - 2\beta -L  } \right) + \frac{1}{\gamma} - \frac{L}{U} - \frac{2\beta}{U} \right]^{-1}.
    \end{align*}
    Note that for any value of $\gamma$, $\eta \geq \nicefrac{1}{1 - 2\beta / U}$.
\end{lem}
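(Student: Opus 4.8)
The plan is to dualize the argument used for Lemma~\ref{lem:bestPricePareto-min}, working with the maximization analogue of the worst-case family of instances. Concretely, I would invoke the family of \emph{$x$-increasing instances} $\{\mathcal{I}_x\}_{x \in [L,U]}$ for \OCSmax (the mirror of Def.~\ref{dfn:xinstance-min}, defined in the \OCSmax material of \autoref{apx:ocsmax}): a sequence of linear price functions whose alternating ``good'' prices increase continuously from $L$ up to $x$, each good price immediately followed by a block of $m$ worst-case prices $L$, with a final block of $m$ prices $L$ acting as the compulsory-trade region. For such an instance $\OPT(\mathcal{I}_x) = x - \nicefrac{2}{m}\beta$, which tends to $x$ as $m \to \infty$. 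As in the minimization case, the behaviour of any deterministic algorithm on this family is fully captured by a \emph{conversion function} $h : [L,U] \to [0,1]$, which is now non-decreasing (conversions are irrevocable and the adversary's good prices arrive in increasing order), with $h(U) = 1$, since the entire asset should be sold if the maximum price $U$ is ever observed.

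First I would write the profit attained by an algorithm with conversion function $h$ on $\mathcal{I}_x$, namely $\ALG(\mathcal{I}_x) = h(\gamma L)\,\gamma L + \int_{\gamma L}^x u\, dh(u) + (1 - h(x)) L - 2\beta h(x)$, where the integral is the revenue from selling $dh(u)$ at price $u$, $(1-h(x))L$ is the revenue from the compulsory trade at the worst-case price $L$, $2\beta h(x)$ is the switching cost paid to ramp a fraction $h(x)$ of the asset on and eventually off, and it is without loss of generality to restrict conversions to prices $\geq \gamma L$. Imposing $\gamma$-robustness given the (here uninformative) prediction $\hat P = U$ requires $\ALG(\mathcal{I}_x) \geq \nicefrac{1}{\gamma}\,\OPT(\mathcal{I}_x) = \nicefrac{x}{\gamma}$ for all $x \in [L,U]$; integrating by parts turns this into an integral inequality of Gr\"{o}nwall type, to which I would apply Gr\"{o}nwall's inequality~\cite[Theorem 1, p. 356]{Mitrinovic:91} to obtain a pointwise lower bound on $h(x)$, exactly mirroring \eqref{eq:rob-bound-gron-min}.

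Then I would impose $\eta$-consistency: when $\hat P = U$ is correct, on $\mathcal{I}_U$ the algorithm must satisfy $\ALG(\mathcal{I}_U) \geq \nicefrac{1}{\eta}\,\OPT(\mathcal{I}_U) = \nicefrac{U}{\eta}$; combining this with $h(U) = 1$ yields an upper bound on $\int_{\gamma L}^{U} h(u)\,du$ in terms of $\eta$. Substituting the Gr\"{o}nwall lower bound on $h$ into this upper bound and making all inequalities binding produces a single relation between $\gamma$ and $\eta$ that rearranges into the claimed closed form. Finally, inspecting this relation (or letting $\gamma \to \infty$) shows $\eta \geq \nicefrac{1}{(1 - 2\beta/U)}$ for every $\gamma$, reflecting that even a perfectly informed algorithm must pay $2\beta$ in switching cost while $\OPT$'s switching cost $\nicefrac{2}{m}\beta$ vanishes as $m \to \infty$.

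The main obstacle I anticipate is the sign bookkeeping in the dualization: robustness still forces early conversion, so it \emph{lower}-bounds $h$, but the integration-by-parts step, the placement of the $-2\beta h(x)$ term, and the compulsory-trade revenue at $L$ all flip relative to the minimization computation, so it is easy to land on the wrong constant (e.g., the ``gap'' $U - 2\beta - L$ replacing $U - x - 2\beta$ and $\gamma L - 2\beta - L$ replacing $\nicefrac{U}{\gamma} + 2\beta - U$). A secondary point needing care is justifying that restricting conversions to prices in $[\gamma L, U]$ is without loss of generality for a $\gamma$-robust algorithm---the maximization counterpart of the observation that conversions at prices larger than $\nicefrac{U}{\alpha^\star}$ can only hurt in the minimization lower bound---which is what pins the lower integration limit to exactly $\gamma L$.
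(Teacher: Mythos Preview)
Your proposal is correct and follows essentially the same approach as the paper: the paper likewise invokes the $x$-increasing instances, writes $\ALG(\mathcal{I}_x) = h(\gamma L)\gamma L + \int_{\gamma L}^x u\,dh(u) - 2\beta h(x) + (1-h(x))L \geq x/\gamma$, applies integration by parts and Gr\"{o}nwall to get $h(x) \geq \frac{1}{\gamma}\ln\bigl(\frac{x-2\beta-L}{\gamma L - 2\beta - L}\bigr)$, and then combines this with the consistency constraint $\int_{\gamma L}^U h(u)\,du + 2\beta \leq \frac{(\eta-1)U}{\eta}$ (obtained from $\ALG(\mathcal{I}_U) \geq U/\eta$ and $h(U)=1$) to reach the stated bound. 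The sign-flip and lower-limit concerns you flag are exactly the points the paper handles.
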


\begin{proof}[Proof of Lemma~\ref{lem:bestPricePareto-max}]
    To show this result, we leverage the same special family of $x$-increasing instances defined in Definition~\ref{dfn:xinstance-max}.    

    We let $h(x)$ denote a \textit{conversion function} $[L,U] \rightarrow [0,1]$, which fully describes the actions of a deterministic $\ALG$ for \OCSmax on an instance $\mathcal{I}_x$, and $h(x)$ gives the total amount traded under the instance $\mathcal{I}_x$ before the compulsory trade.   Note that for large $n$, processing the instance $\mathcal{I}_{x+\delta}$ is equivalent to first processing $\mathcal{I}_x$ (besides the last two batches), and then processing batches with prices $x + \delta$ and $L$.  Since $\ALG$ is deterministic and the conversion is unidirectional (irrevocable), we must have that $h(x + \delta) \geq h(x)$, i.e. $h(x)$ is non-decreasing in $[L, U]$.  Intuitively, the entire capacity should be satisfied if the minimum possible price is observed, i.e $h(U) = 1$.

    Recall that for instance $\mathcal{I}_x$, the optimal offline solution is $\OPT(\mathcal{I}_x) = x - \nicefrac{2}{m} \beta$, and that as $m$ grows large, $\OPT(\mathcal{I}_x) \to x$.

    For any $\gamma$-robust online algorithm $\ALG$ given a prediction $\hat{P} = U$, the following must hold:
    \[
    \ALG(\mathcal{I}_x) \leq \nicefrac{\OPT(\mathcal{I}_x)}{\gamma} = \nicefrac{x}{\gamma}, \ \forall x \in [L, U].
    \]

    The profit of $\ALG$ with conversion function $h$ on an instance $\mathcal{I}_x$ is $\ALG(\mathcal{I}_x) = h(\gamma L) \gamma L + \int^x_{\gamma L} u d h(u) - 2\beta h(x) + (1 - h(x))L$, where $udh(u)$ is the profit of selling $dh(u)$ utilization at price $u$, the last term is from the compulsory conversion, and the second to last term is the switching cost incurred by $\ALG$. 

    This implies that $h(x)$ must satisfy the following:
    \[
    h(\gamma L) \gamma L + \int^x_{\gamma L} u d h(u) - 2\beta h(x) + (1 - h(x))L \geq \nicefrac{x}{\gamma}, \ \forall x \in [L, U].
    \]

    By integral by parts, the above implies that the conversion function must satisfy $h(x) \geq \frac{\nicefrac{x}{\gamma} - L}{x - 2\beta - L} + \frac{1}{x - 2\beta - L} \int_{\gamma L}^x h(u) du$.  By Gr\"{o}nwall's Inequality \cite[Theorem 1, p. 356]{Mitrinovic:91}, we have that
    \begin{align}
    h(x) & \geq \frac{\nicefrac{x}{\gamma} - L}{x - 2\beta - L} + \frac{1}{x - 2\beta - L} \int_{\gamma L}^x \frac{\nicefrac{U}{\gamma} - L}{u - 2\beta - L} \cdot \exp\left( \int_u^x \frac{1}{r - 2\beta - L} dr \right) du \\
    & \geq \frac{\nicefrac{x}{\gamma} - L}{x - 2\beta - L} + \int_{\gamma L}^x \frac{\nicefrac{U}{\gamma} - L}{(u - 2\beta - L)^2} du \\
    & \geq \frac{\nicefrac{x}{\gamma} - L}{x - 2\beta - L} + \left[ \frac{L \gamma - 2 \beta - L}{\gamma u - 2\beta \gamma - L \gamma } + \frac{1}{\gamma} \ln \left( u - 2\beta -L \right) \right]_{\gamma L}^x \\
    & \geq \frac{1}{\gamma} \ln \left( x - 2\beta -L \right) - \frac{1}{\gamma} \ln \left( \gamma L - 2\beta -L \right), \quad \forall x \in [L, U] .\label{eq:rob-bound-gron-max}
    \end{align}

    In addition, to simultaneously satisfy $\eta$-consistency when the maximum price prediction is $\hat{P} = U$, $\ALG$ must satisfy $\ALG(\mathcal{I}_U) \geq \nicefrac{\OPT(\mathcal{I}_U)}{\eta} = \nicefrac{U}{\eta}$.  Combining this constraint with $h(U) = 1$ gives:

    \begin{align}
    \int_{\gamma L}^U h(u) du + 2\beta &\leq \frac{(\eta - 1)U}{ \eta}. \label{eq:const-bound-gron-max}
    \end{align}
    
    By combining equations \eqref{eq:rob-bound-gron-max} and \eqref{eq:const-bound-gron-max}, the conversion function $h(x)$ of any $\gamma$-robust and $\eta$-consistent online algorithm given prediction $\hat{P} = L$ must satisfy the following:
    \begin{align}
    \frac{1}{\gamma} \int_{\gamma L }^U \ln \left( \frac{ x - 2\beta -L }{ \gamma L - 2\beta -L  } \right)  du + 2\beta &\leq \frac{(\eta - 1)U}{ \eta}.
    \end{align}
    
    When all inequalities are binding, this equivalently gives that 
    \[
    \eta \geq \left[ 1 - \frac{U- 2\beta -L}{U \gamma} \ln \left( \frac{ U - 2\beta -L }{ \gamma L - 2\beta -L  } \right) + \frac{1}{\gamma} - \frac{L}{U} - \frac{2\beta}{U} \right]^{-1}.
    \]
    For any value of $\gamma$,  $\eta \geq \nicefrac{1}{1 - 2\beta/U}$, and this completes the proof.
\end{proof}

By combining Lemmas~\ref{lem:bestPricePareto-min} and \ref{lem:bestPricePareto-max}, the result follows.
\end{proof}

\bigskip

\subsection{Advice complexity lower bound for \OCS}\label{sec:advice-comp}
In the following, we prove \autoref{thm:advicecomplexity}, which states that for any instance of \OCS with sequence length $T$, at least $\Omega(T)$ bits of advice are necessary to achieve $1$-consistency.

\begin{proof}[Proof of Theorem~\ref{thm:advicecomplexity}]

Recall that $T \geq 1$ is the integer length of a valid sequence for \PrOb.
We start by deriving a lower bound on the \textit{number of unique solutions} for sequences of length $T$, which we denote as $N_{T}$. Note that any valid solution satisfies the following, assuming that rate constraints are all 1 (i.e., $d_t = 1 \ \forall t \in [T]$):
\[
\sum_{i=1}^{T} x_t = 1, 0 \leq x_t \leq 1 \ \forall t \in [T].
\]
To simplify the problem, we discretize the action space $x_t \in [0,1]$ into $T+1$ bins, allowing each $x_t$ to take on values $\{ 0/T, 1/T, \dots, T/T \}$.  This simplifies the action space, making it countable while still allowing for solutions that either evenly distribute conversion decisions across all $T$ time steps or convert the entire asset in one-time step. By the balls-and-bars formula~\cite[Page~40]{Flajolet:09}, the number of solutions which satisfy the above constraints is $N_T \geq {2T \choose T - 1}$.

\smallskip

We prove the theorem statement by contradiction: suppose there exists an algorithm \ALG which uses $A_T := o(T)$ bits of accurate advice (i.e., sublinear in $T$) and achieves a consistency of $1$ on any input sequence with length $T$.  With $A_T$ bits of advice, $\ALG$ can theoretically distinguish between $2^{A_T}$ different solutions.

Suppose the same $\ALG$ processes an arbitrary sequence with length $T+1$.  Since the number of advice bits is sublinear in $T$, we know that $A_{T+1} < A_T + 1$.
On the other hand, the number of unique solutions $N_{T+1}$ satisfies the following:
\[
N_{T+1} \geq {2T + 2 \choose T} = \frac{(2T+2)!}{(T)! (T+2)!} = \frac{(2T)! (2T+1) (2T+2)}{(T-1)! (T) (T+1)! (T+2)} = \frac{4T^2 + 6T + 2}{T^2 + 2T} \cdot N_T
\]
Note that $\lim_{T \rightarrow \infty} N_{T+1} \geq 4 N_T$.

By assumption we have that $\ALG$ can distinguish between $2^{A_{T+1}}$ different solutions, where $2^{A_{T+1}} < 2 \cdot 2^{A_T}$.  Conversely, as the length of the sequence increases by $1$, the number of unique solutions grows by at least $N_{T+1} \geq 4 \cdot N_T$.  

If we assume that the advice is powerful enough to distinguish between the possible solutions with the sequences of length $T$, then we have $2^{A_T} \approx N_T$. However, this implies that by pigeonhole, because $2^{A_{T+1}} < 4 N_T \leq N_{T+1}$, there must exist at least two unique solutions for sequences of length $T+1$ which map to the same advice string.  Then a contradiction follows by constructing two sequences, where each of these two solutions, respectively, is the unique optimal.  By definition, such a construction forces $\ALG$ to achieve a competitive ratio strictly greater than $1$.
\end{proof}

\bigskip

\subsection{Remarks on the Advice Complexity of \OCS} \label{apx:specialcase-advice}
In Section~\ref{sec:advice-model}, we gave a strong advice complexity lower bound (see Theorem~\ref{thm:advicecomplexity}) showing that advice which grows linearly in the length of the input sequence is necessary to achieve $1$-consistency for the general formulation of \OCS.  

In this supplementary section, we pick up from the brief remarks in Section~\ref{sec:advice-model} and discuss how this lower bound might be improved if certain aspects of the \OCS problem are relaxed and simplified.  Specifically, we will assume that the rate constraints are effectively ``removed'', so they satisfy $\{d_t = C\}_{\forall t \in [T]}$, and the cost/price functions are linear.  For ease of presentation, we focus on the minimization setting (\OCSmin) here.  We start with a useful observation about the optimal solution which holds under the above conditions.

\begin{obs}\label{obs:optBounded}
    For any instance of \OCSmin where the rate constraints satisfy $\{d_t = C\}_{\forall t \in [T]}$ and all cost functions are linear, the optimal solution is upper bounded by $\OPT \leq c_{\min} + 2\beta$, where $c_{\min}$ denotes the linear coefficient of the cost function which satisfies $g_{\min}(1) \leq g_t(1) \ \forall t \in [T]$.
\end{obs}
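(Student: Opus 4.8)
The plan is to exhibit a single \emph{feasible} solution to the offline program \eqref{align:objMin} whose total cost is at most $c_{\min} + 2\beta$; since $\OPT$ is by definition the minimum cost over all feasible solutions, this immediately yields the claimed bound. Concretely, I would fix a time step $m \in [T]$ at which the cheapest cost function is revealed, i.e. $g_m(\cdot) = g_{\min}(\cdot)$. Such an index exists because the horizon is finite, and since every cost function is linear we may write $g_m(x) = c_{\min}\, x$. Consider the candidate that purchases the entire asset in this single step: $x_m = 1$ and $x_t = 0$ for all $t \neq m$. This candidate is feasible: $\sum_{t \in [T]} x_t = 1$ satisfies the deadline constraint, and using the simplifying assumption $d_t = C = 1$ for all $t$, we have $x_t \in [0, d_t]$ for every $t$ (in particular $x_m = 1 \le d_m$), so the rate constraints hold as well.

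Next I would bound the objective value of this candidate. Because $g_t(0) = 0$ for $t \neq m$, the purchasing cost is $\sum_{t \in [T]} g_t(x_t) = g_m(1) = c_{\min}$. For the switching cost, recall the modeling convention $x_0 = x^{T+1} = 0$: the only consecutive pairs whose decisions differ are $(x_{m-1}, x_m)$ and $(x_m, x_{m+1})$, each contributing $\beta\lvert 1 - 0\rvert = \beta$ (the ``turn on'' and ``turn off'' costs), while every other term $\beta\lvert x_t - x_{t-1}\rvert$ vanishes. Hence the switching cost is exactly $2\beta$, and the candidate's total objective value is $c_{\min} + 2\beta$. Since $\OPT$ cannot exceed the cost of any feasible solution, $\OPT \le c_{\min} + 2\beta$, as claimed.

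I do not anticipate a genuine obstacle here; the argument is a one-line construction. The only subtlety worth flagging explicitly is \emph{where} the two stated simplifications enter: linearity of the cost functions is exactly what lets us equate $g_{\min}(1)$ with the coefficient $c_{\min}$ (for a general convex $g$ with $g(0)=0$, buying the whole unit at once need not cost the marginal rate at the origin), and removing the rate constraints via $d_t = C$ is precisely what makes the ``buy everything at the single best step'' solution admissible. Without the latter relaxation one would be forced to spread purchases over several steps, and in general could not avoid paying strictly more than the single best coefficient plus $2\beta$ — which is the intuition for why the full problem retains higher advice complexity (cf.\ \autoref{thm:advicecomplexity}).
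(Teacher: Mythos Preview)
Your proposal is correct and takes essentially the same approach as the paper: exhibit the feasible ``buy everything at the single cheapest step'' solution, compute its purchasing cost as $c_{\min}$ and its switching cost as $2\beta$, and conclude that $\OPT$ is at most this value. The paper's own argument is identical in substance, just less explicit about verifying feasibility and tallying the switching-cost terms.
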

\begin{proof}[Proof for Observation~\ref{obs:optBounded}]
By definition and without loss of generality, there exists some cost function in the valid sequence which is the lowest, which we denote as $g_{\min}$.  Since we assume all cost functions are linear, it follows that the linear coefficient $c_{\min}$ of this cost function satisfies $c_{\min} \leq c_t \ \forall t \in [T]$ (i.e. it is the smallest linear coefficient of any cost function in the sequence).

Then by definition, the solution which purchases the full item at cost of $g_{\min}(1)$ and incurs a switching cost of $2\beta$ is always an upper bound on the objective value of the optimal solution (i.e., if any solution incurs a cost strictly greater than $c_{\min} + 2\beta$, it cannot be optimal).
\end{proof}

Using this useful observation for the restricted set of \OCS instances, we give two conjectures which jointly suggest that within the space of optimal solutions, there exists a ``simple'' solution which obtains the optimal objective value.  First, we suggest that if an optimal solution chooses non-zero decisions across a contiguous sequence of cost functions, there is an equivalent optimal solution which chooses the same non-zero decision across the same contiguous sequence of cost functions.

\begin{conj}\label{conj:optRampsConstant}
    For any instance of \OCSmin where the rate constraints satisfy $\{d_t = C\}_{\forall t \in [T]}$ and all cost functions are linear, if an optimal solution chooses non-zero actions (i.e. $x_t > 0$) across a contiguous sequence of cost functions with length $m$, there is always an equivalent optimal solution which chooses the same non-zero action $x^*$ for each of the $m$ cost functions.
\end{conj}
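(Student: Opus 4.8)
The plan is to reduce the claim to a single convex optimization over the ``mass'' the run is allowed to spend, and then exploit the piecewise‑linear structure of the switching penalty together with one clean observation: the constant allocation sits on \emph{every} linear piece of that penalty at once.

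First I would set up the reduction. Let $\sigma^\star=\{x_t^\star\}$ be the postulated optimal solution of \OCSmin, and enlarge the given contiguous block to a \emph{maximal} run $[a,b]$ of positive decisions; by maximality (or because it abuts the boundary, where the convention $x_0=x_{T+1}=0$ applies) the two flanking decisions satisfy $x_{a-1}^\star=x_{b+1}^\star=0$. Put $m=b-a+1$ and $M=\sum_{t=a}^{b}x_t^\star$, so $0<M\le 1$. Since all rate constraints equal $1$ and all cost functions are linear with slopes $c_t$, replacing the block $(x_a^\star,\dots,x_b^\star)$ by any $x\in\Delta_M:=\{x\in\mathbb{R}^m_{\ge 0}:\sum_t x_t=M\}$ keeps the solution feasible and alters the total objective only through the quantity $F(x):=\sum_{t=a}^{b}c_t x_t+\beta\bigl(x_a+\sum_{t=a+1}^{b}|x_t-x_{t-1}|+x_b\bigr)$, because the switching at the two ends is exactly $\beta|x_a-0|$ and $\beta|x_b-0|$ and everything outside the run is untouched. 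Hence optimality of $\sigma^\star$ forces $x^\star:=(x_a^\star,\dots,x_b^\star)$ to minimize $F$ over $\Delta_M$, and by hypothesis this minimizer lies in the relative interior of $\Delta_M$ (every coordinate strictly positive). It therefore suffices to prove: \emph{if the convex function $F$ attains its minimum over $\Delta_M$ at a relative‑interior point, then the constant point $\bar x=(M/m,\dots,M/m)$ also attains it} --- for then overwriting the run of $\sigma^\star$ with $\bar x$ yields an optimal solution that is constant, with value $M/m>0$, on $[a,b]$ and hence on the originally specified block.

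Second, the key lemma. Write $F=\max_{s\in\{\pm 1\}^{m-1}}\ell_s$, where $\ell_s(x)=\sum_{t=a}^{b}c_t x_t+\beta x_a+\beta x_b+\beta\sum_{t=a+1}^{b}s_t(x_t-x_{t-1})$ is affine; this holds since $|u|=\max(u,-u)$, and $\ell_s=F$ precisely on the polyhedral region $R_s$ where $\operatorname{sign}(x_t-x_{t-1})$ agrees with $s_t$. The crucial point is that $\bar x\in R_s$ for \emph{every} $s$, since all the differences $x_t-x_{t-1}$ vanish at $\bar x$; consequently $\ell_s(\bar x)=F(\bar x)$ for all $s$. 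Now take a relative‑interior minimizer $x^\star$ of $F$ over $\Delta_M$. Being interior and $F$ convex, $x^\star$ is in fact an unconstrained minimizer of $F$ over the affine hull $A=\{\sum_t x_t=M\}$, so Fermat's rule gives $0\in\partial F(x^\star)+\operatorname{span}\{\mathbf 1\}$; using $\partial F(x^\star)=\operatorname{conv}\{\nabla\ell_s:\ell_s(x^\star)=F(x^\star)\}$ there exist convex weights $\theta_s$ supported on the active pieces with $\sum_s\theta_s\nabla\ell_s=\mu\mathbf 1$ for some $\mu$. Then $\bar\ell:=\sum_s\theta_s\ell_s$ is an affine function with gradient $\mu\mathbf 1$, hence \emph{constant on $A$}; moreover $\bar\ell(x^\star)=\sum_s\theta_s\ell_s(x^\star)=F(x^\star)$ (each active piece equals $F$ at $x^\star$) and $\bar\ell(\bar x)=\sum_s\theta_s\ell_s(\bar x)=F(\bar x)$ (every piece equals $F$ at $\bar x$). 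Since $\bar\ell$ is constant on $A$ and $x^\star,\bar x\in A$, we conclude $F(\bar x)=\bar\ell(\bar x)=\bar\ell(x^\star)=F(x^\star)=\min_{\Delta_M}F$, which is the lemma.

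Assembling: the lemma plus the reduction produce an equivalent optimal solution that is constant on the maximal run and hence on the given contiguous block, proving the conjecture; one should also note explicitly that the special‑case hypotheses (linear costs, $d_t=1$) are used exactly to make the ``overwrite the run'' operation feasible and to decouple $F$ from the rest of the instance. I expect the main obstacle to be the first‑order step --- correctly identifying the normal direction $\operatorname{span}\{\mathbf 1\}$, verifying the subdifferential formula for $\max_s\ell_s$, and in particular handling the case where $x^\star$ lies on a shared boundary of several pieces so that several $\ell_s$ are simultaneously active; the observation that $\bar x$ belongs to every piece of $F$ is what makes the argument close, and non‑maximal blocks are dispatched simply by first enlarging to a maximal run.
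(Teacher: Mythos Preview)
Your approach is correct and is in fact a substantial upgrade over what the paper does. In the paper this statement is posed as a \emph{conjecture}, and what follows is explicitly labeled ``Intuition for Conjecture'' --- it is not a proof. The paper treats only the toy case where the first $m-1$ slopes all equal some $c_i$ and the last slope equals $c_j>c_i$, derives by direct comparison of two candidate schedules the threshold $c_j=c_i+\tfrac{2\beta}{m-1}$ at which both achieve the same cost, and stops there; it never handles an arbitrary slope pattern inside the block.

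Your argument, by contrast, is a complete convex-analytic proof. The reduction to minimizing the local functional $F$ over the simplex slice $\Delta_M$ is clean (and note that since $M\le 1$ and all coordinates are nonnegative, every $x\in\Delta_M$ automatically has $x_t\le M\le 1$, so the rate constraints are satisfied and the feasibility claim is fine). The key lemma is the genuinely new content: writing $F$ as a finite max of affine functions, applying the max-rule for the subdifferential together with Fermat's rule at the relative-interior minimizer to produce an affine combination $\bar\ell$ that is constant on the affine hull, and then exploiting the observation that the constant allocation $\bar x$ lies simultaneously on \emph{every} linear piece of $F$ (because all internal differences vanish there). This is exactly what the paper's special-case computation is groping toward but never captures in generality.

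So your route is different from the paper's and strictly stronger --- you close a statement the paper leaves open, with a short and robust argument. The one thing worth tightening when you write it up is the overloading of $m$ (you use it for the length of the enlarged maximal run, while the conjecture uses it for the given block); the enlargement step is correct, but the notational collision invites confusion.
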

\begin{proof}[Intuition for Conjecture~\ref{conj:optRampsConstant}]
Consider a contiguous sequence of cost functions with length $m$, where (without loss of generality) the first $m-1$ cost functions have linear coefficient $c_i$, and the last cost function has linear coefficient $c_j$.  We abuse notation and generally use subscript $i$ to refer to the first $m-1$ time slots and $j$ refers to the final time slot, which is of particular interest.  Assume that $c_j > c_i$.

For the sake of contradiction, assume that there is some optimal solution which assigns non-zero decisions to all of these cost functions, where the decision is denoted by $x_i$ for the first $m-1$ cost functions, and $x_j$ for the last cost function.

Then we ask the following guiding question: \textit{when should an optimal solution \OPT set $x_j < x_i$?}  (i.e. when does it make sense for \OPT to take less than $x_i$ of the cost function w/ coefficient $c_j$?)

Intuitively, if $c_j$ is much larger than $c_i$, \OPT should not accept it at all.  On the other hand, if $c_j$ is only marginally worse than $c_i$, it makes more sense to set $x'_i = x_j$ to minimize the overall switching cost.  Enumerating these cases formally, the following expression describes the case where taking $x_j < x_i$ gives a lower overall cost than uniformly trading across the whole interval at a rate $x'_i$.
\begin{align*}
\sum_{i \not = j} c_i x_i + c_j x_j + 2\beta \max(x_i) &\leq \sum_{i \not = j} c_i x'_i + c_j x'_i + 2\beta x'_i.
\end{align*}
The inequality then implies that if the following inequality holds, setting $x_j < x_i$ is a better or equal outcome:
\begin{align*}
(c_j - c_i) \left[ \lvert \max(x_i) - x_j \rvert - \frac{\lvert \max(x'_i) - x_j \rvert }{m} \right] &\leq 2\beta \left[ \frac{\lvert \max(x'_i) - x_j \rvert }{m} \right]  .
\end{align*}
This subsequently gives the condition that when $c_j = c_i + \frac{2\beta}{m-1}$, $x_j$ can be any value between $0$ and $x'_i$ and the same cost will be attained.  However, also this implies that in this edge case, the optimal solution can either decide to distribute the trading equally (i.e. $x'_i$ for each of the $m$ cost functions), or completely ignore the last cost function (i.e. $x^*_i$ for each of the first $m-1$ cost functions), obtaining an identical objective value.

It follows, then, that if $c_j > c_i + \frac{2\beta}{m-1}$, it should be completely ignored, and if $c_j < c_i + \frac{2\beta}{m-1}$, it should be fully accepted (i.e. $x_j = x_i$).  Thus, in each of these enumerated cases, there is an optimal solution which assigns a single constant decision to each of the $m$ cost functions, and the assumption causes a contradiction.
\end{proof}

This conjecture implies that within a consecutive interval where the optimal solution decides to trade, there can be a single decision value applied across the entire interval which attains the optimal competitive ratio.  Next, we give intuition for a result showing that a solution can achieve the optimal objective by only ``ramping on'' once during a sequence.

\begin{conj}\label{conj:optRampsOnce}
    For any instance of \OCSmin where the rate constraints satisfy $\{d_t = C\}_{\forall t \in [T]}$ and all cost functions are linear, there is an optimal solution where any non-zero actions are clustered into a single contiguous sequence of arbitrary length $k$.
\end{conj}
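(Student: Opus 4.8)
The plan is to take an optimal offline solution of \eqref{align:objMin} of \emph{smallest possible support} and argue that its support cannot split into two or more pieces; the only tools needed are linearity of the cost functions, the absence of rate constraints ($d_t = 1$), and the elementary observation that a convex combination with positive weights that equals the minimum of the combined quantities forces every quantity to equal that minimum. This sidesteps the tension that dooms a naive approach: one might try to ``flatten'' a block to a constant or to ``merge'' blocks, but flattening can raise purchasing cost even as it lowers switching cost (cf.\ the analysis behind Conjecture~\ref{conj:optRampsConstant}), so the right move is not to modify a block but to \emph{collapse the whole solution onto a single block}.

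Concretely, first I would fix an optimizer $\mathbf{x}^\star$ of \eqref{align:objMin} (one exists since the feasible set $\{x \in [0,1]^T : \sum_t x_t = 1\}$ is compact and the objective continuous), and among all optimizers choose one minimizing $\lvert\{t : x^\star_t > 0\}\rvert$. Write its support as a disjoint union $B_1 \sqcup \cdots \sqcup B_p$ of maximal contiguous blocks $B_i = \{a_i,\dots,b_i\}$, and suppose toward a contradiction that $p \ge 2$. Set $W_i := \sum_{t \in B_i} x^\star_t > 0$ and let $\mathrm{tv}_i$ be the total variation of the finite sequence $(0, x^\star_{a_i}, \dots, x^\star_{b_i}, 0)$. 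Because consecutive blocks are separated by a zero entry and $x^\star_0 = x^\star_{T+1} = 0$, the switching cost of $\mathbf{x}^\star$ splits exactly as $\beta \sum_i \mathrm{tv}_i$; hence $\OPT = \mathrm{cost}(\mathbf{x}^\star) = \sum_i D_i$ with $D_i := \sum_{t\in B_i} c_t x^\star_t + \beta\,\mathrm{tv}_i$, and $\sum_i W_i = 1$.

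The heart of the argument is a rescaling step. For each $i$, define $y^{(i)}$ by $y^{(i)}_t := x^\star_t / W_i$ on $B_i$ and $y^{(i)}_t := 0$ elsewhere. Since $x^\star_t \le W_i$ for $t \in B_i$, we get $y^{(i)}_t \in [0,1]$ and $\sum_t y^{(i)}_t = 1$, so $y^{(i)}$ is feasible precisely because no rate constraints bind. Linearity of $g_t$ is what makes the cost transparent: the purchasing cost of $y^{(i)}$ is $(1/W_i)\sum_{t\in B_i} c_t x^\star_t$, and since $\mathrm{supp}(y^{(i)}) = B_i$ is again flanked by zeros its switching cost is $\beta\,\mathrm{tv}_i / W_i$, so $\mathrm{cost}(y^{(i)}) = D_i / W_i$. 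Therefore $\OPT = \sum_i D_i = \sum_i W_i \cdot (D_i/W_i) = \sum_i W_i\,\mathrm{cost}(y^{(i)})$ is a convex combination, with strictly positive weights $W_i$, of values $\mathrm{cost}(y^{(i)})$ each of which is $\ge \OPT$; such a combination equals $\OPT$ only if every $\mathrm{cost}(y^{(i)}) = \OPT$, so each $y^{(i)}$ is itself optimal. But $\lvert\mathrm{supp}(y^{(i)})\rvert = \lvert B_i\rvert < \lvert\mathrm{supp}(\mathbf{x}^\star)\rvert$ whenever $p \ge 2$, contradicting minimality. Hence $p = 1$: the nonzero decisions of $\mathbf{x}^\star$ occupy a single contiguous block of some length $k$, which is exactly the statement. (The same argument with ``profit'' in place of ``cost'' and the inequality direction reversed handles \OCSmax, and as a byproduct a minimal-support optimal solution has no interior zeros.)

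I expect the main obstacles to be bookkeeping rather than conceptual: (i) showing that the switching cost of $\mathbf{x}^\star$ decomposes \emph{exactly} as $\beta\sum_i \mathrm{tv}_i$, which uses the boundary conditions $x^\star_0 = x^\star_{T+1} = 0$ together with the fact that the gaps between maximal blocks contribute only $|0-0|=0$ terms; and (ii) checking that $y^{(i)}$ is feasible and has cost exactly $D_i/W_i$, which is precisely where the two hypotheses of the conjecture — linear cost functions and $d_t = 1$ — are used. Everything else reduces to the one-line convex-combination observation, so the real work is making sure the block/gap accounting in (i) and (ii) is airtight.
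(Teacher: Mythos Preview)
Your argument is correct and in fact upgrades the paper's conjecture to a theorem; the paper itself supplies only heuristic ``intuition.''  The paper's approach fixes a specific two--segment configuration (a singleton containing $c_{\min}$ and a disjoint $k$--segment with average coefficient $c_{\text{avg}}$), computes the threshold $c_{\text{avg}} = c_{\min} + \frac{(k-1)2\beta}{k}$ at which transferring mass from the first segment to the second becomes beneficial, and argues that crossing this threshold forces an all--or--nothing transfer.  This gives useful quantitative insight for the surrounding discussion of advice complexity, but it handles only one special shape and does not rule out optima supported on, say, three blocks of comparable weight.

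Your route is structurally different: pick an optimizer of minimal support, decompose into maximal blocks, rescale each block to a feasible solution $y^{(i)}$, and observe that the optimizer's cost is a strictly--positive convex combination of the $\text{cost}(y^{(i)})\ge\OPT$, forcing each $y^{(i)}$ to already be optimal and contradicting support minimality when $p\ge 2$.  The two bookkeeping points you flag go through cleanly: the switching cost decomposes block--wise because consecutive maximal blocks are separated by at least one zero entry and $x_0^\star=x_{T+1}^\star=0$, and feasibility of $y^{(i)}$ uses exactly the two hypotheses ($d_t=1$ gives $y^{(i)}_t = x^\star_t/W_i \le 1$ since $x^\star_t \le \sum_{s\in B_i} x^\star_s = W_i$, and linearity makes the purchasing cost scale by $1/W_i$).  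Your argument is shorter, fully rigorous, and works uniformly for any number of blocks; the paper's buys more explicit threshold information but remains incomplete as a proof.
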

\begin{proof}[Intuition for Conjecture~\ref{conj:optRampsOnce}]
Recall Observation~\ref{obs:optBounded}, which states that $\OPT$ is upper bounded by $c_{\min} + 2\beta$, and there is always a valid solution which simply accepts $g_{\min}(1)$ (in one time step) and incurs a switching cost of $2\beta$.

Now suppose there is a sequence with two contiguous segments of interest.  The first is a segment of length $1$, containing a single cost function with coefficient $c_{\min}$.  The other is a segment of arbitrary length $k$, containing $k$ cost functions, where the average linear coefficient is $c_{\text{avg}}$.  We assume that $c_{\text{avg}} \geq c_{\min}$, and that the two segments are disjoint, i.e. there is at least one cost function separating them in time.

Then we ask the following guiding question: \textit{when should an optimal solution \OPT consider ``transferring'' some amount of trading $x_{\text{amount}}$ from the first segment $c_{\min}$ to the $k$-segment?}  Intuitively, if $c_{\text{avg}} = c_{\min}$, it makes sense for \OPT to move all of the trading onto the $k$-segment, since it is strictly better for the switching cost.  On the other hand, if $c_{\text{avg}}$ is too large, the optimal solution will ignore it completely.  We are primarily interested in the case where the optimal solution will decide to trade some amount at the first segment, followed by a different amount at the second $k$-segment.

Enumerating these cases formally, the following expression describes the case where taking some amount $x_{\text{amount}}$ in the $k$-segment gives a lower overall cost compared to trading completely in the singleton segment.
\begin{align*}
c_{\min} (1-x_{\text{amount}}) + c_{\text{avg}} x_{\text{amount}} + 2\beta \left( (1-x_{\text{amount}}) + \frac{x_{\text{amount}}}{k} \right) &\leq c_{\min} + 2\beta\\
c_{\text{avg}} x_{\text{amount}} + \frac{2\beta x_{\text{amount}}}{k} &\leq  x_{\text{amount}} \left[ c_{\min} + 2\beta \right].
\end{align*}
The above then implies that if the following inequality holds, moving $x_{\text{amount}}$ to the $k$-segment is a better or equal outcome:
\begin{align*}
c_{\text{avg}} \leq c_{\min} + \frac{(k\cdot x_{\text{amount}} - 1) 2\beta}{k \cdot x_{\text{amount}}}.
\end{align*}
Note that $x_{\text{amount}} \in (0,1)$ under the assumption that it makes sense for \OPT to accept some amount in both segments.
However, the above gives the following corollary result:
\begin{align*}
c_{\text{avg}} < \frac{(k - 1) 2\beta}{k}.
\end{align*}
Observe that if $c_{\text{avg}} = \frac{(k - 1) 2\beta}{k}$, the solutions that fully accept in the second segment and fully accept in the first segment, respectively, obtain the same objective value: $c_{\text{avg}} + 2\beta/k = c_{\min} + 2\beta$.
This causes a contradiction because it implies that if $c_{\text{avg}} < \frac{(k - 1) 2\beta}{k}$, a \textit{strictly better} objective value can be attained by moving \textit{all} of the acceptance from the first to the second segment.  

Since the condition for moving some $x_{\text{amount}} \in (0,1)$ from the first segment to the second segment implies the above is true as a corollary (i.e. $c_{\text{avg}} \leq c_{\min} + \frac{(k\cdot x_{\text{amount}} - 1) 2\beta}{k \cdot x_{\text{amount}}} < \frac{(k - 1) 2\beta}{k}$), it seems that \textit{as soon as it makes sense to move some acceptance from one segment to another, it equivalently makes sense to move all of the acceptance to the other segment.}

This subsequently gives the condition that when $c_j = c_i + \frac{2\beta}{m-1}$, $x_j$ can be any value between $0$ and $x'_i$ and the same cost will be attained.  However, also this implies that in this edge case, the optimal solution can either decide to distribute the trading equally (i.e. $x'_i$ for each of the $m$ cost functions), or completely ignore the last cost function (i.e. $x^*_i$ for each of the first $m-1$ cost functions), obtaining an identical objective value.
\end{proof}

The significance of these results is primarily in the suggestion that an optimal solution can be fully characterized by \textit{two critical values} in this simplified setting.
Namely, if Conjectures~\ref{conj:optRampsConstant} and \ref{conj:optRampsOnce} theoretically hold, then there is always an optimal solution in the simplified \OCS setting which clusters all of its acceptance decisions into a single contiguous interval, and the acceptance rate is uniform.

Such an optimal solution can be recovered by a single start index of the acceptance interval $i^\star$, and the acceptance rate $x^\star$.  Intuitively, because these parameters do not grow with the length of the input sequence $T$, this seems to suggest that the advice complexity lower bound no longer holds in the setting without rate constraints and where cost functions are linear.

Although we do not explore this direction further because rate constraints and convex cost functions are required for most of our motivating applications (particularly carbon-aware EV charging as explored in Section~\ref{sec:eval}) it would be very interesting to explore this dynamic further in future work.  It would be particularly interesting to see whether similar or improved consistency-robustness trade-offs can be achieved in this simplified \OCS setting where the predictions given to a learning-augmented algorithm are of size sublinear in the length of the input.

\bigskip

\subsection{\texttt{RO-Advice} consistency and robustness for \OCSmin} \label{sec:advice-proofs-min}

In the following, we prove \autoref{thm:advice-consist-robust-min}, which states that the instantiation of \ROAdv for \OCSmin is $(1+\epsilon)$-consistent and $\left( \frac{\nicefrac{(U + 2 \beta)}{L} (\alpha - 1 - \epsilon) + \alpha \epsilon}{( \alpha - 1) } \right)$-robust, where $\epsilon$ and $\alpha$ are defined in Definition~\ref{dfn:roadvmin} and \eqref{eq:alpha}, respectively.  We note that consistency describes the performance of the algorithm when predictions are completely correct, while robustness describes the performance when predictions are adversarially wrong~\cite{Lykouris:18, Purohit:18}.

\begin{proof}[Proof of Theorem \ref{thm:advice-consist-robust-min}]
We start by noting that the online solution given by \texttt{RO-Advice-min} is always feasible (under the assumption that $\sum_{t=1}^{T} \hat{x}_t = 1$), since 
\[
\sum_{t=1}^{T} x_t = \sum_{t=1}^{T} \left[ \lambda \hat{x}_t + (1 - \lambda) \tilde{x}_t \right] = \lambda + (1 - \lambda) = 1.
\]
Let $\mathcal{I} \in \Omega$ be an arbitrary valid \OCSmin sequence.
We denote the \textit{purchasing} and \textit{switching} costs of the robust advice by $\ROROmin_{\text{bought}}$ and $\ROROmin_{\text{switch}}$, respectively.  Likewise, the purchasing and switching costs of the black box advice are denoted by $\ADV_{\text{bought}}$ and $\ADV_{\text{switch}}$.

The cost of \texttt{RO-Advice-min} is bounded by
\begin{align*}
    \texttt{RO-Advice-min}(\mathcal{I}) &= \sum_{t=1}^{T} g_t(x_t) + \sum_{t=1}^{T+1} \beta \lvert x_t - x_{t-1} \rvert \\
    &= \sum_{t=1}^{T} g_t \left( \lambda \hat{x}_t + (1 - \lambda) \tilde{x}_t \right) + \sum_{t=1}^{T+1} \beta \lvert \lambda \hat{x}_t + (1 - \lambda) \tilde{x}_t - \lambda \hat{x}_{t-1} - (1 - \lambda) \tilde{x}_{t-1}  \rvert \\
    &\leq \lambda \sum_{t=1}^{T} g_t(\hat{x}_t) + (1 - \lambda) \sum_{t=1}^{T} g_t( \tilde{x}_t ) \\ 
    & \quad \quad \quad \quad + \sum_{t=1}^{T+1} \beta \lvert \lambda \hat{x}_t - \lambda \hat{x}_{t-1} \rvert + \sum_{t=1}^{T+1} \beta \lvert (1 - \lambda) \tilde{x}_t - (1 - \lambda) \tilde{x}_{t-1} \rvert  \\
    &\leq \lambda \ADV_{\text{bought}}(\mathcal{I}) + (1 - \lambda) \ROROmin_{\text{bought}}(\mathcal{I})\\
    & \quad \quad \quad \quad + \lambda \sum_{t=1}^{T+1} \beta \lvert \hat{x}_t - \hat{x}_{t-1} \rvert + (1 - \lambda) \sum_{t=1}^{T+1} \beta \lvert \tilde{x}_t - \tilde{x}_{t-1} \rvert  \\
    &\leq \lambda \ADV_{\text{bought}}(\mathcal{I}) + (1 - \lambda) \ROROmin_{\text{bought}}(\mathcal{I})\\
    & \quad \quad \quad \quad + \lambda \ADV_{\text{switch}}(\mathcal{I}) + (1 - \lambda) \ROROmin_{\text{switch}}(\mathcal{I})  \\
    &\leq \lambda \ADV(\mathcal{I}) + (1 - \lambda) \ROROmin(\mathcal{I})
\end{align*}

Since $\ROROmin \leq \alpha \cdot \OPT \leq \alpha \cdot \ADV$, we have
\begin{align}
\texttt{RO-Advice-min}(\mathcal{I}) & \leq \lambda \ADV(\mathcal{I}) + (1 - \lambda) \alpha \ADV(\mathcal{I})\\
\texttt{RO-Advice-min}(\mathcal{I}) & \leq (\lambda + (1 - \lambda) \alpha) \cdot \ADV(\mathcal{I})\\ \texttt{RO-Advice-min}(\mathcal{I}) &\leq (1 + \epsilon) \cdot \ADV(\mathcal{I}). \label{eq:min-const}
\end{align}

Since $\ADV \leq U + 2\beta \leq \frac{\OPT}{\nicefrac{L}{(U+2\beta)}}$, we have
\begin{align}
\texttt{RO-Advice-min}(\mathcal{I}) & \leq \lambda \frac{\OPT(\mathcal{I})}{\nicefrac{L}{(U+2\beta)}} + (1 - \lambda) \alpha \OPT(\mathcal{I})\\
\texttt{RO-Advice-min}(\mathcal{I}) & \leq \left[ \frac{\lambda (U + 2\beta)}{L} + (1 - \lambda) \alpha \right] \cdot \OPT(\mathcal{I})\\ 
\texttt{RO-Advice-min}(\mathcal{I}) &\leq \left( \frac{\nicefrac{(U + 2 \beta)}{L} (\alpha - 1 - \epsilon) + \alpha \epsilon}{( \alpha - 1) } \right) \cdot \OPT(\mathcal{I}). \label{eq:min-rob}
\end{align}

By combining \eqref{eq:min-const} and \eqref{eq:min-rob}, the result follows.
\end{proof}

{\color{blue}

\section{Problem Formulation, \RORO Instantiation, and Proofs for \OCSmax} \label{apx:ocsmax}

In this section, we describe the deferred maximization variant of \OCS (\OCSmax).  In \autoref{apx:ocsmax-form}, we present the offline formulation and relevant assumptions for the \OCSmax problem, which builds on the core problem presented in \autoref{sec:probform}.

\subsection{Formulation and assumptions} \label{apx:ocsmax-form}

In \OCSmax, an online player must sell an asset with total capacity $C$ while maximizing their profit (we again assume for notational simplicity that $C=1$).  In this setting, a \textit{concave price function} $g_t(\cdot)$ arrives online at each time step $t \in [T]$, and the switching cost is subtractive rather than additive.  The offline version is formalized below:

\begin{align}
\max_{\{x_t\}_{ t \in [T] }} & \underbrace{ \ \sum_{t=1}^T g_t( x_t ) }_{\text{Selling profit}} - \underbrace{ \sum_{t=1}^{T+1}\beta \lvert x_t - x_{t-1} \rvert }_{\text{Switching cost}}, \ \text{s.t., }  \underbrace{\sum_{t=1}^T x_t = 1,}_{\text{Deadline constraint}} \underbrace{ x_t \in [0, d_t] \ \forall t \in [T]}_{\text{Rate constraint}}.  \label{align:objMax}
\end{align}

\paragraph{Assumptions and additional notation}

For \OCSmax, we assume that price functions $\{ g_t( \cdot ) \}_{t \in [T]}$ have a bounded derivative, i.e. $L \leq d g_t / d x_t \leq U$, where $L$ and $U$ are known positive constants.  Furthermore, we assume that all cost functions $g_t(\cdot)$ are \textit{concave} -- this assumption is important as a way to model \textit{diminishing returns}, and is empirically valid for the applications of interest.

The switching cost coefficient $\beta$ is assumed to be known to the player, and is bounded within an interval ($\beta \in (0, \nicefrac{L}{2})$). 
If $\beta = 0$, the problem is equivalent to one-way trading, and if $\beta$ is ``too large'' (i.e., $> \nicefrac{L}{2}$), we can show that any competitive algorithm should \textit{only} consider the switching cost.
\footnote{As brief justification for the bounds on $\beta$, consider the following solutions for \OCSmax.  
In \OCSmax, a feasible solution can have objective $L - 2\beta$.  If $\beta > \nicefrac{L}{2}$, our solution may have a negative objective value, and this possibility should be avoided to facilitate competitive analysis.
}

\paragraph{Competitive analysis} 

As mentioned in \autoref{sec:probform}, our goal is to design an online algorithm that maintains a small \textit{competitive ratio}~\cite{Manasse:88, Borodin:92}, i.e., performs nearly as well as the offline optimal solution.   Here we note that for a problem with a maximization objective (i.e., \OCSmax), we define the competitive ratio as $\max_{\mathcal{I} \in \Omega} \nicefrac{\OPT(\mathcal{I})}{\ALG(\mathcal{I})}$, where $\Omega$ is the set of all feasible input instances for the problem.

The notions of \textit{consistency} and \textit{robustness} are similarly redefined for the maximization objective.  Letting $\ALG(\mathcal{I}, \varepsilon)$ denote the cost of a learning-augmented online algorithm on input sequence $\mathcal{I}$ when provided predictions with error factor $\varepsilon$, we have that consistency is defined as $\max_{\mathcal{I} \in \Omega} \nicefrac{ \OPT(\mathcal{I}) }{ \ALG(\mathcal{I}, 0) }$, and robustness is defined as $\max_{\mathcal{I} \in \Omega} \nicefrac{ \OPT(\mathcal{I}) }{ \ALG(\mathcal{I}, \mathbf{E}) }$, where $\mathbf{E}$ is a maximum error factor (or $\infty$). 

\subsection{Solving \OCSmax using the \RORO framework}\label{apx:ocsmax-roro}

We now present our instantiation of \RORO for \OCSmax (\ROROmax), with pseudocode summarized in \autoref{alg:roro}.
Since this algorithm uses the same \RORO framework and shares much of the same conceptual structure as \ROROmin (see \autoref{sec:ocsmin-roro}), here we highlight a few important differences and defer detailed proofs for \ROROmax to \autoref{apx:comp-proofs-max} and~\ref{sec:lb-proof-max} for brevity.
As previously, \ROROmax leverages a dynamic threshold $\Phi(w)$, where $w \in [0,1]$ denotes the current utilization.  

\begin{definition}[\ROROmax threshold function $\Phi$]\label{dfn:phi-max}
For any utilization $w \in [0,1]$, $\Phi$ is defined as:
\begin{align}\label{eq:phi-max}
    \Phi(w) = L + \beta + (\omega L - L - 2 \beta) \exp(\omega w),
\end{align}
where $\omega$ is the competitive ratio and is defined in \eqref{eq:omega}.
\end{definition}

At each time step, a price function $g_t$ arrives online and \ROROmax solves two \textit{pseudo-profit maximization problems}.  Before defining the instantiation, we note here that, because the underlying problem is a maximization problem, the pseudo-cost function we define for the \RORO framework has a flipped sign to obtain the correct result on line 6 in \autoref{alg:roro}.

\begin{definition}[\RORO instantiation for \OCSmax (\ROROmax)]
The \RORO framework solves \OCSmax when instantiated with the following parameters:

Let $\Phi (\cdot) : [0,1] \rightarrow [L, U]$ be the dynamic threshold defined in~\ref{dfn:phi-max}.

Then, define the ramping-on problem $\textsc{RampOn}(\cdot)$, ramping-off problem $\textsc{RampOff}(\cdot)$, and the pseudo-cost function $\textsc{PCost}(\cdot)$ as follows:
\begin{align}
    \textsc{RampOn}(g_t( \cdot ), w^{(t-1)}, x_{t-1}) &= \quad \ \ \argmax_{\mathclap{x \in [x_{t-1}, \min (1- w^{(t-1)}, d_t) ]}} \quad \ \ g_t( x ) - \beta (x - x_{t-1}) - \int_{w^{(t-1)}}^{w^{(t-1)} + x}\Phi(u) du,\\ \label{eq:ramponmax}
    \textsc{RampOff}(g_t( \cdot ), w^{(t-1)}, x_{t-1}) &= \argmax_{x \in [0,  \min (x_{t-1}, d_t)]} g_t ( x ) - \beta (x_{t-1} - x) - \int_{w^{(t-1)}}^{w^{(t-1)} + x}\Phi(u) du,\\ \label{eq:rampoffmax}
    \textsc{PCost}(g_t( \cdot ), w^{(t-1)}, x_t, x_{t-1}) &= - \left[ g_t ( x_t ) + \beta \lvert x_t - x_{t-1} \rvert - \int_{w^{(t-1)}}^{w^{(t-1)} + x_t}\Phi(u) du \right].
\end{align}
\end{definition}\label{dfn:roromax}

The optimizations defined above are inserted into the pseudocode defined in \autoref{alg:roro} to create an instance that solves \OCSmax (\ROROmax).  In \autoref{apx:convexproofmax}, we show that both of these optimizations are concave maximization problems, which can be solved efficiently (e.g., using iterative methods).

As previously, we give upper and lower bounds on the competitive ratio of \ROROmax in Theorems~\ref{thm:roromax} and \ref{thm:lowerboundmax}, respectively.  
Both proofs are deferred to Appendices~\ref{apx:comp-proofs-max} and~\ref{sec:lb-proof-max} for brevity.

\begin{thm}\label{thm:roromax}
\autoref{alg:roro} for \OCSmax (\ROROmax) is $\omega$-competitive when the threshold function 
is given by $\Phi(w)$ (from Definition~\ref{dfn:phi-max}), where $\omega$ is the solution to $\frac{U - L - 2 \beta}{\omega L - L - 2 \beta} = \exp(\omega)$ and is given by
\begin{equation}
    \omega \defeq W \left( \frac{\nicefrac{U}{L} - 1 - \nicefrac{2\beta}{L}}{e^{1+\nicefrac{2\beta}{L}}} \right) + 1 + \frac{2\beta}{L} \label{eq:omega}.
\end{equation}
In the above, $W(\cdot)$ is the Lambert $W$ function, defined as the inverse of $f(x) = xe^x$.
\end{thm}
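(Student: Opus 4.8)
The plan is to follow the same three-lemma skeleton used to prove \autoref{thm:roromin}, translating each step from the minimization to the maximization setting: $\OPT$ gets bounded from \emph{above}, $\ROROmax$ from \emph{below}, the roles of $U$ and $L$ are swapped, and the sign in front of $\beta$ flips. Throughout, let $\mathcal{I} \in \Omega$ be an arbitrary valid \OCSmax sequence and let $w^{(j)} \in [0,1]$ denote the utilization attained by \ROROmax just before the compulsory trade.

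First I would prove the analogue of \autoref{lem:opt-lb}, namely $\OPT(\mathcal{I}) \le \Phi(w^{(j)}) + \beta$. Since every $g_t$ is concave with $g_t(0)=0$, any feasible decision vector satisfies $\sum_t g_t(x_t) \le \max_t \left.\tfrac{dg_t}{dx}\right|_0$, and switching costs only subtract, so $\OPT(\mathcal{I})$ is at most the best marginal price in the sequence. I would then argue by contradiction exactly as in \autoref{lem:opt-lb}: if some step $m$ had marginal price exceeding $\Phi(w^{(j)})+\beta$, then — using that $\Phi$ is monotonically \emph{increasing} on $[0,1]$ and $w^{(m)} \le w^{(j)}$ — solving the ramping-on problem \eqref{eq:ramponmax} at step $m$ would force $x_m^+$ large enough that $w^{(m)} > w^{(j)}$, contradicting that $w^{(j)}$ is the final pre-compulsory utilization. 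Next I would prove the analogue of \autoref{lem:alg-ub}: using that the ramping-on/ramping-off optima dominate the trivial feasible point $x = \min(x_{t-1}, d_t)$ and unwinding the (sign-flipped) pseudo-cost, I would show the chosen pseudo-cost at each step $t \le j$ is at most $\beta x_{t-1}$, sum over $t$, telescope the threshold integrals, and account for the compulsory trade (which contributes profit at least $(1-w^{(j)})L$ since every price function has derivative $\ge L$), obtaining $\ROROmax(\mathcal{I}) \ge \int_0^{w^{(j)}} \Phi(u)\,du - \beta w^{(j)} + (1-w^{(j)})L$.

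Then I would handle rate constraints (analogue of \autoref{lem:rate-const-min}) by the same case analysis: a binding $d_t < 1$ that prevents \ROROmax from selling as much as it wants of a high price only improves the ratio, because the profit \OPT forgoes under the same constraint dominates the profit \ROROmax forgoes. Finally I would combine the three bounds and verify the algebraic identity
\[
\int_0^w \Phi(u)\,du - \beta w + (1-w)L \;=\; \tfrac{1}{\omega}\bigl[\Phi(w)+\beta\bigr], \qquad \forall w \in [0,1],
\]
which follows by direct integration ($\int_0^w \Phi(u)\,du = (L+\beta)w + \tfrac{\omega L - L - 2\beta}{\omega}(e^{\omega w}-1)$) together with the defining relation $\tfrac{U-L-2\beta}{\omega L - L - 2\beta} = e^{\omega}$ (which also yields the boundary condition $\Phi(1)+\beta = U$). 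This identity gives $\tfrac{\OPT(\mathcal{I})}{\ROROmax(\mathcal{I})} \le \tfrac{\Phi(w^{(j)})+\beta}{\int_0^{w^{(j)}}\Phi(u)\,du - \beta w^{(j)} + (1-w^{(j)})L} = \omega$, finishing the proof.

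The main obstacle is the contradiction argument for the \OPT upper bound: in the maximization setting it is the concavity of $g_t$ that lets ``a high marginal price exists'' imply ``\ROROmax would have ramped further up,'' and one must carefully juggle the restriction $x \ge x_{t-1}$ in the ramping-on problem against a possibly binding rate constraint $d_m$, while also tracking the flipped sign of \textsc{PCost} so that the reductions to the trivial feasible points in both the ramping-on and ramping-off subproblems go through cleanly. The remaining steps are direct adaptations of the \OCSmin proof, and the integral identity is a routine (if delicate) computation.
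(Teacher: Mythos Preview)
Your proposal is correct and follows essentially the same three-lemma skeleton as the paper's own proof: an upper bound on $\OPT$ via contradiction using the ramping-on analysis, a lower bound on $\ROROmax$ via the pseudo-cost at the ramping-off feasible point, the rate-constraint lemma, and the integral identity $\int_0^w \Phi(u)\,du - \beta w + (1-w)L = \tfrac{1}{\omega}[\Phi(w)+\beta]$. The only cosmetic difference is sign bookkeeping (the paper phrases the second lemma as a lower bound $-\beta x_{t-1}$ on the pseudo-profit rather than an upper bound $\beta x_{t-1}$ on the sign-flipped \textsc{PCost}), which you already anticipate.
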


\begin{thm} \label{thm:lowerboundmax}
No deterministic online algorithm for \PrObmax can achieve a competitive ratio better than $\omega$, as defined in \eqref{eq:omega}.
\end{thm}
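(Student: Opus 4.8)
The plan is to mirror the proof of \autoref{thm:lowerboundmin}, adapting every step to the maximization objective. First I would introduce the family of worst-case \emph{$x$-increasing instances} $\{\mathcal{I}_x\}_{x \in [L,U]}$ (the analog of Definition~\ref{dfn:xinstance-min}; this is Definition~\ref{dfn:xinstance-max}, already used in the proof of Lemma~\ref{lem:bestPricePareto-max}): alternating batches of linear price functions, where a single ``good'' price function whose coefficient climbs from $L$ up to $x$ is repeatedly interrupted by a block of $m$ worst-case price functions with coefficient $L$, and the instance ends with a block of $m$ price functions with coefficient $L$. On $\mathcal{I}_x$ the offline optimum sells the whole asset at price $x$ and incurs only $\nicefrac{2\beta}{m}$ switching cost, so $\OPT(\mathcal{I}_x) = x - \nicefrac{2\beta}{m} \to x$ as $m \to \infty$.

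Next I would show that the behavior of any deterministic online algorithm $\ALG$ on this family is captured by a single \emph{conversion function} $h : [L,U] \to [0,1]$, where $h(x)$ is the fraction of the asset sold before the compulsory trade on $\mathcal{I}_x$. Since conversions are irrevocable and $\mathcal{I}_{x+\delta}$ extends $\mathcal{I}_x$, $h$ is non-decreasing, and because the entire asset must be sold when the best possible price $U$ is observed, $h(U)=1$. Two reductions are then needed: (i) the blocks of $m$ copies of $L$ act as an adaptive adversary, so that once $\ALG$ sells an amount at a good price it is weakly better off ramping back down immediately, forcing its total switching cost on $\mathcal{I}_x$ to be exactly $2\beta h(x)$; and (ii) any $\ALG$ that sells at prices below $\omega^\star L$ (with $\omega^\star$ the target competitive ratio) is weakly dominated by one that does not. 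With these in hand, the profit of $\ALG$ on $\mathcal{I}_x$ is
\begin{align*}
\ALG(\mathcal{I}_x) = h(\omega^\star L)\,\omega^\star L + \int_{\omega^\star L}^{x} u\, dh(u) - 2\beta h(x) + (1 - h(x)) L,
\end{align*}
where the last term is the compulsory trade selling the remaining mass at the worst price $L$.

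Imposing $\omega^\star$-competitiveness, i.e. $\OPT(\mathcal{I}_x) = x \le \omega^\star \ALG(\mathcal{I}_x)$ for all $x \in [L,U]$, and integrating by parts produces an integral inequality lower-bounding $h(x)$ in terms of $\int_{\omega^\star L}^{x} h(u)\,du$. Applying Gr\"onwall's Inequality~\cite[Theorem 1, p. 356]{Mitrinovic:91}, exactly as in the proof of Lemma~\ref{lem:bestPricePareto-max} but with the robustness parameter replaced by $\omega^\star$, yields the explicit bound $h(x) \ge \tfrac{1}{\omega^\star}\ln(x - 2\beta - L) - \tfrac{1}{\omega^\star}\ln(\omega^\star L - 2\beta - L)$ for all $x \in [L,U]$. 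Evaluating at $x = U$ with $h(U)=1$ gives the necessary condition $\omega^\star \ge \ln\!\big( (U - L - 2\beta)/(\omega^\star L - L - 2\beta) \big)$; the smallest feasible $\omega^\star$ is attained when this binds, i.e. when $\tfrac{U - L - 2\beta}{\omega L - L - 2\beta} = e^{\omega}$, which is precisely the defining equation for $\omega$ in \eqref{eq:omega} (see \autoref{thm:roromax}). Hence $\omega^\star \ge \omega$, and no deterministic algorithm can beat $\omega$.

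I expect the main obstacle to be the reduction step — rigorously justifying that the $x$-increasing family is ``universal'' for the lower bound, and in particular that without loss of generality an algorithm (a) never sells below price $\omega^\star L$ and (b) ramps back down immediately after accepting a good price, so that its switching cost on $\mathcal{I}_x$ equals $2\beta h(x)$. These arguments must carefully account for the interaction between the deadline (compulsory) constraint, the sign-flipped switching term of \OCSmax, and the adaptive adversary; a subtle point is that an algorithm ``stubbornly'' holding mass at a low price can be punished later by an arbitrarily better price, which is what makes the conversion-function reduction valid. The remaining analysis (integration by parts, Gr\"onwall, and solving for $\omega^\star$) is a sign-adjusted repeat of computations already carried out in the excerpt, so I would present it briefly and cross-reference the proofs of \autoref{thm:lowerboundmin} and Lemma~\ref{lem:bestPricePareto-max}.
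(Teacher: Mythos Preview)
Your proposal is correct and follows essentially the same approach as the paper: the paper defines the $x$-increasing instances, introduces the conversion function $h$ with $h(U)=1$, argues the switching cost is $2\beta h(x)$, writes the same expression for $\ALG(\mathcal{I}_x)$, imposes $\ALG(\mathcal{I}_x) \geq x/\omega^\star$, and then applies integration by parts and Gr\"onwall's inequality to obtain the bound you state, concluding with the binding condition at $x=U$. Your identification of the reduction steps (no selling below $\omega^\star L$, immediate ramp-down after accepting) as the delicate part is apt; the paper handles them with the same informal adversary argument you outline.
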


By combining Theorems~\ref{thm:roromax} and \ref{thm:lowerboundmax}, we conclude that \ROROmax is optimal for \OCSmax.
We note that the competitive ratio achievable in the maximization setting (i.e., $\omega$ in \eqref{eq:omega}) is not the same as the competitive ratio achievable in the minimization setting (i.e., $\alpha$ in \eqref{eq:alpha}).  This distinction aligns with the existing results for online search problems such as one-way trading~\cite{ElYaniv:01} and $k$-search~\cite{Lorenz:08}, which likewise have different competitive ratios in the maximization and minimization settings.  As previously, we note that when $\beta = 0$, our \RORO algorithm recovers the optimal competitive ratio for the maximization variant of one-way trading, which is $\omega \thicksim 1 + W\left( \nicefrac{ \left( \nicefrac{U}{L} - 1 \right) }{e} \right)$, proven in~\cite{Lorenz:08, SunLee:21}.  The competitive ratio degrades as $\beta$ grows, but the assumed upper bound on $\beta$ prevents it from becoming unbounded.

\subsection{\ROAdvmax: robustly incorporating machine-learned advice for \OCSmax} \label{apx:ocsmax-adv}

We now present our instantiation of the learning-augmented \ROAdv framework for \OCSmax (\ROAdvmax), with pseudocode summarized in \autoref{alg:ro-advice}.
Since this algorithm uses the same \ROAdv framework and shares much of the same conceptual structure as \ROAdvmin (see \autoref{sec:ocsmin-adv}), here we highlight differences and defer detailed proofs to \autoref{sec:advice-proofs-max} for brevity.

\begin{definition}[\ROAdv instantiation for \OCSmax (\ROAdvmax)]\label{dfn:roadvmax}
    Let $\epsilon \in [0, \omega - 1]$, where $\omega$ is the robust competitive ratio defined in \eqref{eq:omega}.  \ROAdvmax then sets a combination factor $\lambda \vcentcolon= \left( \frac{\omega}{1+\epsilon} - 1 \right) \cdot \frac{1}{\omega-1}$, which is in $[0, 1]$.  The robust advice $\{ \tilde{x}_t \}_{\forall t \in [T]}$ is given by the \RORO instantiation for \OCSmax (\ROROmax), given in Definition~\ref{dfn:roromax}.
\end{definition}

In the following theorem, we state the consistency and robustness bounds for the \ROAdvmax meta-algorithm.  We prove this result and discuss its significance in \autoref{sec:advice-proofs-max}.

\begin{thm}\label{thm:advice-consist-robust-max}
Given a parameter $\epsilon \in [0, \omega - 1]$, where $\omega$ is defined as in~\eqref{eq:omega},\\ \ROAdvmax is $(1 + \epsilon)$-consistent and $\left( \frac{(\omega - 1) (1 + \epsilon) }{\epsilon + \frac{(L - 2\beta)}{U} (\omega - 1 - \epsilon)} \right)$-robust for \OCSmax. 
\end{thm}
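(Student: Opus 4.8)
The plan is to mirror the argument used for \ROAdvmin in \autoref{sec:advice-proofs-min}, adapting all inequalities for the maximization objective (where $\OPT(\mathcal{I})/\ALG(\mathcal{I})$ is the ratio of interest, and the switching cost is subtracted rather than added). First I would verify feasibility of the \ROAdvmax online decision: since $x_t = \lambda \hat{x}_t + (1-\lambda)\tilde{x}_t$ and both the black-box advice and the \ROROmax output satisfy $\sum_{t=1}^T (\cdot)_t = 1$, linearity gives $\sum_{t=1}^T x_t = \lambda + (1-\lambda) = 1$, and each $x_t$ respects the rate constraint by convexity of $[0,d_t]$.

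Next I would lower-bound $\ROAdvmax(\mathcal{I})$ by a convex combination of the profits of the two subroutines. Writing the profit as $\sum_t g_t(x_t) - \sum_t \beta|x_t - x_{t-1}|$, I would use \emph{concavity} of each $g_t$ to get $g_t(\lambda \hat{x}_t + (1-\lambda)\tilde{x}_t) \ge \lambda g_t(\hat{x}_t) + (1-\lambda) g_t(\tilde{x}_t)$, and the triangle inequality together with homogeneity of $|\cdot|$ to bound the switching term: $\beta|x_t - x_{t-1}| \le \lambda \beta|\hat{x}_t - \hat{x}_{t-1}| + (1-\lambda)\beta|\tilde{x}_t - \tilde{x}_{t-1}|$. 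Summing over $t$ and separating the purchasing and switching contributions exactly as in the \ROAdvmin proof yields $\ROAdvmax(\mathcal{I}) \ge \lambda \ADV(\mathcal{I}) + (1-\lambda)\ROROmax(\mathcal{I})$. Then, using $\ROROmax(\mathcal{I}) \ge \OPT(\mathcal{I})/\omega = \ADV(\mathcal{I})/\omega$ (from \autoref{thm:roromax} and the black-box advice assumption $\ADV(\mathcal{I}) = \OPT(\mathcal{I})$), I get
\begin{align*}
\ROAdvmax(\mathcal{I}) \ge \left( \lambda + \frac{1-\lambda}{\omega} \right) \ADV(\mathcal{I}) = \frac{1}{1+\epsilon}\,\OPT(\mathcal{I}),
\end{align*}
where the last equality follows by plugging in $\lambda = \left( \frac{\omega}{1+\epsilon} - 1\right)\frac{1}{\omega-1}$ and simplifying $\lambda + (1-\lambda)/\omega = 1/(1+\epsilon)$. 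This establishes $(1+\epsilon)$-consistency.

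For robustness, I would instead bound $\ROROmax(\mathcal{I})$ from below by the \emph{worst-case} value of any feasible solution: since every cost function has derivative $\ge L$ and the worst-case switching cost is exactly $2\beta$, any feasible solution earns profit $\ge L - 2\beta$, and in particular $\ADV(\mathcal{I}) \ge L - 2\beta$. Meanwhile $\OPT(\mathcal{I}) \le U$ (derivative $\le U$, switching cost subtracted). So $\ADV(\mathcal{I}) \ge \frac{L-2\beta}{U}\,\OPT(\mathcal{I})$, and combining with $\ROROmax(\mathcal{I}) \ge \OPT(\mathcal{I})/\omega$:
\begin{align*}
\ROAdvmax(\mathcal{I}) \ge \lambda \frac{L-2\beta}{U}\,\OPT(\mathcal{I}) + (1-\lambda)\frac{1}{\omega}\,\OPT(\mathcal{I}) = \left[ \frac{\epsilon + \frac{L-2\beta}{U}(\omega - 1 - \epsilon)}{(\omega-1)(1+\epsilon)} \right] \OPT(\mathcal{I}),
\end{align*}
after substituting $\lambda$ and simplifying; taking reciprocals gives the claimed robustness bound. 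The main obstacle I anticipate is purely algebraic bookkeeping — verifying that the two substitutions of $\lambda$ collapse cleanly to $1/(1+\epsilon)$ and to the stated robustness expression — together with being careful about the direction of every inequality, since the maximization setting flips the roles of upper and lower bounds relative to \ROAdvmin; there is no conceptual difficulty beyond that, as concavity plays for \OCSmax exactly the role convexity played for \OCSmin.
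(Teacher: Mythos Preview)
Your proposal is correct and follows essentially the same argument as the paper: feasibility via linearity, the key lower bound $\ROAdvmax(\mathcal{I}) \ge \lambda\,\ADV(\mathcal{I}) + (1-\lambda)\,\ROROmax(\mathcal{I})$ via concavity of $g_t$ plus the triangle inequality on the switching term, then plugging in $\ROROmax \ge \OPT/\omega$ for consistency and $\ADV \ge (L-2\beta)/U \cdot \OPT$ for robustness, with the same substitution of $\lambda$. The only slip is in the prose of the robustness paragraph, where you write ``bound $\ROROmax(\mathcal{I})$ from below'' but actually (and correctly) apply the $L-2\beta$ bound to $\ADV(\mathcal{I})$; the mathematics you write is right.
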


\section{\OCSmax Proofs}

We now prove several of the results described in \autoref{apx:ocsmax}.

In \autoref{apx:convexproofmax}, we show that the ramping-on and ramping-off optimization problems used in the \RORO instantiations for \OCSmax can be efficiently solved using iterative convex optimization techniques.

\smallskip

In \autoref{apx:comp-proofs-max}, we prove the competitive upper bound for \ROROmax (\autoref{thm:roromax}).  
In \autoref{sec:lb-proof-max}, we provide the proof of the information-theoretic lower bound for \OCSmax (\autoref{thm:lowerboundmax}), which subsequently proves that \ROROmax is optimal.  

\smallskip

In \autoref{sec:advice-proofs-max}, we prove the consistency and robustness bounds for \ROAdvmax (\autoref{thm:advice-consist-robust-max}).

\subsection{Efficiently solving the ramping-on and ramping-off problems for \ROROmax} \label{apx:convexproofmax}

Recall the ramping-on and ramping-off problems given by \eqref{eq:ramponmax} and \eqref{eq:rampoffmax}, respectively.  First, we note that because the primary difference between the two problems is the restriction on the decision space, if we are only interested in the actual online decision $x_t$, it is valid to merge these problems and consider a single optimization problem as follows:
\begin{align*}
    \textsc{RampOnRampOff}(g_t( \cdot ), w^{(t-1)}, x_{t-1}) &= \quad \ \ \argmax_{\mathclap{x \in [0, \min (1- w^{(t-1)}, d_t) ]}} \quad \ \ g_t( x ) - \beta \lvert x - x_{t-1} \rvert - \int_{w^{(t-1)}}^{w^{(t-1)} + x}\Phi(u) du.
\end{align*}

\noindent Let us define $f_t(x) : t \in [T]$ as the right-hand side of the combined ramping-on and ramping-off maximization problem defined above:
\begin{align}
    f_t(x) = g_t( x ) - \beta \lvert x - x_{t-1} \rvert - \int_{w^{(t-1)}}^{w^{(t-1)} + x}\Phi(u) du.
\end{align}

\begin{thm}
    Under the assumptions of \OCSmax, $f_t(x)$ is concave on the entire interval $x\in [0, 1]$ for any $t \in [T]$.
\end{thm}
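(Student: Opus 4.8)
The plan is to mirror the argument used for the analogous function in the minimization setting (\autoref{apx:convexproofmin}): decompose $f_t$ into three summands and show that each is concave, so that their sum is concave. Write
\begin{align*}
f_t(x) = g_t(x) + \bigl(-\beta \lvert x - x_{t-1}\rvert\bigr) + \bigl(-k(x)\bigr), \qquad k(x) \defeq \int_{w^{(t-1)}}^{w^{(t-1)}+x}\Phi(u)\,du.
\end{align*}
The term $g_t(x)$ is concave by the standing assumption on price functions in \OCSmax. The term $-\beta\lvert x - x_{t-1}\rvert$ is the negation of $\beta\lVert x - x_{t-1}\rVert_1$, which is convex in $x$ with $x_{t-1}$ fixed, so $-\beta\lvert x - x_{t-1}\rvert$ is concave.

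The only substantive step is handling $-k(x)$. First I would apply the fundamental theorem of calculus to obtain $k'(x) = \Phi(w^{(t-1)}+x)$ and hence $k''(x) = \Phi'(w^{(t-1)}+x)$, so it suffices to show that $\Phi$ is monotonically increasing on $[0,1]$, i.e.\ $\Phi' > 0$. From Definition~\ref{dfn:phi-max}, $\Phi'(w) = \omega\,(\omega L - L - 2\beta)\exp(\omega w)$, so the sign of $\Phi'$ equals the sign of the coefficient $\omega L - L - 2\beta$. I would argue this coefficient is strictly positive: under the \OCSmax assumptions ($\beta \in (0,\nicefrac{L}{2})$, with $U > L + 2\beta$ so that the Lambert $W$ argument in \eqref{eq:omega} is positive), we have $W(\cdot) > 0$, hence $\omega > 1 + \nicefrac{2\beta}{L}$, which is exactly $\omega L - L - 2\beta > 0$. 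Therefore $\Phi' > 0$, $k'' > 0$, $k$ is convex on $[0,1]$, and $-k$ is concave.

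Finally, I would invoke the fact that a finite sum of concave functions is concave to conclude that $f_t$ is concave on $[0,1]$ for every $t \in [T]$, which (as in \autoref{apx:convexproofmax}) justifies solving the merged ramping-on/ramping-off problem via an iterative concave maximization method. The main obstacle, to the extent there is one, is verifying that the exponential coefficient $\omega L - L - 2\beta$ is positive; this is the exact counterpart of the observation in the minimization case that $\phi$ is \emph{decreasing}, and it is where the bound $\beta < \nicefrac{L}{2}$ and the non-degeneracy of the instance (positivity of the $W$-argument) are used. Every other step is routine and parallels \autoref{apx:convexproofmin}.
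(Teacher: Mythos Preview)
Your proposal is correct and follows essentially the same decomposition as the paper: show that $g_t$ is concave by assumption, $-\beta\lvert x-x_{t-1}\rvert$ is the negation of a convex norm, and $-k(x)$ is concave because $\Phi$ is monotonically increasing, then sum. The only difference is that you actually \emph{verify} $\Phi'>0$ via the sign of $\omega L - L - 2\beta$ and the bound $\omega > 1 + \nicefrac{2\beta}{L}$, whereas the paper simply asserts the monotonicity of $\Phi$; this is a welcome elaboration, not a different argument.
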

\begin{proof}

We prove the above statement by contradiction.

By definition, the sum of two concave functions gives a concave function.  
First, note that the switching cost term can be equivalently defined in terms of the $\ell$1 norm as follows: $\beta \lVert x - x_{t-1} \rVert_1$.  By definition and by observing that $x_{t-1}$ is fixed, $\beta \lVert x - x_{t-1} \rVert_1$ is convex.  We have also assumed as part of the \OCSmax problem setting that each $g_t( x )$ is concave.  Since the negation of a convex function is concave, we have that $g_t(x) - \beta \lVert x - x_{t-1} \rVert_1$ must be concave.

We turn our attention to the term $- \int_{w^{(t-1)}}^{w^{(t-1)} + x}\Phi(u) du$.  Let $k(x) = \int_{w^{(t-1)}}^{w^{(t-1)} + x }\Phi(u) du$.  

By the fundamental theorem of calculus, $\nicefrac{d}{dx} \ k(x) = \Phi( z^{(t-1)} + x ) \cdot \nicefrac{d}{dx} \ x = \Phi( z^{(t-1)} + x )$

Let $p(x) = \Phi( z^{(t-1)} + x )$.  Then $\nicefrac{d^2}{d^2 x} \ k(x) = p'(x) \cdot \nicefrac{d}{dx} x  $.  Since $\Phi$ is monotonically increasing on the interval $[0,1]$, we know that $ p'(x) > 0$, and thus $\nicefrac{d}{dx} x \cdot p'(x)$ is positive.  This gives that $k(x)$ is convex in $x$.

Since the negation of a convex function is concave, this causes a contradiction, because the sum of two concave functions is a concave function.  Note that $\left( g_t(x) - \beta \lVert x - x_{t-1} \rVert_1\right)$ and $\left( - \int_{w^{(t-1)}}^{w^{(t-1)} + x}\Phi(u) du \right)$ are both concave.

Thus, $f_t(\cdot) = g_t( x ) - \beta \lvert x - x_{t-1} \rvert - \int_{w^{(t-1)}}^{w^{(t-1)} + x}\Phi(u) du$ is always concave under the assumptions of the problem setting.
\end{proof}

Since the right-hand side of the combined ramping-on and ramping-off maximization problem is concave, \eqref{eq:ramponmax} and \eqref{eq:rampoffmax} can be solved efficiently by casting the concave maximization as a convex minimization problem, and using an iterative convex optimization method.

\subsection{Competitive results for \texttt{RORO-max}}\label{apx:comp-proofs-max}

In the following, we prove \autoref{thm:roromax}, which states that the instantiation of \RORO for \OCSmax is $\omega$-competitive, where $\omega$ is as defined in \eqref{eq:omega}.

\begin{proof}[Proof of \autoref{thm:roromax}]
Let $\mathcal{I} \in \Omega$ denote any valid \OCSmax sequence, and
let $w^{(j)}$ denote the final utilization before the compulsory trading, which begins at time step $j \leq T$. Note that $w^{(t)} = \sum_{m\in[t]} x_t$ is non-decreasing over $t$.

\begin{lem}
\label{lem:opt-ub-max}
The offline optimum is upper bounded by $\emph{\OPT}(\mathcal{I}) \le \Phi(w^{(j)}) + \beta$.
\end{lem}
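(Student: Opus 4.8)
The plan is to mirror the proof of Lemma~\ref{lem:opt-lb}, its minimization analog, arguing by contradiction and exploiting the structure of the ramping-on problem in \eqref{eq:ramponmax}. As in that lemma I would first dispense with rate constraints, handling the case $d_t < 1$ separately via an analog of Lemma~\ref{lem:rate-const-min}, so that here the feasible set of the ramping-on problem at step $m$ is $[x_{m-1}, 1 - w^{(m-1)}]$. The first step is to bound $\OPT$ in terms of the single best marginal price in the sequence: since the switching penalty in \OCSmax is subtractive and nonnegative, and each $g_t$ is concave with $g_t(0) = 0$ (so its marginal value is largest at $x = 0$), selling the entire unit asset at the best available marginal price is an upper bound, i.e., $\OPT(\mathcal{I}) \le dg_m/dx$, where $g_m$ with $m \le j$ attains the best marginal price and $dg_m/dx$ denotes its derivative at $0$.

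Next I would assume for contradiction that $dg_m/dx = \OPT(\mathcal{I}) > \Phi(w^{(j)}) + \beta$ and contradict the definition of $w^{(j)}$ as the utilization reached just before the compulsory trade begins. There are two sub-steps. First, \emph{the online decision at step $m$ is governed by the ramping-on solution}: the ramping-off objective $g_m(x) - \beta(x_{m-1} - x) - \int_{w^{(m-1)}}^{w^{(m-1)}+x}\Phi(u)\,du$ has derivative $dg_m/dx + \beta - \Phi(w^{(m-1)}+x)$, which is strictly positive on $[0, x_{m-1}]$ because $\Phi$ is strictly increasing, $w^{(m-1)} + x \le w^{(m)} \le w^{(j)}$, and $\Phi(w^{(j)}) < dg_m/dx - \beta < dg_m/dx + \beta$ by hypothesis; hence $x_m^- = x_{m-1}$, and since $x = x_{m-1}$ is also feasible for the ramping-on problem this forces $r_m^+ \le r_m^-$, so $x_m = x_m^+$. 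Second, \emph{the ramping-on solution over-fills}: by \autoref{apx:convexproofmax} the ramping-on objective $g_m(x) - \beta(x - x_{m-1}) - \int_{w^{(m-1)}}^{w^{(m-1)}+x}\Phi(u)\,du$ is concave, with derivative $dg_m/dx - \beta - \Phi(w^{(m-1)}+x)$; by the contradiction hypothesis together with monotonicity of $\Phi$, this derivative is strictly positive for every $x$ with $w^{(m-1)} + x \le w^{(j)}$, so the maximizer satisfies $w^{(m-1)} + x_m^+ > w^{(j)}$, i.e., $w^{(m)} > w^{(j)}$. Since $w^{(\cdot)}$ is non-decreasing and $m \le j$, this is the desired contradiction, so $dg_m/dx \le \Phi(w^{(j)}) + \beta$, which with the first step yields $\OPT(\mathcal{I}) \le \Phi(w^{(j)}) + \beta$.

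I expect the only real obstacle to be careful bookkeeping rather than a conceptual difficulty. One must check that the ramping-on maximizer is not truncated below $w^{(j)}$ by the feasibility cap $1 - w^{(m-1)}$ --- it is not, since $w^{(j)} \le 1$ --- and one must justify the reduction to a best-marginal-price step with $m \le j$ exactly as in Lemma~\ref{lem:opt-lb}, with prices arriving after the compulsory trade has begun subsumed by the standing assumption that the compulsory-trade region is a negligible suffix of the horizon. Finally, the concavity supplied by \autoref{apx:convexproofmax} is precisely what makes the step ``derivative still positive at $w^{(j)}$ implies the maximizer lies past $w^{(j)}$'' valid, so I would cite it explicitly and also use strict monotonicity of $\Phi$ to obtain the strict inequality $w^{(m)} > w^{(j)}$ that closes the contradiction.
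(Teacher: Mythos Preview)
Your proposal is correct and follows essentially the same approach as the paper's proof: both argue by contradiction from the best marginal price, use monotonicity of $\Phi$ together with $m \le j$ to force $x_m^- = x_{m-1}$ (so the online decision is governed by ramping-on), and then show the ramping-on maximizer pushes utilization past $w^{(j)}$, contradicting its definition. Your version is slightly more explicit about the derivative computations and the role of concavity from \autoref{apx:convexproofmax}, but the skeleton is identical to the paper's.
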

\paragraph{Proof of Lemma~\ref{lem:opt-ub-max}.} We prove this lemma by contradiction. Note that the offline optimum is to trade all items at the best price function (ignoring additional switching costs) over the sequence $\{g_t (\cdot) \}_{t\in[T]}$.

{\color{blue}
Suppose this maximum price function is at an arbitrary step $m$ ($m\in[T], \ m \leq j$), denoted by $g_m (\cdot)$.  Since cost functions are concave and additionally satisfy the conditions $g_m(0) = 0$ and $g_m(x) \ge 0 \ \forall x \in [0,1]$, the derivative of $g_m$ at $x = 0$ is an upper bound on the best marginal return (i.e. per unit sale) that the optimal solution can obtain.  We henceforth denote this derivative by $\frac{d g_m}{d x}$.  Assume for the sake of contradiction that $\frac{d g_m}{d x} = \OPT(\mathcal{I}) > \Phi(w^{(j)}) + \beta$.  

Next, we consider $\frac{d g_m}{d x} + \beta > \Phi(w^{(j)}) + 2\beta$.  Since $\Phi(z)$ is strictly increasing on $z \in [0,1]$, it follows that $\Phi(w^{(j)}) + 2\beta \ge \Phi(w^{(m)})$, as $m \leq j$.  By solving the ramping-off problem, we have $x_m^- = x_{m-1}$. Thus, the online solution of step $m$ is dominated by $x_m^+$, i.e., $x_m = x_m^+$.

Furthermore, since $\frac{d g_m}{d x} > \Phi(w^{(j)}) + \beta$ and $\Phi(w^{(j)}) + \beta > \Phi(w^{(m)}) + \beta$ as previously, by solving the ramping-on problem, we must have that the resulting decision satisfies $x_m^+ > w^{(j)} - w^{(m-1)}$.  This implies that $w^{(m)} > w^{(j)}$, which, given that $w^{(t)}$ is non-decreasing in $t \in [T]$, contradicts with the assumption that the final utilization before the compulsory trade is $w^{(j)}$. 

Thus, we conclude that $\OPT(\mathcal{I}) \leq \Phi(w^{(j)}) + \beta$.
}

\begin{lem}
\label{lem:alg-lb-max}
The return of $\ROROmax(\mathcal{I})$ is lower bounded by
\begin{align}
    \emph{\ROROmax}(\mathcal{I}) \ge \int_{0}^{w^{(j)}} \Phi(u) du - \beta w^{(j)} + (1- w^{(j)}) L.
\end{align}
\end{lem}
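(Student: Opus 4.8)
The plan is to mirror the proof of Lemma~\ref{lem:alg-ub} (the minimization analogue) with all inequalities reversed, so as to obtain a \emph{lower} bound on the profit. Recall that at each step $t$, \ROROmax selects whichever of $x_t^+$ and $x_t^-$ has the smaller pseudo-cost, equivalently the larger \emph{pseudo-profit} $g_t(x_t) - \beta\lvert x_t - x_{t-1}\rvert - \int_{w^{(t-1)}}^{w^{(t-1)}+x_t}\Phi(u)\,du$. The first step is a per-step lower bound: for every $t \in [j]$, the point $x=0$ is feasible for the ramping-off problem \eqref{eq:rampoffmax}, and its objective there equals $g_t(0) - \beta x_{t-1} - \int_{w^{(t-1)}}^{w^{(t-1)}}\Phi(u)\,du = -\beta x_{t-1}$, using $g_t(0)=0$. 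Since $x_t^-$ maximizes the ramping-off objective, its pseudo-profit is at least $-\beta x_{t-1}$, hence so is that of the chosen decision $x_t$:
\[
g_t(x_t) - \beta\lvert x_t - x_{t-1}\rvert - \int_{w^{(t-1)}}^{w^{(t-1)}+x_t}\Phi(u)\,du \ \ge\ -\beta x_{t-1}, \qquad \forall t \in [j].
\]

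Next I would sum this over $t\in[j]$. The threshold integrals telescope to $\int_0^{w^{(j)}}\Phi(u)\,du$ (valid since the rate/deadline constraints in \eqref{eq:ramponmax}--\eqref{eq:rampoffmax} force $w^{(t-1)} + x_t \le 1$, so every integrand argument lies in $[0,1]$), while $\sum_{t\in[j]}\beta x_{t-1} = \beta w^{(j-1)} \le \beta w^{(j)}$ using $x_0=0$ and the monotonicity of $w^{(t)}$. This gives $\sum_{t\in[j]}\bigl[g_t(x_t) - \beta\lvert x_t - x_{t-1}\rvert\bigr] \ge \int_0^{w^{(j)}}\Phi(u)\,du - \beta w^{(j)}$. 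Finally, the compulsory trade sells the remaining $1-w^{(j)}$ units and collects at least $(1-w^{(j)})L$ in selling profit, since every marginal price is lower bounded by $L$; exactly as in Lemma~\ref{lem:alg-ub}, the switching incurred during the (signalled) compulsory trade and the terminal turn-off is absorbed into this term. Adding this to the previous bound yields $\ROROmax(\mathcal{I}) \ge \int_0^{w^{(j)}}\Phi(u)\,du - \beta w^{(j)} + (1-w^{(j)})L$, as claimed.

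The main obstacle is bookkeeping around signs: because the pseudo-cost in \ROROmax carries a flipped sign relative to \ROROmin, one must check carefully that ``smaller pseudo-cost'' really corresponds to ``larger pseudo-profit,'' and that comparing the ramping-off optimum against the boundary point $x=0$ yields a \emph{lower} (not upper) bound on the achieved pseudo-profit; a secondary subtlety is keeping the compulsory-trade accounting consistent with the minimization proof. Once this lemma is established it combines with the upper bound $\OPT(\mathcal{I}) \le \Phi(w^{(j)}) + \beta$ of Lemma~\ref{lem:opt-ub-max}, and one verifies the identity $\omega\bigl(\int_0^{w}\Phi(u)\,du - \beta w + (1-w)L\bigr) = \Phi(w) + \beta$ for all $w\in[0,1]$ directly from the definitions \eqref{eq:phi-max} and \eqref{eq:omega} — a routine computation, analogous to the one in the proof of Lemma~\ref{lem:alg-ub} — to conclude that $\OPT(\mathcal{I})/\ROROmax(\mathcal{I}) \le \omega$.
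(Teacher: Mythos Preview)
Your proposal is correct and follows essentially the same approach as the paper's proof: both compare the ramping-off optimum against the feasible point $x=0$ to obtain the per-step pseudo-profit lower bound $-\beta x_{t-1}$, then sum over $t\in[j]$, telescope the threshold integrals, and add the compulsory-trade contribution $(1-w^{(j)})L$. Your write-up is in fact slightly cleaner than the paper's (which has some index slippage between $[T]$ and $[j]$ and writes $r_n$ for $r_t$), but the underlying argument is identical.
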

\paragraph{Proof of Lemma~\ref{lem:alg-lb-max}.}
By solving the ramping-off problem for any arbitrary time step $t \in [T]$, we can observe that $g_t( x_t^- ) + \beta x_t^- \ge \int_{w^{(t-1)}}^{w^{(t-1)}+x_t^-} \Phi(u) du, \forall t\in[T]$. Therefore, we have the following inequality
\begin{align}
   \max\{r_n^+, r_n^-\} \ge r_n^- \ge - \beta x_{t-1}, \forall t\in[T].
\end{align}

Thus, we have
\begin{align}
   -\beta w^{(j)}\le -\beta w^{(t-1)} = \sum_{t\in[T]} (-\beta x_{t-1}) &\le \sum_{t\in[T]}\max\{r_n^+, r_n^-\}\\
    &= \sum_{t\in[T]} \left[ g_t( x_t ) - \beta |x_t - x_{t-1}| - \int_{w^{(t-1)}}^{w^{(t-1)} + x_t}\Phi(u) du \right]\\
    &=  \sum_{t\in[T]} \left[ g_t( x_t ) - \beta |x_t - x_{t-1}|\right] - \int_{0}^{w^{(j)}}\Phi(u) du\\
    &= \ROROmax(\mathcal{I}) - (1-w^{(j)})L - \int_{0}^{w^{(j)}}\Phi(u) du.
\end{align}
 
Combining Lemma~\ref{lem:opt-ub-max} and Lemma~\ref{lem:alg-lb-max} gives
\begin{align}
    \texttt{CR} \le \frac{\OPT(\mathcal{I})}{\ROROmax(\mathcal{I})} \le \frac{\Phi(w^{(j)}) + \beta}{\int_{0}^{w^{(j)}} \Phi(u) du - \beta w^{(j)} + (1- w^{(j)}) L} \leq \omega, \label{eq:cr-omega}
\end{align}
where the last inequality holds since for any $w\in[0,1]$ 
\begin{align}
  \int_{0}^{w} \Phi(u) du - \beta w + (1- w) L &=  \int_{0}^{w}\left[L + \beta + (\omega L - L - 2 \beta) \exp(\omega w)\right] -\beta w + (1 - w)L\\
  & = (L+\beta)w + \frac{1}{\omega}(\omega L - L - 2 \beta) [\exp(\omega w) - 1] -\beta w + (1 - w)L\\
  &= \frac{1}{\omega}\left[L + 2 \beta + (\omega L - L - 2 \beta)\exp(\omega w)\right]\\
  &= \frac{1}{\omega}[\Phi(w) + \beta]. 
\end{align}

We note that the rate constraints $\{ d_t \}_{t \in [T]}$ surprisingly do not appear in this worst-case analysis.  For completeness, we state and prove Lemma~\ref{lem:rate-const-max}.

\begin{lem} \label{lem:rate-const-max}
    If $d_t < 1 \ \forall t \in [T]$, the competitive ratio of \ROROmax is still upper bounded by $\omega$.
\end{lem}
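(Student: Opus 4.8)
The plan closely mirrors the argument for Lemma~\ref{lem:rate-const-min}, with the monotonicity of the threshold reversed. The only way a rate constraint $d_t < 1$ can change \ROROmax's behaviour relative to the unconstrained setting is by forbidding a decision $x_t > d_t$ at some step $t$; since the threshold $\Phi(\cdot)$ is \emph{increasing} on $[0,1]$, the ramping-on problem \eqref{eq:ramponmax} chooses $x_t > d_t$ only when the marginal return of $g_t$ at $d_t$ exceeds $\Phi(w^{(t)}) + \beta$, i.e.\ when $g_t$ is a sufficiently \emph{good} price function. I would argue that any such binding step in fact leaves \ROROmax with a competitive ratio no worse than $\omega$, so the worst case over all instances is unaffected.

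First I would set up a baseline, exactly as in \eqref{eq:rate-const-alpha}. Fix an instance $\mathcal{I}$ and a step $t$, put $w^{(t)} = w^{(t-1)} + d_t$, and suppose the price at step $t$ is $g_t(x) = \Phi(w^{(t)}) x$ — so that \ROROmax sets $x_t = d_t$ even without the rate constraint — and that no further price is accepted before the compulsory trade. Combining Lemma~\ref{lem:opt-ub-max} ($\OPT \le \Phi(w^{(t)}) + \beta$), the ramping-off bound underlying Lemma~\ref{lem:alg-lb-max}, and the closed-form identity established just after \eqref{eq:cr-omega} (namely $\int_0^w \Phi(u)\,du - \beta w + (1-w)L = \frac{1}{\omega}[\Phi(w)+\beta]$), the ratio $\OPT(\mathcal{I})/\ROROmax(\mathcal{I})$ on this baseline is at most $\omega$. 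I would then perturb the step-$t$ price to a strictly better concave function $g_t'$ with $g_t'(d_t) > (\Phi(w^{(t)}) + \beta)\, d_t$, i.e.\ one for which the unconstrained ramping-on solution would satisfy $x_t > d_t$ but the rate constraint forces $x_t = d_t$. Both \OPT and \ROROmax now realize profit $g_t'(d_t)$ on their first $d_t$ units: \ROROmax's profit is lower bounded by $\int_0^{w^{(t-1)}} \Phi(u)\,du + g_t'(d_t) - \beta w^{(t)} + (1-w^{(t)})L$ (ramping-off analysis plus the compulsory trade at $L$), while \OPT's profit — since \OPT faces the same rate constraint — is upper bounded by $[\Phi(w^{(t)}) + \beta](1 - d_t) + g_t'(d_t)$. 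Because $\Phi$ is increasing, $\int_{w^{(t-1)}}^{w^{(t)}} \Phi(u)\,du \le \Phi(w^{(t)})\, d_t$, so passing from the baseline to the perturbed instance increases \ROROmax's profit by at least as much as it increases \OPT's; since the baseline ratio is $\le \omega$ and $\omega \ge 1$, the perturbed ratio remains $\le \omega$. Substituting the two bounds into the identity behind \eqref{eq:cr-omega} and simplifying reduces the required inequality to $g_t'(d_t) \ge \Phi(w^{(t)})\, d_t$, which holds by the binding hypothesis. Finally, combining this with Lemmas~\ref{lem:opt-ub-max} and~\ref{lem:alg-lb-max} for the remaining (non-binding) steps completes the proof.

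The main obstacle is the bookkeeping in the perturbation step: one must verify that introducing the rate constraint does not enlarge \OPT's achievable profit relative to \ROROmax's, track the $\beta$-terms consistently across the baseline and perturbed instances (here $\beta$ enters with the opposite sign from the minimization case), and confirm that the inequality obtained after substituting \eqref{eq:cr-omega} genuinely collapses to the elementary bound $g_t'(d_t) > \Phi(w^{(t)})\, d_t$ implied by ``$g_t'$ would trigger $x_t > d_t$''. Getting the direction of the monotonicity comparison $\int_{w^{(t-1)}}^{w^{(t)}} \Phi \le \Phi(w^{(t)})\, d_t$ right — it is the reverse of the inequality used in Lemma~\ref{lem:rate-const-min} — is the crux.
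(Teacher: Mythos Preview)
Your proposal is correct and follows essentially the same route as the paper's own proof: both set up the baseline with $g_t(x)=\Phi(w^{(t)})\,x$ so that $x_t=d_t$ is optimal even unconstrained, then perturb to a better price function $g_t'$ and compare the additive changes in $\OPT$ and $\ROROmax$, using the monotonicity of $\Phi$ (now increasing) to conclude that $\ROROmax$'s profit increases by at least as much as $\OPT$'s, so the ratio remains $\le \omega$. The only cosmetic difference is that the paper writes the monotonicity comparison with the extra $+\beta$ slack, i.e.\ $[\Phi(w^{(t)})+\beta]\,d_t > \int_{w^{(t-1)}}^{w^{(t)}}\Phi(u)\,du$, whereas you state the tighter $\int_{w^{(t-1)}}^{w^{(t)}}\Phi(u)\,du \le \Phi(w^{(t)})\,d_t$; either suffices.
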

\begin{proof}[Proof of Lemma~\ref{lem:rate-const-max}]
Suppose that the presence of a rate constraint $d_t$ causes \ROROmax to make a decision which violates $\omega$-competitiveness.  At time $t$, the only difference between the setting where $x_t \in [0, 1]$ and the setting with rate constraints $< 1$, where $x_t \in [0, d_t]$, is that $x_t$ cannot be~$> d_t$.  

This implies that a challenging situation for \ROROmax under a rate constraint is the case where \ROROmax would otherwise accept $> d_t$ of a good price function, but it cannot due to the rate constraint.

We can now show that such a situation implies that \ROROmax achieves a competitive ratio which is strictly better than $\omega$ (in the maximization setting).

From \eqref{eq:cr-omega}, we know that the following holds for any value of $w \in [0, 1]$:
\begin{align*}
\int^w_0 \Phi(u) du - \beta w + (1-w) L \geq \frac{1}{\omega} [ \Phi(w) + \beta ].
\end{align*}

For an arbitrary instance $\mathcal{I} \in \Omega$ and an arbitrary time step $t$, let $w^{(t)} = w^{(t-1)} + d_t$, implying that $x_t = d_t$.  For the sake of comparison, we first consider this time step with a price function $g_t(\cdot)$, such that $g_t(x_t) = \Phi(w^{(t)}) \cdot x_t$, implying that without the presence of a rate constraint, \ROROmax would set $x_t = d_t$.  If no more price functions are accepted by \ROROmax after time step $t$ (excepting the compulsory trade), we have the following:
\begin{align}\label{eq:rate-const-omega}
\frac{\OPT(\mathcal{I})}{\ROROmax(\mathcal{I})} \leq \frac{\Phi(w^{(t)}) + \beta}{\int^{w^{(t)}}_0 \Phi(u) du - \beta w^{(t)} + (1-w^{(t)}) L} = \omega.
\end{align}

Now, consider the exact same setting as above, except with a new price function $g_t'(\cdot)$, such that $g_t'(x_t) > \Phi(w^{(t)}) \cdot x_t$.  This implies that without the presence of a rate constraint, \ROROmax would set $x_t > d_t$.  In other words, $g_t'( \cdot)$ is a good price function which $\ROROmax$ cannot accept more of due to the rate constraint.  

Since $\OPT$ is subject to the same rate constraint $d_t$, we know that $\OPT(\mathcal{I})$ is upper bounded by $[ \Phi(w^{(t)}) + \beta ] (1- d_t) + g_t'(d_t)$ -- the rest of the optimal solution is bounded by the final threshold value, since we assume that no more prices are accepted by \ROROmax after time step $t$.

The profit of \ROROmax is lower bounded by $\ROROmax(\mathcal{I}) \ge \int^{w^{(t)}}_0 \Phi(u) du - \int^{w^{(t)}}_{w^{(t-1)}} \Phi(u) du + g_t'(d_t) - \beta w^{(t)} + (1-w^{(t)}) L$.

Observe that compared to the previous setting, the $\OPT$ and $\ROROmax$ solutions have both increased -- $\OPT(\mathcal{I})$ has increased by a additive factor of $g_t'(d_t) - [ \Phi(w^{(t)}) + \beta ] d_t$, while $\ROROmax(\mathcal{I})$ has increased by a additive factor of $g_t'(d_t) - \int^{w^{(t)}}_{w^{(t-1)}} \Phi(u) du$.

Since $\Phi$ is monotonically increasing on $w \in [0,1]$, we have that by definition, $[ \Phi(w^{(t)}) + \beta ] d_t > \int^{w^{(t)}}_{w^{(t-1)}} \Phi(u) du$.  Thus, the solution obtained by $\ROROmax$ has increased \textit{more} than the solution obtained by $\OPT$.  This then implies the following:
\begin{align*}
\frac{\OPT(\mathcal{I})}{\ROROmax(\mathcal{I})} \leq \frac{[ \Phi(w^{(t)}) + \beta ] (1- d_t) + g_t'(d_t)}{\int^{w^{(t)}}_0 \Phi(u) du - \int^{w^{(t)}}_{w^{(t-1)}} \Phi(u) du + g_t'(d_t) - \beta w^{(t)} + (1-w^{(t)}) L} <  \omega,
\end{align*}
where the final inequality follows from \eqref{eq:rate-const-omega}.
\end{proof}

At a high-level, this result shows that even if there is a rate constraint which prevents \ROROmax from accepting a good price function, the worst-case competitive ratio does not change.  Combining Lemmas~\ref{lem:opt-ub-max}, \ref{lem:alg-lb-max}, and \ref{lem:rate-const-max} completes the proof.
\end{proof}

\bigskip

\subsection{Lower bound for \OCSmax} \label{sec:lb-proof-max}
In the following, we prove Theorem~\ref{thm:lowerboundmax}, which states that for any deterministic online algorithm solving \OCSmax, the optimal competitive ratio is $\omega$, where $\omega$ is as defined in \eqref{eq:omega}.

As previously, we define a family of special instances and then show that the competitive ratio for any deterministic algorithm is lower bounded under these instances. 
Prior work has shown that difficult instances for online search problems with a maximization objective occur when inputs arrive at the algorithm in an increasing order of cost~\cite{Lorenz:08, Lechowicz:23, ElYaniv:01, SunZeynali:20}.  For $\OCSmax$, we additionally consider how an adaptive adversary can essentially force an algorithm to incur a large switching cost in the worst-case.
We now formalize such a family of instances $\{ \mathcal{I}_x \}_{x \in [L, U]}$, where $\mathcal{I}_x$ is called an $x$-increasing instance.

\begin{figure}[ht]
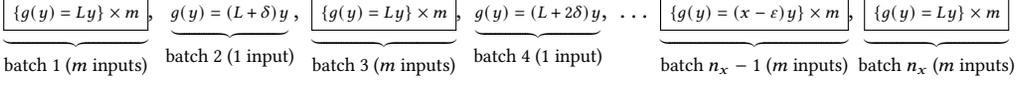

    \begin{align*}
    {\tiny 
        \underbrace{\cItem{L}}_{\text{batch 1 (} m \text{ inputs)}}, \; \underbrace{\aItem{(L + \delta)}}_{\text{batch 2 (} 1 \text{ input)}}, \; \underbrace{\cItem{L}}_{\text{batch 3 (} m \text{ inputs)}}, \; \underbrace{\aItem{(L + 2\delta)}}_{\text{batch 4 (} 1 \text{ input)}}, \; \dots \; \underbrace{\cItem{(x - \varepsilon)}}_{\text{batch $n_x-1$ (} m \text{ inputs)}}, \underbrace{\cItem{L}}_{\text{batch $n_x$ (} m \text{ inputs)}}}
    \end{align*}
    \vspace{-1em}
    \caption{$\mathcal{I}_x$ consists of $n_x$ batches of price functions, where the alternating single functions are continuously increasing from $L$ up to $x$.}\label{fig:xinstance-max}
\end{figure}

\begin{dfn}[$x$-increasing instance for \OCSmax] Let $n, m \in \mathbb{N}$ be sufficiently large, and $\delta := \nicefrac{(U-L)}{n}$. For $x \in [L, U]$, $\mathcal{I}_x \in \Omega$ is an $x$-instance if it consists of $n_x := 2 \cdot \lceil \nicefrac{(x - L)}{\delta} \rceil + 1$ alternating batches of linear price functions. The $i$-th batch ($i\in [n_x-2]$) contains $m$ price functions with coefficient $L$ if $i$ is odd, and $1$ price function with coefficient $L+(\lceil i / 2 \rceil)\delta$ if $i$ is even. The last two batches of items consist of $m$ price functions with coefficient $x - \varepsilon$, followed by $m$ price functions with coefficient $L$.
\end{dfn} \label{dfn:xinstance-max}

Note that $\mathcal{I}_{L}$ is simply a stream of $m$ price functions with coefficient $L$. See Fig. \ref{fig:xinstance-max} for an illustration of an $x$-increasing instance.  Since a competitive algorithm $\ALG$ should not accept the worst-case price function with coefficient $L$ unless it is forced to, this special $x$-increasing instance can be equivalently interpreted as an \textit{adaptive adversary}, which gives $\ALG$ $m$ worst-case inputs $L$ whenever any price function is accepted.  As $n \to \infty$, the alternating single price functions in an $x$-increasing sequence continuously increase up to $x$, and each of these ``good price functions'' is interrupted by a section of worst-case $L$ price functions.  Note that the last few price functions in an $x$-increasing instance are always $L$.

\begin{proof}[Proof of Theorem~\ref{thm:lowerboundmax}]
Let $h(x)$ denote a \textit{conversion function} $[L,U] \rightarrow [0,1]$, which fully describes the actions of a deterministic $\ALG$ for \PrObmax on an $x$-increasing instance $\mathcal{I}_x$.  Note that for large $n$, processing the instance $\mathcal{I}_{x+\delta}$ is equivalent to first processing $\mathcal{I}_x$ (besides the last two batches), and then processing batches with prices $x + \delta$ and $U$.  Since $\ALG$ is deterministic and the conversion is unidirectional (irrevocable), we must have that $h(x + \delta) \geq h(x)$, i.e. $h(x)$ is non-decreasing in $[L, U]$.  Intuitively, the entire capacity should be satisfied if the maximum possible price is observed, i.e $h(U) = 1$.

For instance $\mathcal{I}_x$, the optimal offline solution is $\OPT(\mathcal{I}_x) = x - \frac{2}{m} \beta$.  Note that as $m$ grows large, $\OPT(\mathcal{I}_x) \rightarrow x$.

Due to the adaptive nature of each $x$-instance, any deterministic $\ALG$ incurs a switching cost proportional to $h(x)$, which gives the amount of utilization used by $\ALG$ before the end of the sequence on instance $\mathcal{I}_x$.  

Whenever $\ALG$ accepts some price $L+(\lceil i / 2 \rceil)\delta$, the adversary presents prices $L$ starting in the next time step.  Any $\ALG$ which does not switch away immediately obtains a competitive ratio strictly worse than an algorithm which does switch away (if an algorithm accepts $c$ fraction of a good price, the switching cost it will pay is $2\beta c$.  An algorithm may continue accepting $c$ fraction of prices $L$ in the subsequent time steps, but a sequence exists where this decision will take up too much utilization to recover when better prices are presented later.  In the extreme case, if an algorithm continues accepting $c$ fraction of these $L$ prices, it might fill its utilization and then $\OPT$ can obtain an arbitrarily good profit)

Since accepting any price by a factor of $c$ incurs a switching cost of $2 \beta c$, the switching cost (before the final compulsory trade) paid by $\ALG$ on instance $\mathcal{I}_x$ is $2\beta h(x)$.  We assume that $\ALG$ is notified of the compulsory trade, and does not incur a significant switching cost during the final batch.

Then the total profit of an $\omega^\star$-competitive online algorithm $\ALG$ on instance $\mathcal{I}_x$ is $\ALG(\mathcal{I}_x) = h(\omega^\star L) \omega^\star L + \int^x_{\omega^\star L} u d h(u) - 2\beta h(x) + (1 - h(x))L$, where $udh(u)$ is the profit of selling $dh(u)$ utilization at price $u$, the last term is from the compulsory conversion, and the second to last term is the switching cost incurred by $\ALG$.  Note that any deterministic $\ALG$ which makes conversions when the price is larger than $\omega^\star L$ can be strictly improved by restricting conversions to prices~$\geq \omega^\star L$.

For any $\omega^\star$-competitive online algorithm, the corresponding conversion function $h(\cdot)$ must satisfy $\ALG(\mathcal{I}_x) \geq \nicefrac{\OPT(\mathcal{I}_x)}{\omega^\star} = \nicefrac{x}{\omega^\star}, \forall x \in [L, U]$.  This gives a necessary condition which the conversion function must satisfy as follows:
\[
\ALG(\mathcal{I}_x) = h(\omega^\star L) \omega^\star L + \int^x_{\omega^\star L} u d h(u) - 2\beta h(x) + (1 - h(x))L \geq \nicefrac{x}{\omega^\star}, \quad \forall x \in [L, U].
\]

By integral by parts, the above implies that the conversion function must satisfy $h(x) \geq \frac{\nicefrac{x}{\omega^\star} - L}{x - 2\beta - L} + \frac{1}{x - 2\beta - L} \int_{\omega^\star L}^x h(u) du$.  By Gr\"{o}nwall's Inequality \cite[Theorem 1, p. 356]{Mitrinovic:91}, we have that
\begin{align*}
h(x) & \geq \frac{\nicefrac{x}{\omega^\star} - L}{x - 2\beta - L} + \frac{1}{x - 2\beta - L} \int_{\omega^\star L}^x \frac{u/\omega^\star - L}{u - 2\beta - L} \cdot \exp\left( \int_u^x \frac{1}{r - 2\beta - L} dr \right) du \\
& \geq \frac{\nicefrac{x}{\omega^\star} - L}{x - 2\beta - L} + \int_{\omega^\star L}^x \frac{u/\omega^\star - L}{(u - 2\beta - L)^2} du \\
& \geq \frac{\nicefrac{x}{\omega^\star} - L}{x - 2\beta - L} + \left[ \frac{L \omega^\star - 2 \beta - L}{\omega^\star u - 2\beta \omega^\star - L \omega^\star } + \frac{1}{\omega^\star} \ln \left( u - 2\beta -L \right) \right]_{\omega^\star L}^x \\
& \geq \frac{1}{\omega^\star} \ln \left( x - 2\beta -L \right) - \frac{1}{\omega^\star} \ln \left( \omega^\star L - 2\beta -L \right), \quad \forall x \in [L, U].
\end{align*}

By the problem's definition, $h(U) = 1$ -- we can combine this with the above constraint to give the following condition for an $\omega^\star$-competitive online algorithm:
\[
\frac{1}{\omega^\star} \ln \left( U - 2\beta -L \right) - \frac{1}{\omega^\star} \ln \left( \omega^\star L - 2\beta -L \right) \leq h(U) = 1.
\]
The optimal $\omega^\star$ is obtained when the above inequality is binding, so solving for the value of $\omega^\star$ which solves $\nicefrac{1}{\omega^\star} \ln \left( U - 2\beta -L \right) - \nicefrac{1}{\omega^\star} \ln \left( \omega^\star L - 2\beta -L \right) = 1$ yields that the best competitive ratio for any $\ALG$ solving \OCSmax is $\omega^\star \geq  W \left( \frac{ \nicefrac{U}{L} - 1 - \nicefrac{2 \beta}{L} }{e^{1+ \nicefrac{2\beta}{L}}} \right) + \frac{2\beta}{L} + 1 $.
\end{proof}

\bigskip

\subsection{\texttt{RO-Advice} consistency and robustness for \OCSmax} \label{sec:advice-proofs-max}

In the following, we prove \autoref{thm:advice-consist-robust-max}, which states that the instantiation of \ROAdv for \OCSmax is $(1+\epsilon)$-consistent and $\left( \frac{(\omega - 1) (1 + \epsilon) }{\epsilon + \frac{(L - 2\beta)}{U} (\omega - 1 - \epsilon)} \right)$-robust, where $\epsilon$ and $\omega$ are defined in Definition~\ref{dfn:roadvmax} and \eqref{eq:omega}, respectively.  %

\begin{proof}[Proof of Theorem \ref{thm:advice-consist-robust-max}]

We start by noting that the online solution given by \texttt{RO-Advice-max} is always feasible (under the assumption that $\sum_{t=1}^{T} \hat{x}_t = 1$), since 
\[
\sum_{t=1}^{T} x_t = \sum_{t=1}^{T} \left[ \lambda \hat{x}_t + (1 - \lambda) \tilde{x}_t \right] = \lambda + (1 - \lambda) = 1.
\]

Let $\mathcal{I} \in \Omega$ be an arbitrary valid \OCSmax sequence.
We denote the \textit{selling} and \textit{switching} costs of the robust advice by $\ROROmax_{\text{sold}}$ and $\ROROmax_{\text{switch}}$, respectively.  Likewise, the purchasing and switching costs of the black box advice are denoted by $\ADV_{\text{sold}}$ and $\ADV_{\text{switch}}$.

The cost of \texttt{RO-Advice-max} is bounded by
\begin{align*}
    \texttt{RO-Advice-max}(\mathcal{I}) &= \sum_{t=1}^{T} g_t(x_t) - \sum_{t=1}^{T+1} \beta \lvert x_t - x_{t-1} \rvert \\
    &= \sum_{t=1}^{T} g_t \left( \lambda \hat{x}_t + (1 - \lambda) \tilde{x}_t \right) - \sum_{t=1}^{T+1} \beta \lvert \lambda \hat{x}_t + (1 - \lambda) \tilde{x}_t - \lambda \hat{x}_{t-1} - (1 - \lambda) \tilde{x}_{t-1}  \rvert \\
    &\geq \lambda \sum_{t=1}^{T} g_t(\hat{x}_t) + (1 - \lambda) \sum_{t=1}^{T} g_t( \tilde{x}_t ) \\
    & \quad \quad \quad \quad -\sum_{t=1}^{T+1} \beta \lvert \lambda \hat{x}_t - \lambda \hat{x}_{t-1} \rvert - \sum_{t=1}^{T+1} \beta \lvert (1 - \lambda) \tilde{x}_t - (1 - \lambda) \tilde{x}_{t-1} \rvert  \\
    &\geq \lambda \ADV_{\text{sold}}(\mathcal{I}) + (1 - \lambda) \ROROmax_{\text{sold}}(\mathcal{I})\\
    & \quad \quad \quad \quad - \lambda \sum_{t=1}^{T+1} \beta \lvert \hat{x}_t - \hat{x}_{t-1} \rvert - (1 - \lambda) \sum_{t=1}^{T+1} \beta \lvert \tilde{x}_t - \tilde{x}_{t-1} \rvert  \\
    &\geq \lambda \ADV_{\text{sold}}(\mathcal{I}) + (1 - \lambda) \ROROmax_{\text{sold}}(\mathcal{I})\\
    & \quad \quad \quad \quad - \lambda \ADV_{\text{switch}}(\mathcal{I}) - (1 - \lambda) \ROROmax_{\text{switch}}(\mathcal{I})  \\
    &\geq \lambda \ADV(\mathcal{I}) + (1 - \lambda) \ROROmax(\mathcal{I})
\end{align*}

Since $\ROROmax \geq \nicefrac{\OPT}{ \omega} \geq \nicefrac{\ADV}{ \omega}$, we have
\begin{align}
\texttt{RO-Advice-max}(\mathcal{I}) & \geq \lambda \ADV(\mathcal{I}) + (1 - \lambda) \left( \nicefrac{\ADV(\mathcal{I})}{\omega} \right)\\
\texttt{RO-Advice-max}(\mathcal{I}) & \geq \left( \lambda + \frac{1 - \lambda}{\omega} \right) \cdot \ADV(\mathcal{I})\\ 
\frac{1}{\lambda + \frac{1 - \lambda}{\omega}} \cdot \texttt{RO-Advice-max}(\mathcal{I}) & \geq \ADV(\mathcal{I})\\ 
(1 + \epsilon) \cdot \texttt{RO-Advice-max}(\mathcal{I}) & \geq \ADV(\mathcal{I}).
\end{align}

Since $\ADV \geq L - 2\beta \geq (\nicefrac{(L - 2\beta)}{U}) \OPT$, we have
\begin{align}
\texttt{RO-Advice-max}(\mathcal{I}) & \geq \lambda \frac{(L - 2\beta) \OPT(\mathcal{I})}{U} + (1 - \lambda) (\nicefrac{\OPT(\mathcal{I})}{\omega})\\
\texttt{RO-Advice-max}(\mathcal{I}) & \geq \left[ \frac{\lambda (L - 2\beta)}{U} + \frac{1 - \lambda}{\omega} \right] \cdot \OPT(\mathcal{I})\\ 
\left[ \frac{1}{\frac{\lambda (L - 2\beta)}{U} + \frac{1 - \lambda}{\omega}} \right] \cdot \texttt{RO-Advice-max} &\geq \OPT(\mathcal{I})\\
\left[ \frac{U \omega}{U - U \lambda + (L - 2\beta) \lambda \omega} \right] \cdot \texttt{RO-Advice-max} &\geq \OPT(\mathcal{I})\\
\left[ \frac{(\omega - 1) (1 + \epsilon) }{\epsilon + (\frac{L}{U} - \frac{2\beta}{U}) (\omega - 1 - \epsilon)} \right] \cdot \texttt{RO-Advice-max} &\geq \OPT(\mathcal{I}).
\end{align}

Note that as $\epsilon \rightarrow \omega - 1$, both the consistency and robustness bounds approach $\omega$, which is the optimal competitive bound shown for the setting without advice.  When $\epsilon \to 0$, the robustness bound degrades to $\nicefrac{U}{(L-2\beta)}$, but the consistency bound implies that perfect advice allows \ROAdvmax to obtain the optimal solution.
\end{proof}

}

\end{document}